\def\format{plain}
\def\plain{plain}
\ifx \format \plain
  \documentclass[11pt]{article}
  \usepackage[margin=1in,footskip=0.25in]{geometry}
\else
  \documentclass{llncs}
\fi
\pagestyle{plain}

\usepackage{amsmath}
\usepackage[utf8]{inputenc}
\usepackage{physics}
\usepackage{xcolor}
\usepackage{hyperref}
\usepackage{cleveref}

\usepackage{amsthm}

\usepackage[]{cryptocode}
\usepackage{empheq}
\ifx \format \plain
  \newtheorem{theorem}{Theorem}
\else

\fi
\newtheorem{proposition}[theorem]{Proposition}

\newtheorem{lemma}[theorem]{Lemma}

\newtheorem{corollary}[theorem]{Corollary}

\newtheorem{definition}[theorem]{Definition}

\newtheorem{rem}{Remark}
\newtheorem{algorithm}{Algorithm}

\crefname{lemma}{lemma}{lemmas}
\Crefname{lemma}{Lemma}{Lemmas}

\usepackage{qcircuit}
\usepackage{mathrsfs}
\usepackage{stmaryrd}
\usepackage{amssymb}
\usepackage[normalem]{ulem}

\newcommand{\cO}{\mathsf{cO}}

\newcommand{\new}[1]{#1}

\newcommand{\stateiafter}{\psi_i}
\newcommand{\stateibefore}{\phi_i}
\newcommand{\stateiminusoneafter}{\psi_{i-1}}

\newcommand{\colh}[2]{{\stackrel{}{\widetilde{f}^{col}_{#1,#2}}}}
\newcommand{\acolh}[2]{{f_{#1,#2}^{col}}}
\newcommand{\tworsc}[1]{{g_{#1}}}

\newcommand{\ktworsclcol}[3]{{\widehat{g}_{#1,#2,#3}}}
\newcommand{\ktworsc}[2]{{g_{#1,#2}}}

\newcommand{\sminusonersc}[2]{{f_{#1,#2}}}
\newcommand{\ssrsc}[3]{{\widehat{g}_{#1,#2,#3}}}
\newcommand{\srsc}[2]{{g_{#1,#2}}}

\newcommand{\aljrepet}[3]{{\stackrel{}{\widetilde{f}^{rep}_{#1,#2,#3}}}}
\newcommand{\ljrepet}[3]{{{f}_{#1,#2,#3}^{rep}}}
\newcommand{\ksc}[2]{{g_{#1,#2}}}

\title{Quantum security of subset cover problems\footnote{This work was partially funded by 
    PEPR integrated project EPiQ ANR-22-PETQ-0007 part of Plan France 2030.}}
\ifx \format \plain
  \author{Samuel Bouaziz--Ermann\footnote{samuel.bouaziz@ens-rennes.fr}, Alex B. Grilo\footnote{Alex.Bredariol-Grilo@lip6.fr}~ and Damien Vergnaud~\footnote{damien.vergnaud@lip6.fr} \\LIP6, Sorbonne Université, CNRS}
  \date{}
\else
  \author{}
  \institute{}
\fi

\begin{document}

\maketitle

\vspace{-3em}

\begin{abstract}
  The subset cover problem for $k \geq 1$ hash functions, which can be seen as an extension of the collision problem, was introduced in 2002 by Reyzin and Reyzin to analyse the security of their hash-function based signature scheme HORS. 
  The security of many hash-based signature schemes relies on this problem or a variant of this problem (e.g. HORS, SPHINCS, SPHINCS+, \dots). 
  
  Recently, Yuan, Tibouchi and Abe (2022) introduced a variant to the subset cover problem, called restricted subset cover, and proposed a quantum algorithm for this problem. In this work,  we prove that any quantum algorithm needs to make $\Omega\left((k+1)^{-\frac{2^{k}}{2^{k+1}-1}}\cdot N^{\frac{2^{k}-1}{2^{k+1}-1}}\right)$ queries to the underlying hash functions with codomain size $N$ to solve the restricted subset cover problem, which essentially matches the query complexity of the algorithm proposed by Yuan, Tibouchi and Abe. 
  
  We also analyze the security of the general $(r,k)$\---subset cover problem, which is the underlying problem that implies the unforgeability of HORS under a $r$-chosen message attack (for $r \geq 1$). We prove that a generic quantum algorithm needs to make $\Omega\left(N^{k/5}\right)$ queries to the underlying hash functions to find a $(1,k)$\---subset cover. 
  We also propose a quantum algorithm that finds a $(r,k)$\---subset cover making $O\left(N^{k/(2+2r)}\right)$ queries to the $k$ hash functions.
\end{abstract}

\section{Introduction}
Cryptographic hash functions are functions mapping arbitrary-length inputs to fixed-length outputs and are one of the central primitives in cryptography.  They serve as building blocks for numerous cryptographic primitives such as key-establishment,  authentication,  encryption, or digital signatures.  In particular, \emph{one-time signatures} -- i.e.\ in which the signing key can
be used only once -- based only on hash functions were proposed by Lamport as soon as in 1979~\cite{Lamport}.  The basic idea is to evaluate a cryptographic hash function on secret values to generate the public verification key and to authenticate a single message by revealing a subset of those secret pre-images. 

With the development of quantum technologies, which may bring drastic attacks against widely
deployed cryptographic schemes based on the hardness of integer factorization or the discrete logarithm~\cite{Shor},  hash-based signatures have regained interest within the realm of "post-quantum" cryptography and the recent NIST standardization process.  In particular, the SPHINCS+ candidate~\cite{CCS:BHKNRS19} has been selected in 2022 for standardization by NIST and other constructions  are standardized by IETF/IRTF.  The SPHINCS+ signature scheme and its predecessor SPHINCS~\cite{EC:BHHLNP15} make use of a Merkle-hash tree and of HORST, a variant of a hash-based scheme called HORS~\cite{ACISP:ReyRey02}.  HORS (for “Hash to Obtain Random Subset”) uses a hash function to select the subset of secret pre-images to reveal in a signature and the knowledge of these secrets for several subsets may not be enough to produce a forgery,  a property that makes HORS a \emph{few-time signature} scheme.

More concretely, the security of HORS (and HORST) relies on the hardness of finding a subset cover (SC) for the underlying hash function.  More formally, to define the \emph{$(r,k)$\---SC problem},  we consider the hash function as the concatenation of  $k \geq 1$ hash functions $h_1$, \dots, $h_k$ (with smaller outputs) and the problem is to find,  for some integer $r \geq 1$,   $r+1$ elements $x_0,x_1\dots,x_r$ in the hash function domain such that $x_0 \notin \{x_1,\dots,x_r\}$, and
$$
\left\{h_i(x_0) \vert 1\leq i \leq k\right\} \subseteq \bigcup_{j=1}^r \left\{h_i(x_j) \vert 1\leq i \leq k\right\}.
$$
The hardness of this problem for concrete popular hash functions has not been studied in depth but Aumasson and Endignoux~\cite{EPRINT:AumEnd17a} proved in 2017  a lower bound on the number of queries to hash functions 
for the SC problem in the Random Oracle Model (ROM).  However,  the exact security of HORS (and more generally HORST, SPHINCS and SPHINCS+)  with respect to quantum attacks is still not clear. Since quantum computing provides speedups for many problems (e.g.  Grover’s search algorithm~\cite{STOC:Grover96} and Brassard, H\o{}yer, and Tapp \cite{LATIN:BHT98} collision search algorithm),  it is important to provide lower bounds in a quantum world. 

\subsection{Our results}
In this paper, we explore the difficulty of finding subset cover for idealized hash functions for quantum algorithms. We also consider a variant called the $k$-restricted subset cover ($k$\---RSC) problem where, given $k$ functions $h_1,\dots,h_k:\mathcal{X} \rightarrow \mathcal{Y}$ such that $N=|\mathcal{Y}|$, one has to find $k+1$ elements $x_0,x_1\dots,x_k$ such that:
$$
\forall 1\leq i\leq k, h_i(x_0)= h_i(x_i)
$$
and $x_0 \notin \{x_1,\dots,x_k\}$. This variant was defined recently by Yuan, Tibouchi and Abe~\cite{subres}, who showed a quantum algorithm to solve it.
The main contributions of this work are:
\begin{enumerate}
\item \textbf{Lower bound on $k$\---RSC}: we prove that $\Omega\left((k+1)^{-\frac{2^{k}}{2^{k+1}-1}}\cdot N^{\frac{2^{k}-1}{2^{k+1}-1}}\right)$ quantum queries to the idealized hash functions are needed to find a $k$\---RSC with constant probability. \\ 
  (Theorem~\ref{theorem:krsc})

\item \textbf{Lower bound on $(1,k)$\---SC}: we prove that $\Omega\left((k!)^{-1/5} \cdot N^{k/5}\right)$ quantum queries to the idealized hash functions are needed to find a $(1,k)$\---SC with constant probability. \\ %
  (Theorem \ref{theorem:1ksc})

\item \textbf{Upper bound on $(r,k)$\---SC:} we present a quantum algorithm that finds a $(r,k)$\---SC with constant probability with $O\left(N^{k/(2+2r)}\right)$ queries to the hash functions when $k$ is divisible by $r+1$, and $O\left(N^{k/(2+2r) + 1/2}\right)$ otherwise. \\
  (Theorem \ref{theorem:rkscalgo})
\end{enumerate}

\subsection{Technical Overview}
To prove our lower bounds on the query complexity, we use the technique called \emph{compressed random oracle model} introduced by Zhandry in \cite{C:Zhandry19}. Its goal is to record information about the queries of an adversary $A$ in the quantum random oracle model and it permits ``on-the-fly'' simulation of random oracles (or \emph{lazy sampling})
by considering the uniform superposition of all possible random oracles instead of picking a single random oracle at the beginning of the computation. The technique uses a register to keep a record of a so-called \emph{database} of the random oracle and this register is updated whenever $A$ makes a query to the random oracle. At the end of $A$'s computation, the reduction can measure the register of the database, and the distribution of the outputs is uniformly random, as if we had chosen a random oracle at the beginning of its computation. This new register that contains the database is at the gist of our lower bounds.

In Section \ref{section:rsc}, we prove the lower bound on the query complexity to solve the RSC problem. We consider an algorithm $A$ after $i$ quantum queries to the random oracle and call its state at this moment $\ket{\stateiafter}$.
Our goal is to compute an upper bound for the value $|P^{RSC}_k\ket{\stateiafter}|^2$, where $P^{RSC}_k$ is the projection onto the databases that contain a $k$\---RSC.
Computing such a bound leads to a lower bound on the number of queries needed for solving $k$\---RSC with constant probability. To prove our bound, we proceed by induction: assuming we proved a bound for the $k'$\---RSC problem for all $k'<k$, we prove a bound for the $k$\---RSC problem.
The analysis is naturally divided into two parts: whenever $A$ finds a $k$\---RSC after $i$ quantum queries, it means that  either:
\begin{enumerate}
\item $A$ finds it after $i-1$ quantum queries;
\item or $A$ finds it with the $i^{th}$ quantum query.
\end{enumerate}
The first case is recursive and it remains to bound $|P^{RSC}_k\ket{\stateiafter}|$ in the second case. Here, %
the database (after $i-1$ quantum queries) must contain a certain number of $k'$\---RSC (for some $k' < k$), in order for A to find $k$\---RSC with the $i^{th}$ query.
Using this strategy, we obtain a recursive formula from which we can deduce the bound on $|P^{RSC}_k\ket{\stateiafter}|$.

In Section \ref{section:bound1ksc}, we prove a lower bound for the $(1,k)$\---SC problem.
The idea of the proof is similar to the proof for the lower bound of the $k$\---RSC problem but we have to compute a bound for another problem that we define: the \emph{$j$\---repetition} problem.

Finally in Sections \ref{section:algo1ksc} and \ref{section:algorksc}, we design a family of quantum algorithms for finding a $(r,k)$\---SC.
These algorithms are inspired by the algorithm from \cite{subres} to solve the $k$\---RSC problem and \cite{EC:LiuZha19}'s algorithm for finding multi-collisions.  These algorithms are recursive and take as input two parameters $t,k' \in \mathbb{N}$ and perform the following:
\begin{enumerate}
\item Find $t$ distinct $(r-1,k')$\---SC;
\item Find the $(r,k)$\---SC.
\end{enumerate}
The parameters $t$ and $k'$ are chosen in order to optimize the complexity of the algorithm.
The first step is done by applying $r-1$ times the algorithm for the value $k'$, and the second step uses Grover's algorithm.

\subsection{Related works, discussion and open problems}

\newcommand{\mypar}[1]{~\\ \noindent \textbf{#1.} }

\vspace{-1.2em}

\mypar{Collision-finding} The link between finding a multi-collision and finding a subset cover was first discussed in \cite{subres}, since their algorithm is inspired from the one for finding multi-collisions in \cite{EC:LiuZha19}.
In the latter, they also show a lower bound for finding multi-collisions, and our proof of lower bounds uses the same technique they used.
We make use of the compressed oracle technique, first introduced by Zhandry in \cite{C:Zhandry19}, and generalize the proof of the lower bound on multi-collisions to the RSC and SC problems.%

\mypar{Restricted Subset Cover} There is currently only one quantum algorithm for finding RSC \cite{subres}.
Our lower bound for finding a RSC matches their upper bound when $k$, the number of functions, is constant.
However when $k$ is not a constant, their algorithm makes $O\left(k\cdot N^{\frac{2^{k}-1}{2^{k+1}-1}}\right)$ queries to $h_1,\dots,h_k$, which roughly leaves a $k^{3/2}$ gap between the best known attack and our lower bound. To the best of our knowledge, this is the first lower bound on the RSC problem for a quantum algorithm, and there are no such result for classical algorithms.
It would be interesting to see if we can close this gap further.

\mypar{Tighter bounds for $(1,k)$\---SC} 
When $k$ is constant, the lower bound for $(1,k)$\---SC is $\Omega\left(N^{k/5}\right)$, while our algorithm for this problem makes $O\left(N^{k/4}\right)$ queries to the oracle (when $k$ is even).
It would be interesting to tighten this gap, especially since the results for $(1,k)$\---SC are probably necessary to prove the lower bounds $(r,k)$\---SC for $r \geq 2$.

For non-constant $k$, our lower bound for $(1,k)$\---SC is $\Omega\left(C_k^{-1/5}\cdot N^{k/5}\right)$, where $C_k=\sum_{j=2}^k\frac{k!}{(j-1)!}  \leq k! \cdot e$. Notice that this term cannot be neglected for large values of $k$.
For example with $k=\log(N)$, we have $C_k \geq N$.
In comparison,  our best algorithm for $(1,k)$\---RSC, the factor in $k$ is $\binom{k}{(k+1)/2}^{-1/2}\leq \frac{2^{(k+1)/2}}{\left(\frac{k+1}{2}\cdot\pi\right)^{1/4}}$, 
which is very far from our bound on $C_k$.
It would also be interesting to see if we can tighten this gap.

\mypar{Bounds for $(r,k)$\---SC}
Unfortunately, expanding our result for the $(r,k)$\---SC problem is much more complicated than the case $r=1$ and actually even proving the case $r=2$ is not simple.
To prove such a result, one would need a bound for the problem of finding $j$ distinct $(1,k)$\---SC problem.
While proving such a bound is challenging, it is also unclear what the problem of finding $j$ distinct $(1,k)$\---SC is.
Indeed, an important property for our technique in the first lower bound proofs is that by making one query to the oracle, the adversary cannot find two or more $k$\---RSC.
The same property must hold for the problem of finding $j$ distinct $(1,k)$\---SC, and this definition and subsequent analysis remain open.

\mypar{Security of SPHINCS and SPHINCS+} 
The signature scheme SPHINCS relies on the HORST scheme (for ``HORS with trees") which adds a Merkle tree to the HORS scheme to compress the public key. The security of HORST also relies on the  $(r,k)$\---SC problem but the security of SPHINCS rely on different security notions of the underlying hash functions. In particular, it depends on a variation of the SC problem classed the \emph{target subset cover} (TSC) problem~\cite{ACISP:ReyRey02}. The main difference comes from the fact that the message signed using HORST is an unpredictable function of the actual message and this prevents an attacker to construct a subset cover beforehand. 

Nevertheless, the authors of~\cite{EC:BHHLNP15} stated an existential unforgeability result for SPHINCS~\cite[Theorem 1]{EC:BHHLNP15}  under $q_s$-adaptive chosen message attacks. The success probability in such attacks is roughly upper-bounded by:
$$
\sum_{r=1}^{\infty}\min \left(2^{r(\log{q_s}-h)+h}, 1\right) \cdot Succ_A((r,k)-SC),
$$
where $h$ is the height of the tree used in SPHINCS, and $Succ_A((r,k)-SC)$ denotes the success probability of an adversary $A$ to find a $(r,k)$\---SC. The authors made the assumption that this term is negligible for any probabilistic  adversary $A$ and our quantum lower bound on the query number to find a $(1,k)$\---SC can be seen as a first step towards proving this assumption (for idealized hash functions). To assess the security of SPHINCS from~\cite[Theorem 1]{EC:BHHLNP15} for concrete parameters such as those proposed in~\cite{EC:BHHLNP15} (namely $h=60, q_s=2^{30}$), it would also be necessary to  upper-bound the success probabilities $Succ_A((2,k)-SC)$ and $Succ_A((3,k)-SC)$, which we leave for future work.
For example, one could try to apply \cite[Theorem 4.12]{EC:YamZha21} to get a lower bound for $(r,k)$\---SC more easily, but the obtained bound will most likely not be tight.

SPHINCS+ is an enhancement of SPHINCS, which makes the scheme more efficient and
its security relies on another variant of the SC problem, namely the interleaved target subset cover (ITSC) problem. It would also be interesting to see if our methods can be used to prove similar bounds for the TSC and ITSC problems.
At last, one could also try to design algorithms for these two problems, as no quantum algorithms for them  exist yet to the best of our knowledge.

\section{Preliminaries}

We assume the reader is familiar with the theory of quantum information and for completeness, we recall Grover's algorithm and Quantum Fourier Transform (QFT) in \Cref{sec:grover_qft}.
We denote the concatenation by $||$.

\subsection{Compressed oracle technique}

We now present the key ingredients of Zhandry's compressed oracle technique, first defined in \cite{C:Zhandry19} and refined in \cite{EC:CFHL21}.
As mentioned in the introduction, the technique uses a register to keep a record of a so-called \emph{database} of the random oracle and this register is updated whenever an adversary $A$ makes a query to the random oracle.
This new register that contains the database is at the gist of our lower bounds. %

We consider the \emph{Quantum Random Oracle Model}, first defined in \cite{AC:BDFLSZ11}.
In this model, we are given black-box access to a \emph{random} function $H: \mathcal{X} \rightarrow \mathcal{Y}$.
For our model, the adversary will work on three different registers $\ket{x,y,z}$.
The first register is the query register, the second register is the answer register and the third register is the work register.
The first two registers are used for queries and answers to the oracle, while the last register is for the adversary's other computations.
We first define the unitary $\mathsf{StO}$ that represents the \emph{Standard Oracle} and that computes as follows:
$$
\mathsf{StO} \sum_{x,y,z}\alpha_{x,y,z}\ket{x,y,z} \rightarrow \sum_{x,y,z}\alpha_{x,y,z}\ket{x,y \new{+} H(x),z}
$$

This unitary corresponds to a query to $H$.

Now, we define Zhandry's compressed oracle.
In this model, instead of starting with a random function $H$, we start with the uniform superposition of all random functions $\ket{H}$, where $\ket{H}$ encodes the truth table of the function $H$.
\new{In this model, there is a register for each $x \in \mathcal{X}$, and the value of this register in the state $\ket{H}$ corresponds to $H(x)$.
That is, we have that $\ket{H} = \bigotimes_{x \in \mathcal{X}} \ket{H(x)}_x$}
Let $\mathcal{H} = \{H:\mathcal{X} \rightarrow \mathcal{Y}\}$ be the set of all possible functions $H$.
We define a new register, the database register $\ket{H}$, that starts in the uniform superposition $\frac{1}{|\mathcal{H}|}\sum_{H \in \mathcal{H}}\ket{H}$.
This register starts in product state with the other registers, and Zhandry's idea is that instead of modifying the adversary's register when querying the oracle, we will modify the database register instead.
To do so, we simply consider the \emph{Fourier basis} for the $y$ and the $H$ register before querying the Standard Oracle. 

We write this unitary $\mathsf{O}$ and it works as follows:
$$
\mathsf{O} \sum_{x,\hat{y},z}\alpha_{x,\hat{y},z}\ket{x,\hat{y},z}\otimes\sum_{\hat{H} \in \mathcal{H}}\alpha_{\hat{H}}\ket{\hat{H}} \rightarrow \sum_{x,\hat{y},z}\alpha_{x,\hat{y},z}\ket{x,\hat{y},z}\otimes\sum_{\hat{H} \in \mathcal{H}}\alpha_{\hat{H}}\ket{\hat{H} \new{\ominus (x,\hat{y})}},
$$
where, for any fixed $x\in\mathcal{X}$ and $z\in\mathcal{Y}$, $H \ominus (x,z):\mathcal{X} \rightarrow \mathcal{Y}$ is defined as:
$$
H \ominus (x,z)(x') = \begin{cases}
    H(x') &\text{ if } x' \neq x\\
    H(x) - z &\text{ if } x'=x.
\end{cases}
$$
In other words, $H\ominus(x,z)$ is obtained by replacing the value of $H(x)$ by $H(x) - z$ in $H$.

This unitary can be implemented by applying the $QFT$ to the registers $\ket{y}$ and $\ket{H}$, applying the Standard Oracle, then applying the $QFT^{\dag}$ again on the $\ket{y}$ and $\ket{H}$ registers.

Finally, we define the compression part.
The idea behind the compression is that for every $x$ in the database mapped to $\ket{\hat{0}}$, we remap it to $\ket{\bot}$, where $\bot$ is a new value outside of $\mathcal{Y}$.
More formally, the compression part is done by applying:
$$
\mathsf{Comp} = \bigotimes_x \left(\ket{\bot}\bra{\hat{0}} + \sum_{\hat{y}:\hat{y}\neq\hat{0}}\ket{\hat{y}}\bra{\hat{y}}\right)
$$
in the Fourier basis.

Since at the start of the computation, the database will be initiated with the uniform superposition over all $\mathcal{H}$ possible, then after $q$ queries the state of the database can be described with $q$ vectors.
In order to apply the compression as a unitary, we declare that $\mathsf{Comp} \ket{\bot} = \ket{0}$.

Now, we can define the \emph{Compressed Oracle}:

$$
\cO = \mathsf{Comp} \circ \mathsf{O} \circ \mathsf{Comp}^{\dag}.
$$

Of course the compression part inevitably creates some losses, compared to only using the Standard Oracle.
The precise characterization of these losses is given in one of Zhandry's lemma, and can be stated as follows:

\begin{lemma}[Lemma 5 from \cite{C:Zhandry19}]
  \label{lemma:zhandry}
  Let A be an algorithm that makes queries to a random oracle $H : \mathcal{X} \rightarrow \mathcal{Y}$, and output $(x_1,\dots,x_k,y_1,\dots,y_k) \in \mathcal{X}^k \times \mathcal{Y}^k$.
  Let $p$ be the probability that $\forall 1 \leq i \leq k$, $H(x_i) = y_i$.
  Similarly, consider the algorithm A running with the Compressed Oracle $\cO$, and output $(x'_1,\dots,x'_k,y'_1,\dots,y'_k) \in \mathcal{X}^k \times \mathcal{Y}^k$.
  Let $p'$ be the probability that $\forall 1 \leq i \leq k$, $H'(x'_i) = y'_i$\new{, where $H'$ is obtained by measuring the $H$ register at the end of the execution of the algorithm A}. Then:
  $$
  \sqrt{p} \leq \sqrt{p'} + \sqrt{\frac{k}{|\mathcal{Y}|}}
  $$
\end{lemma}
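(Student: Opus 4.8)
The plan is to track how the ``bad event'' amplitude --- here, the event that the algorithm's claimed preimages are genuinely consistent with the oracle --- is affected by the compression step, comparing the Standard Oracle $StO$ and the Compressed Oracle $cO$. The key observation is that the two oracles $StO$ and $cO$ are intertwined by an isometry: there is a map (essentially $Comp$ together with the $QFT$ conjugation) that takes the Standard Oracle's state after $q$ queries to the Compressed Oracle's state after $q$ queries. So for any fixed adversary $A$, the final states $\ket{\psi_{StO}}$ and $\ket{\psi_{cO}}$ live in (copies of) the same space, and what we need to control is the difference between the success projectors applied to these two states.

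First I would set up the success projector. For the Standard Oracle, ``$A$ succeeds'' means the adversary outputs $(x_1,\dots,x_k,y_1,\dots,y_k)$ and $H(x_i)=y_i$ for all $i$; write $\Pi_{StO}$ for the corresponding projector onto the joint adversary-plus-oracle space, so $p = \lVert \Pi_{StO}\ket{\psi_{StO}}\rVert^2$. For the Compressed Oracle the analogous projector $\Pi_{cO}$ checks that the database $D$ satisfies $D(x_i)=y_i$ for all $i$ (in particular $x_i \in D$), and $p' = \lVert \Pi_{cO}\ket{\psi_{cO}}\rVert^2$. The heart of the argument is a bound of the form
$$
\bigl\lVert \Pi_{StO}\ket{\psi_{StO}} - (\text{isometry}) \,\Pi_{cO}\ket{\psi_{cO}} \bigr\rVert \le \sqrt{k/|\mathcal{Y}|},
$$
which then gives $\sqrt{p} \le \sqrt{p'} + \sqrt{k/|\mathcal{Y}|}$ by the triangle inequality in the shared space. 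To get this one studies, for a single coordinate $x_i$, the discrepancy between ``$H(x_i)=y_i$'' in the uncompressed picture and ``$D(x_i)=y_i$'' in the compressed picture: the only states on which these disagree are those where $x_i$ is absent from $D$ (the $\ket{\bot}$ case) but the uncompressed value happens to equal $y_i$, and on a register in uniform superposition over $|\mathcal{Y}|$ values this contributes amplitude $1/\sqrt{|\mathcal{Y}|}$. I would make this precise by writing the relevant local operator on the $x_i$-cell as a sum of the ``ideal'' part plus an error term of operator norm $1/\sqrt{|\mathcal{Y}|}$, and then combine the $k$ cells with a union-bound/triangle-inequality argument, picking up the factor $\sqrt{k}$ (not $k$, because the errors are near-orthogonal and one takes a square root after summing $k$ contributions of size $1/|\mathcal{Y}|$).

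The main obstacle I anticipate is handling the interplay between the $QFT$-conjugation in the definition of $O$ and the compression map $Comp$ cleanly: the natural ``$H(x_i)=y_i$'' test is diagonal in the standard basis, whereas $O$ and $Comp$ act naturally in the Fourier basis, so one must be careful about which basis the local error operator is being estimated in, and about the fact that $Comp$ is only an isometry after the $\ket{\bot}\leftrightarrow\ket{0}$ identification. The cleanest route is probably to avoid re-deriving this from scratch and instead invoke the refined analysis of \cite{EC:CFHL21}, which already packages exactly this local-error bookkeeping; the statement here is then essentially a restatement of their ``transition capacity'' / recording-lemma estimate specialized to the $k$-fold consistency relation, with the $\sqrt{k/|\mathcal{Y}|}$ coming from summing $k$ single-point errors each of weight $1/|\mathcal{Y}|$. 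I would also note that the lemma is stated asymmetrically (it bounds $\sqrt{p}$ by $\sqrt{p'}$, which is the direction needed for lower bounds), so no reverse inequality is required and the proof only needs the one-sided triangle-inequality step.
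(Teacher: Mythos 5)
The paper does not prove this lemma: it is stated as ``Lemma (Zhandry)'' and used as a black box, citing \cite{C:Zhandry19} (it is Lemma 5 there). So there is no in-paper argument to compare your sketch against, and your instinct at the end --- that the cleanest route is to invoke the source rather than re-derive --- is in fact exactly what the authors do.

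On the merits of the sketch itself, you correctly identify the shape of Zhandry's actual proof: use the decompression isometry $V$ to move between the two pictures, observe that $\Pi\,V\,P = V\,P$ (a database entry $D(x_i)=y_i$ decompresses to $H(x_i)=y_i$), so the triangle inequality reduces everything to bounding $\lvert \Pi\,V(I-P)\ket{\psi_{cO}}\rvert$, and the only surviving contributions there are the components where some $x_i$ has $D(x_i)=\bot$, each of which decompresses to a uniform value overlapping $y_i$ with amplitude $1/\sqrt{\lvert\mathcal{Y}\rvert}$. The one place you are too loose is the passage from $k$ single-cell errors to the final $\sqrt{k/\lvert\mathcal{Y}\rvert}$. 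Saying the errors are ``near-orthogonal'' is not quite right as a justification: the natural decomposition of $(I-P)\ket{\psi_{cO}}$ by the set $T\subseteq[k]$ of $\bot$-positions is orthogonal \emph{before} projecting by $\Pi$, but the projected pieces $\Pi\,V\ket{\phi_T}$ overlap for different $T$, so one cannot simply add squared norms. The honest combination is: each piece $\ket{\phi_T}$ is attenuated by $\lvert\mathcal{Y}\rvert^{-|T|/2}$, and a Cauchy--Schwarz step across the (orthogonal) $\ket{\phi_T}$'s gives a bound of the form $\bigl(\sum_{\emptyset\neq T}\lvert\mathcal{Y}\rvert^{-|T|}\bigr)^{1/2}$, which is where the $\sqrt{k/\lvert\mathcal{Y}\rvert}$ scale appears. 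If you want to actually carry this out you should write that step explicitly rather than appealing to orthogonality, since as stated it does not hold.
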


In the rest of the paper, we will have that $\sqrt{\frac{k}{|\mathcal{Y}|}}$ is negligible, and thus we will neglect this term.

\new{We also have the following lemma from \cite{EC:CFHL21} that describes the operator $\cO_{(x,\hat{y})}:\mathcal{H}\rightarrow \mathcal{H}$, which is defined as the operator applied on $\ket{H}$ when applying $\cO$ to $\ket{x}\ket{\hat{y}}\otimes\ket{H}$.
More formally, we have that:
$$
\cO \ket{x}\ket{\hat{y}}\otimes \ket{H} = \ket{x}\ket{\hat{y}}\otimes \cO_{(x,\hat{y})} \ket{H}
$$

\begin{lemma}[Lemma 4.3 from \cite{EC:CFHL21}]
    \label{lemma:matrixCO}
    For any $\hat{y} \neq \hat{0}$, the operator $\cO_{(x,\hat{y})}$ is represented by the following matrix:
    \[\setlength{\tabcolsep}{20pt} %
\renewcommand{\arraystretch}{2} %
    \begin{array}{c||c|c}
         & \bot & r\\
         \hline\hline
        \bot & 0 & \frac{\omega_N^{-ry}}{\sqrt{|\mathcal{Y}|}} \\[.2cm]
        \hline
        y' & \frac{\omega_N^{yy'}}{\sqrt{|\mathcal{Y}|}} & \begin{cases}\left(1 - \frac{2}{|\mathcal{Y}|}\right)\omega_N^{yy'} + \frac{1}{|\mathcal{Y}|} &\text{ if $y' = r$} \\
        \frac{1 - \omega_N^{yy'} - \omega_N^{ry}}{|\mathcal{Y}|} &\text{ if $y' \neq r$}
        \end{cases}
    \end{array}       
    \]
    For $\hat{y} = \hat{0}$, we have that $\cO_{(x,\hat{0})}$ is the identity.
\end{lemma}
}    
    
\new{We also define, for any compressed $H:\mathcal{X}\rightarrow \mathcal{Y\cup\{\bot\}}$, for any fixed $x\in\mathcal{X}$ and $z\in\mathcal{Y}$, $H \cup (x,z):\mathcal{X} \rightarrow \mathcal{Y}$ as:
$$
H \cup (x,z)(x') = \begin{cases}
    H(x') &\text{ if } x' \neq x\\
    z &\text{ if } x'=x.
\end{cases}
$$
In other words, $H\cup(x,z)$ is obtained by replacing the value of $H(x)$ by $z$ in $H$.
}

In the following, we will model the adversary (A) as a series of computation alternating between unitaries and oracle calls.
The adversary's quantum state will first be initialized to $\ket{0}^{\otimes N}$.
Then, his computation will be decomposed as:
\begin{equation}
  \label{equation:algorithm}
  A = U_k \cO U_{k-1} \cO \dots \cO U_2 \cO U_1 
\end{equation}

So that, if $\ket{\stateiafter} = \sum_{x,y,z,D}\alpha_{x,y,z,D}\ket{x,y,z,D}$ is the state of the adversary after $i$ quantum queries to $\cO$, then $U_{i+1}$ operates on the registers $x, y$ and $z$ only.
We also define \emph{database properties}:

\begin{definition}[Database property]
  A database property is a subset of $\mathcal{H}$.
  Any database property $D$ can be seen as a projector on $\mathcal{H}$, as follows:
  $$
  \sum_{d \in D}\ket{d}\bra{d}
  $$
\end{definition}

We write $\mathcal{D} = \{I \vert I \subseteq \mathcal{H}\}$ the set of all subspaces of $\mathcal{H}$, that also corresponds to the set of all database properties.

We now state and prove two lemmas adapted from \cite{EC:LiuZha19} that we will use thoroughly in this paper.
The first lemma will allow us to ignore the unitaries that the adversary A applies on the first registers of the state.

\begin{lemma}[adapted from Lemma 8  from \cite{EC:LiuZha19}]
  \label{lemma:commute}
  For any unitary $U$, any projector $P$, and any state $\ket{\phi}$,
  $$
  \left|\left(I \otimes P\right) \cdot \left(U \otimes I\right)\ket{\phi}\right| = \left|\left(I \otimes P\right)\ket{\phi}\right|
  $$
\end{lemma}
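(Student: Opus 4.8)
The plan is to reduce the identity to two elementary facts: that $U\otimes I$ and $I\otimes P$ commute, and that $U\otimes I$ preserves norms. First I would record the commutation relation $(I\otimes P)\,(U\otimes I) = (U\otimes I)\,(I\otimes P)$. This holds because the two operators act non-trivially on disjoint tensor factors; concretely, one checks it on product vectors $\ket{a}\otimes\ket{b}$, where both sides send $\ket{a}\otimes\ket{b}$ to $(U\ket{a})\otimes(P\ket{b})$, and then extends by linearity to the whole joint space. Here $\ket{\phi}$ lives on the two registers that $U$ and $P$ respectively act on (possibly after grouping the adversary's registers appropriately).

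Having rewritten $(I\otimes P)(U\otimes I)\ket{\phi} = (U\otimes I)\bigl((I\otimes P)\ket{\phi}\bigr)$, the second step is to invoke unitarity: since $U$ is unitary, so is $U\otimes I$, and a unitary is an isometry, i.e.\ it preserves the Euclidean norm of every vector. Applying this to the vector $(I\otimes P)\ket{\phi}$ gives $\bigl|(U\otimes I)(I\otimes P)\ket{\phi}\bigr| = \bigl|(I\otimes P)\ket{\phi}\bigr|$, which together with the previous identity is exactly the claimed equality.

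I do not expect any real obstacle here: the lemma is a one-line consequence of ``operators acting on different registers commute'' together with ``unitaries are norm-preserving''. The only point worth spelling out, to keep the argument self-contained, is the commutation identity; note also that the hypothesis that $P$ is a projector plays no role in this particular lemma (any operator on the second factor would do), so I would state the proof in a couple of sentences and move on.
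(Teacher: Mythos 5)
Your proof is correct, and it is the standard (essentially unique) argument: commute the operators acting on disjoint tensor factors, then use that $U\otimes I$ is an isometry. The paper states this lemma without a written proof (citing Lemma~8 of Liu--Zhandry), so there is nothing to compare against, but your two-step argument is exactly what that reference does; your observation that the projector hypothesis on $P$ is not used is also accurate.
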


The second lemma bounds the amplitude of measuring a database that satisfies a property $P$ at the $i^{th}$ step of the algorithm, i.e.\ just after the $i^{th}$ query to the oracle.
In this bound, the first term captures the case where we succeed to find a database that satisfies $P$ before the $i^{th}$ query.
The second term captures the case where we did not have it before the $i^{th}$ query, but found it with the $i^{th}$ one.

\begin{lemma}[adapted from Lemma 9 from \cite{EC:LiuZha19}]
  \label{lemma:recursive}
  Let $\ket{\stateibefore}$ be the state of an algorithm A just before the $i^{th}$ quantum query to $\cO$, and $\ket{\stateiafter}$ the state of the same algorithm right after the $i^{th}$ quantum query to $\cO$.
  Let $P$ be any projector on $D$.
  We have that:
  $$
  |P\ket{\stateiafter}| \leq \left|P\ket{\stateibefore}\right| + \left|P \cO (I-P)\ket{\stateibefore}\right|
  $$
\end{lemma}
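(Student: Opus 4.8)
The plan is to unwind the definition of $\ket{\stateiafter}$, split $\ket{\stateibefore}$ through the projector $P$, and close with the triangle inequality together with the facts that $cO$ preserves the norm and that an orthogonal projector does not increase it. The key observation is that the $i^{th}$ query is exactly the application of the unitary $cO$: any preprocessing unitary acting on the query/answer/work registers is already folded into $\ket{\stateibefore}$, so that $\ket{\stateiafter} = cO\ket{\stateibefore}$.

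First I would insert the resolution of the identity $I = P + (I - P)$ immediately after $cO$, obtaining
$$
P\ket{\stateiafter} \;=\; P\, cO\, \ket{\stateibefore} \;=\; P\, cO\, P\ket{\stateibefore} \;+\; P\, cO\, (I-P)\ket{\stateibefore},
$$
and then apply the triangle inequality to the two summands on the right-hand side. For the first summand I would use that $P$ is a projector, so $\left|P\ket{v}\right| \le \left|\ket{v}\right|$ for any vector $\ket{v}$, and that $cO$ is unitary, so $\left|cO\ket{v}\right| = \left|\ket{v}\right|$; taking $\ket{v} = P\ket{\stateibefore}$ gives $\left|P\, cO\, P\ket{\stateibefore}\right| \le \left|cO\, P\ket{\stateibefore}\right| = \left|P\ket{\stateibefore}\right|$. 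Substituting this back yields the stated bound $\left|P\ket{\stateiafter}\right| \le \left|P\ket{\stateibefore}\right| + \left|P\, cO\, (I-P)\ket{\stateibefore}\right|$.

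I do not expect a genuine obstacle here; the only care needed is bookkeeping. One must make sure that $\ket{\stateibefore}$ really is the state to which $cO$ is applied to produce $\ket{\stateiafter}$, i.e. that the adversary's unitaries on the first three registers have been absorbed into $\ket{\stateibefore}$ — this is precisely the role of the companion \Cref{lemma:commute}, which lets one strip such unitaries off the outside when the present lemma is invoked recursively inside the main proofs. It is also worth noting, for later use, that the argument uses nothing about $cO$ beyond unitarity and nothing about $P$ beyond being a projector, so it applies verbatim to any query implementation and to projectors defined on the database register alone or on the whole space.
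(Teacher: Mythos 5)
Your proof is correct and matches the paper's argument: both insert $I = P + (I-P)$ after $cO$, apply the triangle inequality, and bound the first summand using unitarity of $cO$ and the norm-nonincreasing property of the projector $P$. You simply spell out the intermediate steps that the paper compresses into one line.
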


\begin{proof}
  \begin{align*}
  \left|P\ket{\stateiafter}\right| = \left|P \cO \ket{\stateibefore}\right| 
                            &= \left| P \cO (P \ket{\stateibefore} + (I - P) \ket{\stateibefore})\right| \\
                            &\leq \left| P \ket{\stateibefore} \right| +  \left| P \cO (I - P) \ket{\stateibefore})\right|,
  \end{align*}
  where the inequality comes from the triangle inequality and the fact that $P \cO P \leq P$. 
\end{proof}

\begin{rem}
    In the next section and in the rest of the paper, we will consider multiple functions $h_1,\dots,h_k : \mathcal{X} \rightarrow \mathcal{Y}$ for some fixed $k$.
    Note that this is equivalent to considering one function $H : \mathcal{X} \rightarrow \mathcal{Y}^k$, such that we interpret, for any $x \in \mathcal{X}$, the output $H(x)$ as the concatenation of values of the functions applied to $x$, i.e. $H(x) = h_1(x)||h_2(x)||\cdots||h_k(x)$.
    Hence, in this setting, the compressed oracle is used on the function $H$, and a query to any of the $h_i$ is a query to all of the $h_i$'s.
    Thus, in our results, we count the number of queries to the function $H$ and thus the number of queries to all of the $h_i$'s.
    It may seem that we lose some accuracy in this setting, however this is with the same method that multiple random functions are implemented in the literature. 
\end{rem}

\subsection{The problem of subset cover and its variants}

We define the problem of subset cover.

\begin{definition}[$(r,k)$\---SC]
  \label{def:rksc}
  Let $k,r \in \mathbb{N^*}$.
  Let $h_1,\cdots,h_k : \mathcal{X} \rightarrow \mathcal{Y}$.
  A \emph{$(r,k)$\---SC} for $(h_1,\cdots,h_k)$ is a set of $r+1$ elements $x_0,x_1,x_2,\cdots,x_r$ in $\mathcal{X}$ such that:
  $$
  \left\{h_i(x_0) \vert 1 \leq i \leq k \right\} \subseteq \bigcup_{j=1}^r \left\{h_i(x_j) \vert 1 \leq i \leq k \right\}
  $$
\end{definition}

In other words, for each $1 \leq i \leq k$, there exists a $1 \leq j \leq r$ and a $1 \leq \ell \leq k$ such that $h_i(x_0) = h_\ell(x_j)$.

We notice two facts regarding the parameters of $(r,k)$\---SC. First, we have that the problem becomes easier when $r$ increases.
Secondly, we have that when $r>k$, a $(r,k$)\---SC contains a $(k,k)$\---SC.
Thus finding a $(r,k)$\---SC when $r>k$ is the same as when $r=k$.
For simplicity, we use $k$\---SC as a shorthand of $(k,k)$\---SC.

We also define the database properties $P_{(r,k)}^{SC}$ of containing a $(r,k)$\---SC, that is the set of databases that contains a $(r,k)$\---SC.
More formally, we have that:
$$
  P_{(r,k)}^{SC} = \left\{D \in \mathcal{D} \left\vert \exists x_{0}, x_{1}, \dots, x_{r}, \forall i\neq 0, x_{0} \neq x_{i}, H(x_{0}) \subseteq \bigcup_{i=1}^r H(x_{i})  \right. \right\},
$$
where for $x \in \mathcal{X}, H(x) = \left\{h_1(x),\dots,h_k(x)\right\}$.

We follow now with the definition of a harder variation of the $k$\---subset cover called the \emph{$k$\---restricted subset cover} ($k$\---RSC).%

\begin{definition}[$k$\---RSC]
  \label{def:krsc}
  Let $k \in \mathbb{N^*}$.
  Let $h_1,\dots,h_k: \mathcal{X} \rightarrow \mathcal{Y}$.
  A \emph{$k$\---restricted subset cover} ($k$\---RSC) for $(h_1,\dots,h_k)$ is a set of $k+1$ elements $x_0,x_1,x_2,\dots,x_k$ in $\mathcal{X}$ such that:
  \begin{align*}
    \forall i \in \{1,\dots,k\}, h_i(x_0) &= h_i(x_i) 
\text{ and } x_0 \neq x_i.
  \end{align*}
\end{definition}

We also define the database properties $P_{k,\ell}^{RSC}$ of  $k$ distinct $\ell$\---RSC, that is the set of databases that contains  $k$ distinct $\ell$\---RSC.
More formally, we have that:
\begin{align}
\label{eq:Plk}
  P_{k,\ell}^{RSC} = \left\{D \in \mathcal{D} \middle\vert \begin{array}{l}
                  \exists x_{0,1}, %
                  \dots, x_{\ell,1}, \forall i\neq 0, x_{0,1} \neq x_{i,1}, \forall i,h_i(x_{0,1}) = h_i(x_{i,1})  \\
                  \exists x_{0,2}, %
                  \dots, x_{\ell,2}, \forall i\neq 0, x_{0,2} \neq x_{i,2}, \forall i,h_i(x_{0,2}) = h_i(x_{i,2})  \\
                  \vdots \\
                  \exists x_{0, \ell}, %
                  \dots, x_{\ell, k}, \forall i\neq 0, x_{0, k} \neq x_{i, k}, \forall i,h_i(x_{0, k}) = h_i(x_{i, k})  \\                    
                  \forall i \neq j, (h_1(x_{0,i}),\dots,h_{\ell}(x_{0,i})) \neq (h_1(x_{0,j}),\dots,h_{\ell}(x_{0,j}))
                  \end{array}\right\}
\end{align}

The problem of finding a $k$\---RSC was introduced in \cite{subres}, in which the authors describe an algorithm that finds a $k$\---RSC in $O\left(kN^{\frac{1}{2}\left(1-\frac{1}{2^{k+1}-1}\right)}\right)$ quantum queries to $h_1,\dots,h_k$ when the $h_i$'s are such that $|\mathcal{X}| \geq (k+1) |\mathcal{Y}|$.

We discuss now the last condition in \Cref{eq:Plk}. We remark that while such condition was not explicitly imposed in \cite{EC:LiuZha19} for their lower bound for finding multi-collisions, this property is implicitly and extensively used in their proof.
 Such a property is needed because when they count $k$\---collisions (that is, $k$ distinct $x_1,\dots,x_k$ such that $H(x_1) = \dots = H(x_k)$), they are actually interested in the number of possible \emph{images} that would be helpful to reach a $(k+1)$\---collision. In particular, this is helpful since one query can only transform \emph{one} $k$\---collision (with such a property) into a $(k+1)$\---collision.%

In our case, the last line of \eqref{eq:Plk} ensures that the ``supporting set'' of the $k$\---RSC (i.e.\ the set of images of the $x_{0,i}$ by the different random functions $h_1,\dots,h_k$) is unique.
As in the multi-collision case, this condition will be crucial to extend a $k$\---RSC to a $(k+1)$\---RSC, and for this reason we define it explicitly in $P^{RSC}_{k,\ell}$.

Finally, we state a result from \cite{EC:LiuZha19}, regarding the amplitude of finding $j$ distinct $2$\---collisions:

\begin{lemma}[adapted from \cite{EC:LiuZha19}, Corollary 11]
  \label{lemma:j2col}
  Given a random function $h: \mathcal{X} \rightarrow \mathcal{Y}$ where $|N| = \mathcal{Y}$, let $\acolh{i}{j}$ be the amplitude of the $D$ containing at least $j$ distinct $2$\---collisions after $i$ quantum queries.
  Then:
  $$
  \acolh{i}{j} \leq \left(\frac{\new{4}e\cdot i^{3/2}}{j\sqrt{N}}\right)^j.
  $$
\end{lemma}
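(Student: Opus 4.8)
The plan is to adapt, in a simplified form, the compressed-oracle induction that \cite{EC:LiuZha19} use to bound the probability of finding multi-collisions, specialised to $2$-collisions. Write $\Pi_j$ for the projector onto databases containing at least $j$ \emph{distinct} $2$-collisions, where distinctness means that the $j$ witnessing pairs $\{x,x'\}$ (with $H(x)=H(x')$) have pairwise distinct images; as in the discussion following \Cref{eq:Plk}, this is exactly the counting convention under which a single oracle query can create at most one new such collision. Then $\acolh{i}{j}=|\Pi_j\ket{\stateiafter}|$, and I would establish the bound by a double induction: on $j$ on the outside, and on the number of queries $i$ on the inside. The base case $j=0$ is immediate since $\Pi_0=I$ and the right-hand side equals $1$, and the base case $i=0$ is immediate since the empty database contains no $2$-collision while the right-hand side vanishes at $i=0$.

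The heart of the argument is a two-term recursion obtained from \Cref{lemma:recursive} and \Cref{lemma:commute}. Applying \Cref{lemma:recursive} with $P=\Pi_j$ gives $|\Pi_j\ket{\stateiafter}|\le |\Pi_j\ket{\stateibefore}|+|\Pi_j\,cO\,(I-\Pi_j)\ket{\stateibefore}|$. Since $\ket{\stateibefore}=(U_i\otimes I)\ket{\stateiminusoneafter}$ and $\Pi_j$ acts only on the database register, \Cref{lemma:commute} rewrites the first term as $|\Pi_j\ket{\stateiminusoneafter}|=\acolh{i-1}{j}$. For the second term I would use the ``at most one new distinct collision per query'' property: it forces $\Pi_j\,cO\,(I-\Pi_{j-1})=0$ (a database with at most $j-2$ distinct collisions cannot reach $j$ of them with one query), so $(I-\Pi_j)\ket{\stateibefore}$ may be replaced by $(\Pi_{j-1}-\Pi_j)\ket{\stateibefore}$, whence $|\Pi_j\,cO\,(I-\Pi_j)\ket{\stateibefore}|\le \norm{\Pi_j\,cO\,(\Pi_{j-1}-\Pi_j)}\cdot|\Pi_{j-1}\ket{\stateibefore}|=\norm{\Pi_j\,cO\,(\Pi_{j-1}-\Pi_j)}\cdot\acolh{i-1}{j-1}$, again by \Cref{lemma:commute}. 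The remaining ingredient is the standard transition estimate for a single compressed-oracle query: a database reachable after $i-1$ queries has support of size at most $i-1$, hence at most $i-1$ distinct images, and in the Fourier picture of $cO=Comp\circ O\circ Comp^{\dag}$ the fresh answer hits any fixed image with amplitude $\Theta(1/\sqrt N)$, giving $\norm{\Pi_j\,cO\,(\Pi_{j-1}-\Pi_j)}=O(\sqrt{i/N})$. Putting these together yields $\acolh{i}{j}\le \acolh{i-1}{j}+O(\sqrt{i/N})\,\acolh{i-1}{j-1}$.

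From this recursion the bound follows by telescoping in $i$ (using $\acolh{0}{j}=0$): $\acolh{i}{j}\le \sum_{i'=1}^{i} O(\sqrt{i'/N})\,\acolh{i'-1}{j-1}$. Substituting the outer inductive hypothesis $\acolh{i'-1}{j-1}\le \left(\frac{e(i'-1)^{3/2}}{(j-1)\sqrt N}\right)^{j-1}$, coarsely bounding each summand's $i'$-dependence by $(i')^{(3j-2)/2}$, and using $\sum_{i'=1}^i (i')^{(3j-2)/2}\le \tfrac{2}{3j}\,i^{3j/2}$, the $j$-dependent prefactors collapse and the bookkeeping of absolute constants reduces to an inequality of the form $\tfrac{2A}{3}\le e\,(1-1/j)^{j-1}$, where $A$ is the absolute constant in the transition estimate; since $(1-1/j)^{j-1}>1/e$ this holds whenever $A\le\tfrac32$, which is the case for the sharp form of the estimate, and we recover $\acolh{i}{j}\le \left(\frac{e\,i^{3/2}}{j\sqrt N}\right)^{j}$ (a coarser transition estimate gives the same conclusion with $e$ replaced by a larger absolute constant). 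The case $j=1$ of the telescoping uses the fact that $\acolh{i'-1}{0}=1$ for all $i'$ in place of the inductive hypothesis.

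The step I expect to be the real obstacle is the transition estimate $\norm{\Pi_j\,cO\,(\Pi_{j-1}-\Pi_j)}=O(\sqrt{i/N})$, since \Cref{lemma:commute} and \Cref{lemma:recursive} are not by themselves quantitative. Making it rigorous requires unfolding $cO$ cell by cell in the Fourier basis while handling two complications simultaneously: the set of ``dangerous'' images depends on the particular basis database, and the queried index $x$ is itself in superposition. The way I would deal with this is to expand $(\Pi_{j-1}-\Pi_j)\ket{\stateibefore}$ over basis vectors $\ket{x,y,z,D}$, note that $cO$ acts non-trivially only on the cell of $D$ indexed by $x$ and leaves $x,y,z$ and the rest of $D$ untouched, so that the relevant operator is block-diagonal with respect to the value of $x$ and the content of $D$ away from cell $x$, and bound each block by the worst-case ``at most $i-1$ good images, each of Fourier-weight $1/N$'' estimate; summing the squared block contributions and taking a square root recovers the operator-norm bound. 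This is exactly the point at which one must invoke (or re-derive) the quantitative compressed-oracle machinery of \cite{C:Zhandry19,EC:CFHL21,EC:LiuZha19}, and it is also where the precise absolute constant -- hence the sharp $e$ in the statement -- is determined.
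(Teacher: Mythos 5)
The paper does not prove this lemma itself; it is quoted directly from Liu--Zhandry (Corollary 11 of \cite{EC:LiuZha19}), so what you have written is a reconstruction of their argument rather than an alternative to a proof in this paper. Your reconstruction is correct in structure and essentially sound, but the endgame you choose differs from the one Liu--Zhandry use (and from the style this paper uses in its analogous derivations, e.g.\ \Cref{lemma:born2rsc} and the argument in \Cref{appendix:recursion}).

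Concretely: after establishing the one-step recursion $\acolh{i}{j} \le \acolh{i-1}{j} + c\sqrt{(i-1)/N}\,\acolh{i-1}{j-1}$, the reference (and this paper, in the analogous places) expands the recursion \emph{fully in both indices} to get a sum over strictly decreasing tuples $0 \le m_j < \cdots < m_1 < i$, bounds that ordered sum by $\tfrac{1}{j!}\bigl(\sum_{m<i} c\sqrt{m/N}\bigr)^j$ via the standard symmetrization, and then applies $\sum_{m<i}\sqrt{m} \le \tfrac{2}{3} i^{3/2} \le i^{3/2}$ together with $1/j! \le (e/j)^j$. You instead run an outer induction on $j$ and a telescoping only in $i$; closing that induction needs the slightly more delicate estimate $(1+1/(j-1))^{j-1} \le e$. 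Both routes succeed under the same constraint on the transition constant $c$, namely $c \le 3/2$, and you derive that constraint correctly. Two small slips worth fixing: the telescoped summand should carry $\sqrt{(i'-1)/N}$, not $\sqrt{i'/N}$ (this is what makes your integral bound $\sum_{m=0}^{i-1} m^{(3j-2)/2} \le \tfrac{2}{3j} i^{3j/2}$ actually go through without an $(i+1)$ appearing); and you should state the $j=1$ base case as giving $\acolh{i}{1} \le c\sum_{m<i}\sqrt{m/N} \le c\,i^{3/2}/\sqrt{N}$, which fits the claimed form.

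Your honest flagging of the transition estimate $\norm{\Pi_j\,cO\,(\Pi_{j-1}-\Pi_j)}=O(\sqrt{i/N})$ as the real content is exactly right, and your block-diagonal outline of how to prove it is the correct strategy; but be aware that the precise constant $c$ one extracts from the compressed-oracle machinery depends on which treatment is invoked (Zhandry's original, Liu--Zhandry's tailored lemma, or CFHL's general framework), and the clean $e$ in the statement requires $c\le 3/2$, which Liu--Zhandry's specific argument for collisions does deliver but a black-box invocation of the generic CFHL transition bound (which comes with a larger absolute constant) would not. If you cannot or do not want to reproduce their tight transition estimate, the honest conclusion is the bound with $e$ replaced by a larger absolute constant --- which, as you note, is all that is needed downstream.
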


\new{For completeness, the proof of Lemma \ref{lemma:j2col} is given in Appendix \ref{appendix:j2col}. The proof closely follows the proof of  Corollary 11 in \cite{EC:LiuZha19} but we need to consider some negligible factors that incur an extra constant factor in the statement.
}

\section{Lower bound on the $k$\---restricted subset cover problem}
\label{section:rsc}

In this section, we prove a lower bound for the $k$\---RSC problem defined in Definition \ref{def:krsc}.
This section follows closely \cite{EC:LiuZha19}'s proof of their lower bound on finding multi-collisions.
We will first prove a lower bound for the problem when $k=2$.
Then, we will prove a lower bound for finding $k$ distinct $2$\---RSC, which will be necessary in our induction step.
Finally, we will prove the induction step in the last subsection and obtain a lower bound on finding $s$ distinct $k$\---RSC.

\subsection{Finding a $2$\---restricted subset cover}
\label{subsection:2rsc}
In this section, we will prove that the number of queries necessary to find a $2$\---RSC is $\Omega(N^{3/7})$, matching the query complexity of the quantum algorithm proposed in \cite{subres}, up to a constant factor.

As presented in Definition \ref{def:krsc}, in the $2$\---RSC problem, we are given $2$ random functions $h_1, h_2$ such that for $i \in \{1,2\}$, $h_i : \mathcal{X} \rightarrow \mathcal{Y}$.
The main theorem of this subsection can be stated as follows:

\begin{theorem}
  \label{theorem:2rsc}
  Given two random functions $h_1,h_2 : \mathcal{X} \rightarrow \mathcal{Y}$ where $|N| = \mathcal{Y}$, a quantum algorithm needs to make $\Omega(N^{3/7})$ queries to $h_1$ and $h_2$ to find a $2$\---RSC with a constant probability.
\end{theorem}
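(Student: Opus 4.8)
The plan is to run the adversary in Zhandry's compressed oracle, treating the pair $(h_1,h_2)$ as a single random oracle $H: \mathcal X\to\mathcal Y^2$ with $H(x)=(h_1(x),h_2(x))$, and to upper bound the amplitude $|P_{1,2}^{RSC}\ket{\stateiafter}|$ that after $i$ queries the recorded database already contains a $2$\---RSC. By a standard argument using \Cref{lemma:zhandry} (append to the adversary the $O(1)$ queries needed to record its claimed triple $(x_0,x_1,x_2)$), the success probability of a $q$-query algorithm is at most $\big(|P_{1,2}^{RSC}\ket{\psi_{q+O(1)}}|+O(1/\sqrt N)\big)^2$, so it suffices to show that $|P_{1,2}^{RSC}\ket{\stateiafter}|$ stays below any prescribed constant whenever $i\le\delta N^{3/7}$ for a small enough $\delta>0$. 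Besides $P:=P_{1,2}^{RSC}$, I would track, for $\ell\in\{1,2\}$, the database property $C^{(\ell)}_{\geq j}$ of containing at least $j$ distinct $2$\---collisions of $h_\ell$; since each $h_\ell$ is marginally a uniformly random function whose compressed database is a function of the full database, \Cref{lemma:j2col} gives $|C^{(\ell)}_{\geq j}\ket{\stateiafter}|\le\big(e\,i^{3/2}/(j\sqrt N)\big)^{j}$. Let $E_m$ be the projector onto databases having at most $m$ distinct $2$\---collisions of $h_1$ and at most $m$ of $h_2$; it commutes with $P$.

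The recursion goes as follows. Combining \Cref{lemma:commute} and \Cref{lemma:recursive} and splitting $(I-P)\ket{\stateibefore}=E_m(I-P)\ket{\stateibefore}+(I-E_m)(I-P)\ket{\stateibefore}$ (using that $cO$ is unitary and that $(I-E_m)(I-P)$, a product of commuting database projectors, is dominated by $I-E_m$),
\[
|P\ket{\stateiafter}|\le|P\ket{\stateiminusoneafter}|+|P\,cO\,E_m(I-P)\ket{\stateibefore}|+|(I-E_m)\ket{\stateiminusoneafter}|,
\]
and the last term is at most $|C^{(1)}_{\geq m+1}\ket{\stateiminusoneafter}|+|C^{(2)}_{\geq m+1}\ket{\stateiminusoneafter}|\le 2\big(e\,i^{3/2}/(m\sqrt N)\big)^{m}$. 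For the middle term I would invoke the standard bound on the action of one compressed-oracle query (as in \cite{EC:LiuZha19,C:Zhandry19}): it is $O\!\big(\sqrt{t/N^2}\big)$, where $t$ is the largest number of pairs $(y_1,y_2)\in\mathcal Y^2$ such that, for some $x$ and some database $D$ with no $2$\---RSC and at most $m$ distinct $2$\---collisions of each $h_\ell$, overwriting the entry of $D$ at $x$ by $(y_1,y_2)$ creates a $2$\---RSC. Any such new $2$\---RSC must use $x$: if $x$ is the $x_0$ of the RSC then $y_1$ lies among the first coordinates already recorded in $D$ and $y_2$ among the second coordinates ($\le|D|^2\le i^2$ pairs); if $x$ is the $x_1$ with $x\neq x_2$ then $D$ must already contain an $h_2$-collision $\{x_0,x_2\}$ not involving $x$ and $y_1=h_1(x_0)$ is forced ($\le 2m$ choices of $y_1$, $y_2$ free, i.e.\ $\le 2mN$ pairs); the case $x=x_2\neq x_1$ is symmetric ($\le 2mN$ pairs); and if $x=x_1=x_2$ then $(y_1,y_2)=(h_1(x_0),h_2(x_0))$ for some $x_0\in D$ ($\le i$ pairs). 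Hence $t\le i^2+4mN+i$ and $|P\,cO\,E_m(I-P)\ket{\stateibefore}|\le C\big(\tfrac iN+\sqrt{\tfrac mN}\big)$.

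Telescoping over $i=1,\dots,q$ with $|P\ket{\psi_0}|=0$ then yields
\[
|P\ket{\psi_q}|\le C\frac{q^2}{N}+Cq\sqrt{\frac mN}+2q\Big(\frac{e\,q^{3/2}}{m\sqrt N}\Big)^{m}.
\]
Choosing $m=\lceil N^{1/7}\rceil$ and $q\le\delta N^{3/7}$, the first term is $O(\delta^2 N^{-1/7})$, the middle term is $O(\delta)$ — it is exactly $\Theta(1)$ at $q\sim N^{3/7}$, which is what pins the exponent to $3/7$ — and since $q^{3/2}/(m\sqrt N)\le\delta^{3/2}$ the last term is at most $2q\,(e\delta^{3/2})^{N^{1/7}}=o(1)$. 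Thus $|P\ket{\psi_q}|$ is below any desired constant once $\delta$ is small and $N$ is large, which gives the claimed $\Omega(N^{3/7})$ lower bound.

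The main obstacle is the one-step estimate used in the second paragraph. One must carry the case analysis out rigorously inside the compressed oracle — handling the $\ket{\bot}$ symbol and the fact that re-querying an already-recorded point does not resample uniformly, and checking that the $h_\ell$-collisions invoked when $x\in\{x_1,x_2\}$ genuinely survive the update — and one must make the notion of "distinct $2$\---collisions" used here match exactly the convention behind \Cref{lemma:j2col}, so that the tail bound of that lemma and the choice $m=\Theta(N^{1/7})$ fit together. The remainder is the bookkeeping above, which mirrors the multi-collision lower bound of \cite{EC:LiuZha19}.
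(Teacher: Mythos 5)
Your proposal is correct and follows essentially the same approach as the paper: Zhandry's compressed oracle, the one-step recursion \Cref{lemma:recursive} with \Cref{lemma:commute}, a threshold on the number of recorded $h_\ell$-collisions combined with the multi-collision tail bound of \Cref{lemma:j2col}, and a telescoping sum balanced at $q\approx N^{3/7}$. The only organizational differences are cosmetic: you project onto a fixed-threshold event $E_m$ with $m\approx N^{1/7}$ before the query step, whereas the paper keeps the collision-count sum $\sum_\ell(\ell/N)\colh{j}{\ell}^2$ inside the recursion and splits it after telescoping at a $j$-dependent threshold $\mu_3(j)=\max\{2ej^{3/2}/\sqrt N,\,10N^{1/8}\}$ — both choices are equivalent near the critical query count; your enumeration of how one query can create a new $2$\---RSC also makes the $x=x_1=x_2$ case explicit, which the paper's three bullets fold (harmlessly, since it is dominated) into the $x=x_0$ count.
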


In order to prove this theorem, we first introduce some database properties:
\begin{itemize}

\item $P'_{\ell-col-h_1}$ corresponds to the set of databases that contain \emph{at least} $\ell$ \emph{distinct} collisions on $h_1$.\footnote{We do not define the equivalent property for $h_2$. Since both $h_1$ and $h_2$ are random functions, we can swap them when considering database property by symmetry, thus we do not need to define more unnecessary properties.}
As explained in the previous section, here we will use the fact that we cannot reach a database containing $\ell+2$ or more collisions from a database containing $\ell$ collisions by making a single query:
$$
P'_{\ell-col-h_1} = \left\{D \in \mathcal{D} \middle\vert \begin{aligned} &\exists x_1, \dots, x_\ell, y_1, \dots, y_\ell, \forall i, h_1(x_i) = h_1(y_i) \neq \bot\\
    &\forall i, x_i \neq y_i \\ 
      &\forall i \neq j, h_1(x_i) \neq h_1(x_j)
      \end{aligned} \right\}
$$
\item $P_{\ell-col-h_1}$ corresponds to the set of databases that contain \emph{exactly} $\ell$ \emph{distinct} collisions on $h_1$:
$$
P_{\ell-col-h_1} = P'_{\ell-col-h_1} \cap \neg P'_{(\ell+1)-col-h_1} 
$$
\item $P_{preimage-h_1}$ corresponds to the set of databases that contain a preimage of $0$:\footnote{Note that the amplitude of finding \emph{any} preimage is the same as the amplitude of finding the preimage of $0$.}
$$
P_{preimage-h_1} = \left\{D \in \mathcal{D} \vert \exists x,  h_1(x) = 0 \right\}.
$$
\end{itemize}
Finally, for $i,\ell \in \mathbb{N}$, we write:
\begin{equation}
  \label{equation:fil}
  \colh{i}{\ell} = \left|P_{\ell-col-h_1}\ket{\stateiafter}\right|,\acolh{i}{\ell} = \left|P'_{\ell-col-h_1}\ket{\stateiafter}\right|, \tworsc{i} = \left|P_{1,2}^{RSC} \ket{\stateiafter}\right|,
\end{equation}
where $\ket{\stateiafter}$ is the state just after the $i^{th}$ query to $H = (h_1,h_2)$ and $P_{1,2}^{RSC}$ was defined in \Cref{eq:Plk}.
For convenience, we write $P_2 = P_{1,2}^{RSC}$ in this section.

The goal here is to bound the term $\tworsc{i}$, and to achieve this we first prove a recursive formula that involves $\colh{i}{\ell}$ as well:

\begin{lemma}
  \label{lemma:rec2rsc2}
  For every $i \in \mathbb{N}$, we have that:
  \begin{equation}
    \label{equation:rec2rsc2}
    \tworsc{i} \leq \tworsc{i-1} + \sqrt{2\sum_{\ell\geq0} \frac{\ell}{N} \colh{i-1}{\ell}^2} + \new{4}\frac{i-1}{N}.
  \end{equation}
\end{lemma}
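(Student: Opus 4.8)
The plan is to apply the recursive Lemma~\ref{lemma:recursive} with the projector $P = P_2 = P^{RSC}_{1,2}$, at the $i$-th step of $A$'s computation. This immediately gives
$$
\tworsc{i} = |P_2\ket{\stateiafter}| \leq |P_2\ket{\stateibefore}| + |P_2\, cO\, (I-P_2)\ket{\stateibefore}|.
$$
By Lemma~\ref{lemma:commute}, and since $\ket{\stateibefore}$ differs from $\ket{\stateiminusoneafter}$ only by a unitary $U_i$ acting on the $x,y,z$ registers, we have $|P_2\ket{\stateibefore}| = |P_2\ket{\stateiminusoneafter}| = \tworsc{i-1}$, which accounts for the first term on the right-hand side of \eqref{equation:rec2rsc2}. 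It remains to bound the ``fresh'' contribution $|P_2\, cO\, (I-P_2)\ket{\stateibefore}|$: this is the amplitude of reaching a database containing a $2$\---RSC with the $i$-th query, starting from a database that does \emph{not} already contain one.

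The core of the argument is an analysis of which $(I-P_2)$ databases can be moved into $P_2$ by a single application of $cO$. Write $\ket{\stateibefore} = \sum_{x,y,z,D}\alpha_{x,y,z,D}\ket{x,y,z,D}$ restricted to the $(I-P_2)$ part. A single $cO$ query on register $x$ only changes the value the database assigns to that one input $x$ (it adds, removes, or modifies the pair $(h_1(x),h_2(x))$, with amplitudes controlled by the compressed-oracle transition). For the resulting database to contain a $2$\---RSC, there are essentially two ways this can happen, which I will split the new amplitude into (up to the triangle inequality):
\begin{enumerate}
\item The query creates $x_0$ together with matching $h_1$- and $h_2$-values already present in $D$, i.e.\ $D$ already contained an element $x_1$ with $h_1(x_1)$ equal to the (freshly sampled) $h_1(x_0)$ and an element $x_2$ with $h_2(x_2)$ equal to the freshly sampled $h_2(x_0)$; or symmetrically the query creates the ``anchor'' element while the images already sit in $D$. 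This requires $D$ to contain roughly $\ell$ distinct collisions on $h_1$ (and the symmetric situation on $h_2$) so that the freshly sampled value has a decent chance of landing on an existing collision value; summing over $\ell$ and using that the fresh value is uniform over $N$ slots produces the factor $\sqrt{2\sum_{\ell\ge0}\frac{\ell}{N}\colh{i-1}{\ell}^2}$ (the $2$ absorbing the $h_1\leftrightarrow h_2$ symmetry, the $\ell/N$ being the probability that a uniform fresh image coincides with one of the $\ell$ recorded collision values, and the squared $f$-terms coming from decomposing $\ket{\stateiminusoneafter}$ over the $P_{\ell-col-h_1}$ subspaces, which partition $\mathcal{D}$).
\item The query hits the preimage of a specific value needed to complete the cover directly — the ``degenerate'' case where no prior collision is used, only a single fresh preimage coincidence is needed. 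After $i-1$ queries the database has at most $i-1$ recorded entries, so by the standard compressed-oracle preimage bound the amplitude of this event is at most $\frac{i-1}{N}$ (this is the analogue of Zhandry's $\sqrt{(i-1)/N}$-type preimage estimates, squared appropriately in the bookkeeping to land on $(i-1)/N$; it corresponds to $|P_{preimage-h_1}cO(I-P_{preimage-h_1})\ket{\cdot}|$-style terms), giving the last summand.
\end{enumerate}
Combining the two contributions via the triangle inequality with the $\tworsc{i-1}$ term yields \eqref{equation:rec2rsc2}.

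The main obstacle is the precise casework and amplitude bookkeeping in step~(1): one must carefully enumerate the transitions of $cO$ on a single coordinate (insertion of a fresh pair, modification of an existing pair, deletion), verify that only the ``fresh image coincides with an existing recorded $h_1$- or $h_2$-collision value'' transitions can manufacture a new $2$\---RSC out of an $(I-P_2)$ database, and then use the orthogonal decomposition $I = \sum_{\ell\ge0}P_{\ell-col-h_1}$ together with $|P_{\ell-col-h_1}\ket{\stateiminusoneafter}| = \colh{i-1}{\ell}$ to pull out the $\sqrt{2\sum_\ell \frac{\ell}{N}\colh{i-1}{\ell}^2}$ factor. This requires being scrupulous that the ``$\ell$ distinct'' condition with the extra uniqueness clause in \eqref{eq:Plk} is exactly what guarantees that a single query cannot upgrade more than one collision at a time, so that the per-$\ell$ weight $\ell/N$ (and not something larger) is correct. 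The preimage bound in step~(2) is comparatively routine given Zhandry's lemma, as is the commutation argument for the first term.
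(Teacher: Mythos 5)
Your high-level skeleton is right: apply Lemma~\ref{lemma:recursive} with $P = P_{1,2}^{RSC}$, use Lemma~\ref{lemma:commute} to reduce $|P_2\ket{\stateibefore}|$ to $\tworsc{i-1}$, and bound the fresh amplitude $|P_2\,cO\,(I-P_2)\ket{\stateibefore}|$ by a case analysis on the role the newly queried input plays in the $2$\---RSC that appears after the database update. But the case analysis is the substance of this lemma, and you have the two contributions attached to the wrong cases.

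Consider a database $D$ with no $2$\---RSC and a query $x$; write $D_{u'} = D\oplus(x,u')$. If $D_{u'}$ contains a $2$\---RSC $(x_0,x_1,x_2)$, then $x$ must be one of $x_0,x_1,x_2$. If $x$ is the anchor $x_0$, the update must satisfy \emph{two} constraints simultaneously: $h_1(x)=h_1(x_1)$ for some recorded $x_1$ and $h_2(x)=h_2(x_2)$ for some recorded $x_2$ (which may equal $x_1$). After $i-1$ queries there are at most $i-1$ recorded entries, so each of $h_1(x)$ and $h_2(x)$ has at most $i-1$ favorable targets; since $u'$ fixes the pair $(h_1(x),h_2(x))$ jointly, at most $(i-1)^2$ of the $N^2$ possible $u'$ work, giving amplitude $(i-1)/N$. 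Contrary to your proposal, this case does \emph{not} require $D$ to contain any collision and does not produce the $\ell/N$-weighted sum; it is the source of the $(i-1)/N$ summand, not of the square-root term.

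If $x$ is a satellite, say $x=x_2$, then only \emph{one} fresh output matters: $h_2(x)=h_2(x_0)$. But this requires $D$ to already contain a collision, i.e.\ $x_0\neq x_1$ with $h_1(x_0)=h_1(x_1)$. If $D$ lies in $P_{\ell-col-h_1}$ (exactly $\ell$ distinct $h_1$\---collisions), the fresh $h_2(x)$ has $\ell$ favorable targets out of $N$, and $h_1(x)$ is irrelevant. Decomposing $(I-P_2)\ket{\stateibefore}$ over the orthogonal family $\{P_{\ell-col-h_1}\}_\ell$ and summing, together with the symmetric $x=x_1$ case (collisions on $h_2$, contributing the factor $2$), produces $\sqrt{2\sum_{\ell\geq0}(\ell/N)\colh{i-1}{\ell}^2}$. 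The uniqueness clause in $P'_{\ell-col-h_1}$ and in \eqref{eq:Plk} is what guarantees the weight is $\ell$ and not larger.

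Your second case (``the query hits the preimage of a specific value'') does not correspond to any of these transitions, and $P_{preimage-h_1}$, though defined in the paper, plays no role in this lemma. Once the cases are attributed correctly as above, the remaining bookkeeping --- passing from $\ket{\stateibefore}$ to $\ket{\stateiminusoneafter}$ via Lemma~\ref{lemma:commute} and assembling the inequality --- goes through as you indicated.
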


\begin{proof}
Let $i \in \mathbb{N}$. Let $\ket{\stateibefore}$ be the state just before the $i^{th}$ query to $H = (h_1,h_2)$, namely
\begin{align*}
  \ket{\stateibefore} = \sum_{x,\hat{y},z,D} \alpha_{x,\hat{y},z,D} \ket{x,\hat{y},z} \otimes \ket{D},
\end{align*}
where $x$ is the query register, $y$ is the answer register, $z$ is the work register and $D$ is the database register.
Let $\ket{\stateiafter}$ be the state right after the  $i^{th}$ query to $H$, namely
\begin{align*}
  \ket{\stateiafter} = \sum_{\substack{x,\hat{y},z,D \\ \new{D(x) = \bot}}} \frac{1}{\sqrt{N^2}}\sum_{y'} \omega_N^{yy'}\alpha_{x,\hat{y},z,D} \ket{x,\hat{y},z} \otimes \ket{D\cup(x,y')} \new{+ \cO \sum_{\substack{x,\hat{y},z,D \\ D(x) \neq \bot}} \alpha_{x,\hat{y},z,D} \ket{x,\hat{y},z} \otimes \ket{D}}.
\end{align*}

From Lemma \ref{lemma:recursive}, we have that:

\begin{equation}
  \label{equation:rec2rsc}
  \left|P_2\ket{\stateiafter}\right| \leq \left|P_2\ket{\stateibefore}\right| + \left|P_2 \cO (I - P_2)\ket{\stateibefore}\right|.
\end{equation}

We focus now on bounding the second term:
\begin{align}
  \nonumber\left|P_2 \cO (I - P_2) \ket{\stateibefore}\right| &= \left|P_2 \cO \sum_{\substack{x,\hat{y},z \\ D : \text{ no $2$\---RSC}}} \alpha_{x,\hat{y},z,D} \ket{x,\hat{y},z,D}\right| \\ 
  \nonumber&\leq \left|P_2 \sum_{\substack{x,\hat{y},z \\ D : \text{ no $2$\---RSC}\\\new{D(x) = \bot}}} \frac{1}{\sqrt{N^2}} \sum_{y'} \omega_N^{yy'} \alpha_{x,\hat{y},z,D} \ket{x,\hat{y},z,D \cup (x, y')}\right| \\
  \nonumber&\quad\quad\new{+ \left| P_2 \cO \sum_{\substack{x,\hat{y},z \\ D : \text{ no $2$\---RSC}\\\new{D(x) \neq \bot}}} \alpha_{x,\hat{y},z,D} \ket{x,\hat{y},z} \otimes \ket{D}\right|}
\end{align}

\new{The second term can be bounded by 

  \begin{align}
    \nonumber&\left| P_2 \cO \sum_{\substack{x,\hat{y},z \\ D : \text{ no $2$\---RSC}\\\new{D(x) \neq \bot}}} \alpha_{x,\hat{y},z,D} \ket{x,\hat{y},z} \otimes \ket{D}\right| \\
    \nonumber&= \left|P_2 \sum_{\substack{x,\hat{y},z \\ D : \text{ no $2$\---RSC}\\\new{D(x) \neq \bot}}} \frac{1}{N^2}\sum_{y'} \left(1 - \omega_N^{yy'} - \omega_N^{D(x)y}\right) \alpha_{x,\hat{y},z,D} \ket{x,\hat{y},z} \otimes \ket{D \cup (x,y')}\right| \\
    \nonumber&\leq 3 \left| \frac{1}{N^2}\sum_{y'}P_2 \sum_{\substack{x,\hat{y},z \\ D : \text{ no $2$\---RSC}\\\new{D(x) \neq \bot}}} \alpha_{x,\hat{y},z,D} \ket{x,\hat{y},z} \otimes \ket{D \cup (x,y')}\right| \\
    \label{eq:boundnotbot}&\leq \frac{3 (i-1)}{N},
    \end{align}
    where the first inequality comes from Lemma \ref{lemma:matrixCO} and the fact that if the new value in the $x$ register is $\ket{\bot}$ or stays the same, then there is still no 2-RSC in $D$. The second inequality comes from the triangular inequality and the last inequality comes from using the triangular inequality and the fact that there is at most $(i-1)$ values in $D$ such that $D(x) \neq \bot$.}
  
\new{For bounding the first term,} we analyse now the possibilities for achieving a $2$\---RSC, considering the different cases of the inner sum.
We have four possible ways to get from $D$ that does not have a $2$\---RSC to $D_{y'} := D \cup (x,y')$ that has a $2$\---RSC.
\begin{itemize}
\item ($x=x_2$) Here, we consider the case where there exists an $x_0$ and $x_1$ such that $h_1(x_0) = h_1(x_1)$ and we query $x$ such that $h_2(x_0) = h_2(x)$. If we have found $\ell$ collisions of $h_1$ in $D$, then $\ell$ values of $y'$ can make $D_{y'}$ contain a $2$\---RSC, out of the $N$ possible values for the outcome of $h_2$ (notice that the value of $h_1(x)$ is not relevant for this case).
\item ($x=x_1$) Similar to the previous case, but swapping the roles of $h_1$ and $h_2$.
\item ($x=x_0$) 
Otherwise, we consider the case where we query $x$ such that we have $x_1$ and $x_2$ (which might be equal), such that $h_1(x) = h_1(x_1)$ and $h_2(x) = h_2(x_2)$. Only $i-1$ values of $y'$ will make $D_{y'}$ contain a collision on $h_1$.
  Similarly, only $i-1$ values of $y'$ will make $D_{y'}$ contain a collision on $h_2$.
\end{itemize}

Thus, we have
\begin{align}\label{eq:projection-2rsc-app}
  \left|P_2 \cO (I - P_2) \ket{\stateibefore}\right|  \leq \left(2 \cdot \sum_{\ell\geq0} \frac{\ell}{N} \left|P_{\ell-col-h_1}\ket{\stateibefore}\right|^2\right)^{1/2}
    + \new{4} \frac{(i-1)}{N},
\end{align}
and we give the details on \Cref{eq:projection-2rsc-app} in \Cref{sec:projection-2rsc-app}.

Let $\ket{\stateiminusoneafter}$ be the state just after the $(i-1)^{th}$ query, and let $U_i$ be the unitary such that $\ket{\stateibefore} = (U_i \otimes I) \ket{\stateiminusoneafter}$ (see \Cref{equation:algorithm}).
Note that we also have $\ket{\stateiafter} = \cO \cdot (U_i \otimes I) \ket{\stateiminusoneafter}$. 
Using Lemma \ref{lemma:commute}, we get that:

\begin{align}
  \nonumber\left|P_2 \cO (I - P_2) \ket{\stateibefore}\right| &\leq \sqrt{ 2 \sum_{\ell\geq0} \frac{\ell}{N} \left|P_{\ell-col-h_1}(U_i \otimes I) \ket{\stateiminusoneafter}\right|^2 } + \new{4}\frac{i-1}{N} \\
                                             &\leq \sqrt{ 2 \sum_{\ell\geq0} \frac{\ell}{N} \left|P_{\ell-col-h_1} \ket{\stateiminusoneafter}\right|^2 } + \new{4}\frac{i-1}{N}. \label{eq1}
\end{align}

Similarly, using Lemma \ref{lemma:commute}:

\begin{align}
  \label{eq2}\left|P_2\ket{\stateibefore}\right| = \left|P_2\left(U_i\otimes I\right)\ket{\stateiminusoneafter}\right| = \left|P_2\ket{\stateiminusoneafter}\right|.
\end{align}

Then, using \Cref{equation:rec2rsc}, \Cref{eq1} and \Cref{eq2}, and the notation from \Cref{equation:fil}, we have:

\begin{align*}
  \tworsc{i} \leq \tworsc{i-1} + \sqrt{2\sum_{\ell\geq0} \frac{\ell}{N} \colh{i-1}{\ell}^2} + \new{4}\frac{i-1}{N}.
\end{align*}
\end{proof}

We will now expand this recursive formula to obtain a bound on $\tworsc{i}$.
\begin{lemma}
  \label{lemma:bound2rsc}
  For every $i \in \mathbb{N}$, we have that:
  \begin{align*}
    \tworsc{i} \leq \sqrt{2}\sum_{j=1}^{i-1}\sqrt{\frac{\mu_3(j)}{N}} + \sqrt{2}\cdot 2^{-9.5N^{1/8}} + \new{4}\frac{i^2}{N},
  \end{align*}
  where
  $$\mu_3(j) = \max\left\{{\new{8}e\frac{j^{3/2}}{\sqrt{N}}, 10N^{1/8}}\right\}.$$
\end{lemma}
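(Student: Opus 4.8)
The plan is to iterate the recursive inequality from Lemma~\ref{lemma:rec2rsc2} and control the sum $\sum_{\ell\ge 0}\frac{\ell}{N}\colh{i-1}{\ell}^2$ by splitting it according to whether $\ell$ is ``small'' or ``large''. First I would unroll the recursion: since $\tworsc{0}=0$, summing \eqref{equation:rec2rsc2} telescopically gives
\[
\tworsc{i}\;\le\;\sqrt{2}\sum_{j=1}^{i-1}\sqrt{\sum_{\ell\ge 0}\frac{\ell}{N}\colh{j}{\ell}^2}\;+\;\sum_{j=1}^{i-1}\frac{j}{N}\;\le\;\sqrt{2}\sum_{j=1}^{i-1}\sqrt{\sum_{\ell\ge 0}\frac{\ell}{N}\colh{j}{\ell}^2}\;+\;\frac{i^2}{N},
\]
using $\sum_{j<i}j\le i^2$. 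So everything reduces to bounding $S_j:=\sum_{\ell\ge 0}\ell\,\colh{j}{\ell}^2$ by something of order $\mu_3(j)$ (up to the additive $2^{-19N^{1/8}}$-type term that will produce the middle summand after taking square roots).

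The key step is the dichotomy on $\ell$. Set the threshold $\ell_0=10N^{1/8}$ (this is exactly the second argument of $\mu_3$). For $\ell\le\ell_0$, I would use that the $\colh{j}{\ell}$ are amplitudes of mutually orthogonal projectors (the ``exactly $\ell$ collisions'' properties $P_{\ell\text{-}col\text{-}h_1}$ partition the space), so $\sum_\ell\colh{j}{\ell}^2\le 1$, and moreover the expected number of collisions is controlled: $\sum_\ell\ell\,\colh{j}{\ell}^2\le \acolh{j}{1}^2\cdot(\text{something})$ — more precisely I would bound $\sum_{\ell\le\ell_0}\ell\,\colh{j}{\ell}^2\le \ell_0\cdot\acolh{j}{1}^2$ and invoke Lemma~\ref{lemma:j2col} with $j'=1$, giving $\acolh{j}{1}^2\le \big(e j^{3/2}/\sqrt N\big)^2$; combined this is $\le \ell_0\cdot(e j^{3/2}/\sqrt N)^2$, which is at most $\mu_3(j)\cdot 2e\, j^{3/2}/\sqrt N$ up to constants — one has to be slightly more careful to land exactly on $\max\{2ej^{3/2}/\sqrt N,\ell_0\}$, by instead bounding $\sum_{\ell\le\ell_0}\ell\,\colh{j}{\ell}^2\le \ell_0$ in the regime where $\acolh{j}{1}$ is not small, and by $\sum_\ell \ell\,\colh{j}{\ell}^2 \le 2e j^{3/2}/\sqrt N \cdot \sum_\ell \colh{j}{\ell}^2$ otherwise. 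For $\ell>\ell_0$, I would use the tail bound from Lemma~\ref{lemma:j2col}: $\colh{j}{\ell}\le\acolh{j}{\ell}\le\big(e j^{3/2}/(\ell\sqrt N)\big)^\ell$, so $\sum_{\ell>\ell_0}\ell\,\colh{j}{\ell}^2\le\sum_{\ell>\ell_0}\ell\,\big(e j^{3/2}/(\ell\sqrt N)\big)^{2\ell}$; since $j\le i$ and the relevant range has $i$ polynomially bounded (in the regime where the bound is nontrivial $i\lesssim N^{3/7}$), each term is at most a constant raised to the power $-\ell$, and the geometric-type tail starting at $\ell_0=10N^{1/8}$ is dominated by $2^{-c N^{1/8}}$ for a suitable constant $c$; chasing the constants through the square root and the outer sum over $j$ (there are at most $i\le N$ terms, contributing another $\log$ factor absorbed into the exponent) yields the stated $\sqrt2\cdot 2^{-9.5N^{1/8}}$ term.

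Putting the two pieces together gives $S_j\le \mu_3(j)+(\text{exponentially small})$, hence $\sqrt{S_j/N}\le\sqrt{\mu_3(j)/N}+(\text{exp.\ small})$, and summing over $j<i$ produces the claimed bound. The main obstacle I anticipate is the bookkeeping in the large-$\ell$ tail: one must verify that the assumption ``$i$ is in the meaningful range'' (equivalently, that the lemma is only asserted for $i$ where the bound is below $1$, or else that it holds trivially) makes $e j^{3/2}/(\ell\sqrt N)<1/2$ for all $\ell>\ell_0$, so that the tail genuinely decays doubly-geometrically and the clean exponent $9.5N^{1/8}$ comes out after accounting for the $\sqrt2$, the outer sum, and the factor $\ell$ inside the sum. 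The rest — the telescoping and the small-$\ell$ estimate — is routine given Lemmas~\ref{lemma:rec2rsc2} and~\ref{lemma:j2col}.
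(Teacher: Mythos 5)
Your overall skeleton is right: unroll \Cref{lemma:rec2rsc2}, bound $\sum_{j<i} j/N \le i^2/N$, and split the inner sum $\sum_{\ell\ge 0}\frac{\ell}{N}\colh{j}{\ell}^2$ at a threshold, bounding the head trivially and the tail via \Cref{lemma:j2col}. That is exactly what the paper does. But the threshold you choose, the fixed $\ell_0 = 10N^{1/8}$, does not work, and the issue you flag at the end (``one must verify that $ej^{3/2}/(\ell\sqrt N)<1/2$ for all $\ell>\ell_0$'') is not a bookkeeping chore — it is false, and it is precisely why the paper's $\mu_3(j)$ is $j$\emph{-dependent}.

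Concretely, take $j$ near the top of the relevant range (the proof only needs $i \le N^{1/2}$, and even restricting to $i \lesssim N^{3/7}$ does not help): for $j \gtrsim N^{5/12}$ you have $2e\,j^{3/2}/\sqrt N > 10 N^{1/8} = \ell_0$, so for $\ell$ just above $\ell_0$ the ratio $e j^{3/2}/(\ell\sqrt N)$ is $\ge 1/2$ and the bound $\acolh{j}{\ell}\le \big(e j^{3/2}/(\ell\sqrt N)\big)^\ell$ is $\ge 1$, hence useless. The sub-tail $\ell_0 < \ell \le 2ej^{3/2}/\sqrt N$ is not exponentially small, and bounding it by $1$ termwise gives roughly $(2ej^{3/2}/\sqrt N)^2$ for $S_j$, which is much larger than the target $\mu_3(j)$. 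So with a fixed threshold the argument breaks for large $j$. The fix is exactly what the paper does: split at $\mu_3(j)=\max\{2e j^{3/2}/\sqrt N,\,10N^{1/8}\}$, so that the tail starts only where $ej^{3/2}/(\mu_3(j)\sqrt N)\le 1/2$ holds by construction, and the head is bounded trivially by $\mu_3(j)/N$ using $\ell\le\mu_3(j)$ and $\sum_\ell\colh{j}{\ell}^2\le 1$. Incidentally, your detour through $\acolh{j}{1}$ for the head is both unnecessary and does not cleanly yield $\le\mu_3(j)$ (e.g.\ for small $j$ it overshoots by a factor $N^{1/4}$); the trivial bound is what you want.
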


\begin{proof}
  From Lemma \ref{lemma:rec2rsc2}, we expand recursively \Cref{equation:rec2rsc2}, and obtain (using that $\tworsc{0} = 0$):

  \begin{equation}
  \label{eq:gi}
    \tworsc{i} \leq \sum_{j=1}^{i-1} \sqrt{2\sum_{\ell\geq0} \frac{\ell}{N} \colh{j}{\ell}^2} + \new{4}\sum_{j=1}^{i-1}\frac{j}{N}.
  \end{equation}
  The second term of \Cref{eq:gi} can be bounded by
  \begin{equation}
  \label{eq:i2n}
     \new{4}\sum_{j=1}^{i-1}\frac{j}{N} \leq \new{4}\sum_{j=1}^{i-1}\frac{i}{N}
    \leq  \new{4}\frac{i^2}{N}.
  \end{equation}
  As for the first term of \Cref{eq:gi}, we have:

  \begin{align}
    \nonumber\sum_{j=1}^{i-1} \sqrt{2\sum_{\ell\geq0} \frac{\ell}{N} \colh{j}{\ell}^2} &= \sqrt{2} \sum_{j=1}^{i-1} \sqrt{\sum_{\ell=0}^{\mu_3(j)} \frac{\ell}{N} \colh{j}{\ell}^2 + \sum_{\ell > \mu_3(j)} \frac{\ell}{N} \colh{j}{\ell}^2} \\
                                                                 \nonumber &\leq \sqrt{2} \sum_{j=1}^{i-1} \left(\sqrt{\sum_{\ell=0}^{\mu_3(j)} \frac{\ell}{N} \colh{j}{\ell}^2}
                                                                   + \sqrt{\sum_{\ell > \mu_3(j)} 1\cdot \colh{j}{\ell}^2}\right) \\
                                                                 \nonumber &\leq \sqrt{2} \sum_{j=1}^{i-1} \left(\sqrt{\frac{\mu_3(j)}{N}\cdot \acolh{j}{1}}
                                                                   + \acolh{j}{\mu_3(j)}\right) \\
                                                                 \label{eq:fcol}&\leq \sqrt{2} \left(\sum_{j=1}^{i-1} \sqrt{\frac{\mu_3(j)}{N}}
                                                                   + \sum_{j=1}^{i-1} \acolh{j}{\mu_3(j)}\right)
  \end{align}
  where in the second inequality, we used the fact that the term $\sum_{\ell>\mu_3(j)}\colh{j}{\ell}^2$ is equal to the amplitude of finding \emph{at least} $\mu_3(j)$ distinct $\ell$\---collisions on $h_1$, thus is exactly equal to $\acolh{j}{\mu_3(j)}^{2}$ (defined in \Cref{equation:fil}), and similarly for $\acolh{j}{1}$.
    
  It follows that
  \begin{equation}
  \label{eq:expnegl}
    \sum_{j=1}^{i-1}\acolh{j}{\mu_3(j)} \leq \sum_{j=1}^{i-1}\left(\frac{\new{4}e\cdot j^{3/2}}{\mu_3(j)\cdot\sqrt{N}}\right)^{\mu_3(j)} %
                                    \leq \sum_{j=1}^{i-1} \left(\frac{1}{2}\right)^{10N^{1/8}}
                                    \leq 2^{-9.5N^{1/8}},
  \end{equation}
  where the first inequality comes from Lemma \ref{lemma:j2col}, the second inequality comes from the definition of $\mu_3(j)$ and in the last inequality we assume that $i\leq N^{1/2}$. Indeed, otherwise $A$ can execute \cite{subres}'s algorithm whose query complexity for finding a $k$\---RSC is upper-bounded by $O\left(N^{1/2}\right)$.

Putting together \Cref{eq:gi}, \Cref{eq:i2n}, \Cref{eq:fcol} and \Cref{eq:expnegl} gives the result.
\end{proof}

We can now use Lemma \ref{lemma:bound2rsc} to prove Theorem \ref{theorem:2rsc}

\begin{proof}[Proof of Theorem \ref{theorem:2rsc}]
  Using Lemma \ref{lemma:bound2rsc}, we have for $i \in \mathbb{N}$:

$$\tworsc{i} \leq \sqrt{2}\sum_{j=1}^{i-1}\sqrt{\frac{\mu_3(j)}{N}} + \sqrt{2}\cdot 2^{-9.5N^{1/8}} + \new{4} \frac{i^2}{N}.$$  

We can bound the first term by:
\begin{align*}
  \sqrt{2}\sum_{j=1}^{i-1}\sqrt{\frac{\mu_3(j)}{N}} &= \sqrt{2}\left(\sum_{j : \mu_3(j) = \new{8}e\cdot\frac{j^{3/2}}{\sqrt{N}}}\frac{\sqrt{\new{8}ej^{3/2}}}{N^{3/4}} + \sum_{j : \mu_3(j) = 10N^{1/8}} \frac{\sqrt{10N^{1/8}}}{N^{1/2}}\right) \\
                                                    &\leq \sqrt{2}\left(\sum_{j=1}^{i-1}\frac{\sqrt{\new{8}ej^{3/2}}}{N^{3/4}} + \sum_{j : \mu_3(j) = 10N^{1/8}} \frac{\sqrt{10N^{1/8}}}{N^{1/2}}\right) \\
                                                    &\leq \new{4}\sqrt{e}\frac{i^{7/4}}{N^{3/4}} + \left(\frac{10}{\new{8}e}\right)^{2/3} \cdot N^{5/12} \cdot \frac{\sqrt{10N^{1/8}}}{N^{1/2}} \\
                                                    &\leq \new{4}\sqrt{e}\frac{i^{7/4}}{N^{3/4}} + O(N^{-1/48}),
\end{align*}
where the second inequality comes from counting the number of $j$ such that $\mu_3(j)=10N^{1/8}$, which is equal to the number of $j$ such that $\new{8}e\frac{j^{3/2}}{\sqrt{N}} \leq 10N^{1/8}$.

Thus, we have the following bound on $\tworsc{i}$:

$$\tworsc{i} \leq \new{4} \sqrt{e}\frac{i^{7/4}}{N^{3/4}} + \new{4}\frac{i^2}{N} + O(N^{-1/48}).$$

This bound is in the compressed oracle model, and using Lemma \ref{lemma:zhandry} we obtain the same bound in the random oracle model by putting the negligible term in the $O\left(N^{-1/48}\right)$.

So when $i = o(N^{3/7})$, we have $\tworsc{i} = o(1)$. Hence if we want $\tworsc{i}$ to be constant, i.e.\ not $o(1)$, we must have $i = \Omega\left(N^{3/7}\right)$.
\end{proof}

\subsection{Finding $k$ distinct $2$-restricted subset cover}
\label{subsection:k2rsc}

We are now interested in bounding the number of queries needed to find $k$ distinct triplets that satisfy a $2$\---RSC.
We have the following result:
\begin{theorem}
  \label{theorem:k2rsc}
  Given two random functions $h_1,h_2 : \mathcal{X} \rightarrow \mathcal{Y}$ where $N = |\mathcal{Y}|$, a quantum algorithm needs to make $\Omega(k^{4/7} \cdot N^{3/7})$ queries to $h_1$ and $h_2$ to find k distinct $2$\---RSC with constant probability, for any $k \leq N^{1/8}$.
\end{theorem}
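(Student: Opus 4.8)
The goal is to generalize Theorem~\ref{theorem:2rsc} from finding a single $2$\---RSC to finding $k$ distinct $2$\---RSC, and the natural strategy is to mimic the argument that went from a single $2$\---collision bound to the ``$\ell$ distinct $2$\---collisions'' bound of Lemma~\ref{lemma:j2col}. Concretely, I would set up the database property $P_{k,2}^{RSC}$ of containing $k$ distinct $2$\---RSC (the last line of \eqref{eq:Plk} guarantees that one oracle query promotes at most one $(k{-}1)$-family to a $k$-family), write $\srsc{i}{k} = |P_{k,2}^{RSC}\ket{\stateiafter}|$, and apply Lemma~\ref{lemma:recursive} to get $\srsc{i}{k} \leq \srsc{i-1}{k} + |P_{k,2}^{RSC}\, cO\, (I - P_{k,2}^{RSC})\ket{\stateibefore}|$. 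The new ingredient is that to jump from a database with only $k-1$ distinct $2$\---RSC to one with $k$ of them, the database must \emph{already} contain $k-1$ distinct $2$\---RSC \emph{and} a suitable collision structure on $h_1$ (or $h_2$, by symmetry): so the second term is controlled by $\sqrt{2\sum_{\ell}\frac{\ell}{N}\,|P_{\ell-col-h_1}\cap P_{k-1,2}^{RSC}\ket{\stateibefore}|^2} + \frac{i-1}{N}\,|P_{k-1,2}^{RSC}\ket{\stateibefore}|$, exactly paralleling Lemma~\ref{lemma:rec2rsc2} but with an extra intersection factor that forces the recursion down in $k$.

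**Key steps, in order.** First, prove the recursive inequality above, being careful in the case analysis ($x = x_2$, $x = x_1$, $x = x_0$) that the ``$k-1$ already present'' constraint is maintained — this is the analogue of the proof of Lemma~\ref{lemma:rec2rsc2}. Second, bound $|P_{\ell-col-h_1}\cap P_{k-1,2}^{RSC}\ket{\stateibefore}|$ by $|P_{\ell-col-h_1}\ket{\stateibefore}|$ (dropping the intersection is fine for an upper bound), which lets me reuse the collision amplitude estimates; alternatively, and more tightly, bound it by $\min$ of the two marginals. Third, unroll the recursion over $i$ exactly as in Lemma~\ref{lemma:bound2rsc}: sum over $j$, split the $\ell$-sum at the threshold $\mu_3(j)=\max\{2ej^{3/2}/\sqrt N,\,10N^{1/8}\}$, use Lemma~\ref{lemma:j2col} to kill the tail as a negligible $2^{-\Theta(N^{1/8})}$ term, and obtain a bound of the rough shape $\srsc{i}{k} \leq C\sum_{j=1}^{i-1}\sqrt{\mu_3(j)/N}\cdot\srsc{j-1}{k-1} + (\text{negligible}) + \frac{i^2}{N}\srsc{i-1}{k-1}$. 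Fourth, solve this nested recursion in $k$: since $\srsc{j}{k-1}$ is itself $O\big((\text{something})^{k-1}\big)$, carrying the induction through gives $\srsc{i}{k} = O\big( (c\, i^{7/4} / (k\, N^{3/4}))^{k}\big)$ up to lower-order terms — the factor $1/k$ appearing because, as in the multi-collision case, the $j$-th distinct copy costs an extra factor $\sim i^{7/4}/(j N^{3/4})$ and $\prod_{j=1}^k 1/j = 1/k!$, whose $k$-th root contributes the $k^{-1}$ in the statement. Setting this amplitude to be non-$o(1)$ forces $i = \Omega(k^{4/7} N^{3/7})$, which is the claimed bound; the hypothesis $k \leq N^{1/8}$ is what keeps the threshold-tail and the $i \leq N^{1/2}$ assumptions consistent.

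**Main obstacle.** The delicate point is the bookkeeping of the recursion in \emph{two} indices at once ($i$, the number of queries, and $k$, the number of distinct RSC required), making sure that when I bound $|P_{\ell-col-h_1}\cap P_{k-1,2}^{RSC}\ket{\stateibefore}|$ I do not lose the $k$-dependence I am trying to extract — i.e. I must keep the $\srsc{j-1}{k-1}$ factor rather than crudely bounding it by $1$, and then verify that the induction hypothesis for $k-1$ plugs in cleanly to yield the $1/k$ (equivalently $1/k!$ before taking roots, hence $k^{4/7}$ after). A secondary nuisance is confirming that the ``one query promotes at most one family'' property genuinely holds for $2$\---RSC with the uniqueness condition from \eqref{eq:Plk}, since the argument silently relies on it both in the case analysis and in identifying $\sum_{\ell>\mu_3(j)}\srsc{j}{\ell}^2$ with a clean ``at least $\mu_3(j)$ collisions'' amplitude; I would state this as a small explicit claim before the main computation.
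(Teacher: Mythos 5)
Your proposal follows essentially the same route as the paper: set up the recursion via Lemma~\ref{lemma:recursive} on $P_{k,2}^{RSC}$, split the $\ell$-sum at the threshold $\mu_3$, use Lemma~\ref{lemma:j2col} to kill the tail, keep the $g_{j,k-1}$ factor in the body of the sum, unroll the two-index recursion to get an $A_i^k/k!$ bound, and then extract the $k^{4/7}$ via $1/k! \leq (e/k)^k$ — this is exactly Lemmas~\ref{lemma:reck2rsc}, \ref{lemma:boundAi}, and~\ref{lemma:born2rsc} and the final Stirling step in the paper. One minor note: your step 2 first suggests dropping the intersection entirely and bounding by $|P_{\ell\text{-col-}h_1}\ket{\stateibefore}|$, which would indeed destroy the $k$-dependence; you correctly catch this and switch to keeping the $P_{k-1,2}^{RSC}$ marginal for small $\ell$ and the collision marginal only for the tail $\ell > \mu_3(j)$, which is precisely what the paper's proof does.
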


To prove this theorem, we first introduce some notation.
We denote $P_{2,k,\ell}$ the set of databases that satisfies $k$ distinct $2$\---RSC, and that contain exactly $\ell$ collisions on $h_1$.
Using the notation from the Section \ref{subsection:2rsc} and \Cref{eq:Plk}, we have that $P_{2,k,\ell} = P_{k,2}^{RSC} \cap P_{\ell-col-h_1}$.
We denote $\ktworsc{i}{k} = \left|P_{k,2}^{RSC}\ket{\stateiafter}\right|$ and $\ktworsclcol{i}{k}{\ell} = \left|P_{2,k,\ell}\ket{\stateiafter}\right|$, where $\ket{\stateiafter}$ is the state just after the $i^{th}$ query to $H = (h_1,h_2)$.

Our goal is to bound $\ktworsc{i}{k}$, and as in the previous subsection, we will first prove a recursive formula stated in the next lemma.
\begin{lemma}
  \label{lemma:reck2rsc}
  For every $i \in \mathbb{N}$, and every $k \in \mathbb{N}$, we have that:
$$ \ktworsc{i}{k} \leq \ktworsc{i-1}{k} + \sqrt{2 \sum_{\ell\geq0} \frac{\ell}{N} \ktworsclcol{i-1}{k-1}{\ell}^2} +  \frac{(i-1)}{N} \ktworsc{i-1}{k-1}.$$
\end{lemma}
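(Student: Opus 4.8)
The plan is to mirror the proof of \Cref{lemma:rec2rsc2}, carrying the extra parameter $k$ (the number of distinct $2$\---RSC required) through the same case analysis. First I would apply \Cref{lemma:recursive} with the projector $P_{k,2}^{RSC}$ to the state $\ket{\stateibefore}$ just before the $i^{th}$ query and $\ket{\stateiafter}$ just after it, giving
\[
\ktworsc{i}{k} \leq \left|P_{k,2}^{RSC}\ket{\stateibefore}\right| + \left|P_{k,2}^{RSC}\, cO\, (I - P_{k,2}^{RSC})\ket{\stateibefore}\right|.
\]
For the first term, write $\ket{\stateibefore} = (U_i \otimes I)\ket{\stateiminusoneafter}$ and invoke \Cref{lemma:commute} to get $\left|P_{k,2}^{RSC}\ket{\stateibefore}\right| = \ktworsc{i-1}{k}$.

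The work is in bounding the ``progress'' term. I would expand $(I - P_{k,2}^{RSC})\ket{\stateibefore}$ as a sum over databases $D$ that do \emph{not} contain $k$ distinct $2$\---RSC, apply $cO$ which sends $\ket{x,u,z,D}$ to $\frac{1}{N}\sum_{u'}\omega_n^{uu'}\ket{x,u,z, D\oplus(x,u')}$ (up to the compression corrections, which are handled exactly as before), and ask which $(x,u')$ push $D\oplus(x,u')$ into $P_{k,2}^{RSC}$. Because of the uniqueness-of-supporting-set condition in the last line of \eqref{eq:Plk}, one query can create at most one \emph{new} $2$\---RSC (this is the analogue of ``one query turns at most one $k$\---collision into a $(k{+}1)$\---collision''), so $D$ must already lie in $P_{k-1,2}^{RSC}$. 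Conditioned on that, the same three cases as in \Cref{lemma:rec2rsc2} arise: (i) $x=x_2$, where an existing $h_1$\---collision head $x_0$ forces $u'$ to match $h_2(x_0)$, so at most $\ell$ good values of $u'$ out of $N$ when $D$ has exactly $\ell$ distinct $h_1$\---collisions; (ii) $x=x_1$, symmetric with $h_1,h_2$ swapped, using the symmetry between the two random functions; and (iii) $x=x_0$ fresh, where each of the two new coordinates of $u'$ must match one of the at most $i-1$ entries already in $D$, giving at most $(i-1)^2$ good $u'$ out of $N^2$. Summing the contributions with the triangle inequality over the three cases and over $u'$, exactly as in \Cref{sec:projection-2rsc-app} but now with the amplitude restricted to databases already containing $k-1$ distinct $2$\---RSC, yields
\[
\left|P_{k,2}^{RSC}\, cO\, (I - P_{k,2}^{RSC})\ket{\stateibefore}\right| \leq \sqrt{2\sum_{\ell\geq0}\frac{\ell}{N}\left|P_{2,k-1,\ell}\ket{\stateibefore}\right|^2} + \frac{i-1}{N}\left|P_{k-1,2}^{RSC}\ket{\stateibefore}\right|.
\]
A final application of \Cref{lemma:commute} replaces $\ket{\stateibefore}$ by $\ket{\stateiminusoneafter}$ in each amplitude, turning them into $\ktworsclcol{i-1}{k-1}{\ell}$ and $\ktworsc{i-1}{k-1}$, which combines with the first term to give the stated inequality.

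The main obstacle is the same bookkeeping that is delicate in \cite{EC:LiuZha19}: rigorously justifying that a single query moves the database from ``$k-1$ distinct $2$\---RSC'' to at most ``$k$ distinct $2$\---RSC'' when the uniqueness condition on supporting sets is imposed, and then pulling the sum over $u'$ (and over the three cases) out of the norm via the triangle inequality without double-counting amplitude. In the counting of good $u'$ in case (i) one must note that bounding the number of usable $h_1$\---collisions by $\ell$ is already an over-count (it ignores whether a collision's image coincides with one of the $k-1$ existing supporting sets), so the bound is safe; the precise orthogonality argument underlying the triangle-inequality step is identical in structure to the $k=1$ computation deferred to \Cref{sec:projection-2rsc-app}, and I would simply point to it rather than repeat it.
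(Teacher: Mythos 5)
Your proof follows the paper's own argument step for step: apply \Cref{lemma:recursive} with the projector $P_{k,2}^{RSC}$, observe that the progress term is supported only on databases already in $P_{k-1,2}^{RSC}$, reuse the three-case analysis from \Cref{lemma:rec2rsc2} to extract the $\sqrt{2\sum_\ell (\ell/N)\,|P_{2,k-1,\ell}\cdot|^2}$ and $(i-1)/N$ contributions, and finish with \Cref{lemma:commute}. The paper is more terse (it simply writes the restricted sum over ``$D$: $k{-}1$ 2-RSC'' and says ``same cases as for $k=1$''), but the content matches your proposal, including the appeal to the uniqueness-of-supporting-set condition from \eqref{eq:Plk} that you correctly single out as the key step.
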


\begin{proof}
From Lemma \ref{lemma:recursive}, we have the following inequality:

$$\left|P_{k,2}^{RSC}\ket{\stateiafter}\right| \leq \left|P_{k,2}^{RSC}\ket{\stateibefore}\right| + \left|P_{k,2}^{RSC} \cO (I - P_{k,2}^{RSC}) \ket{\stateibefore}\right|.$$

And we have that:
\begin{align*}
  &\left|P_{k,2}^{RSC} \cO (I-P_{k,2}^{RSC})\ket{\stateibefore}\right| \\
  &\leq  \left|P_{k,2}^{RSC} \sum_{\substack{x,\hat{y},z \\ D : \text{k-1 2-RSC}\\ \new{D(x) = \bot}}} \frac{1}{\sqrt{N^2}} \sum_{y'} \omega_N^{yy'} \alpha_{x,\hat{y},z,D} \ket{x,\hat{y},z,D \cup (x, y')}\right|\new{+ 3 \frac{i-1}{N} \left|P_{k-1,2}^{RSC}\ket{\stateibefore}\right|^2}\\
  &\leq \left(2 \sum_{\ell\geq0} \frac{\ell}{N}\sum_{\substack{x,\hat{y},z \\ D : \text{k-1 2-RSC} \\ \text{$\ell$ collisions}\\\text{on $h_1$}}} \left|\alpha_{x,\hat{y},z,D}\right|^2 \right)^{1/2}
  + \left(\frac{(i-1)^2}{N^2}\sum_{\substack{x,\hat{y},z \\ D : \text{k-1 2-RSC}}} \left|\alpha_{x,\hat{y},z,D}\right|^2\right)^{1/2} \new{+ 3 \frac{i-1}{N} \left|P_{k-1,2}^{RSC}\ket{\stateibefore}\right|^2}\\
  &\leq \left(2 \sum_{\ell\geq0} \frac{\ell}{N} \left|P_{2,k-1,\ell}\ket{\stateibefore}\right|^2\right)^{1/2}
    \new{+ 4 \frac{i-1}{N} \left|P_{k-1,2}^{RSC}\ket{\stateibefore}\right|^2},
\end{align*}
\new{where the first inequality comes from the same calculations done to obtain \Cref{eq:boundnotbot}, and} the second equality uses the same cases as for the case $k=1$ in Lemma~\ref{lemma:rec2rsc2}.%

Using Lemma \ref{lemma:commute} and previous notation (as in Lemma \ref{lemma:rec2rsc2}), we obtain that:

\begin{align*}
  \ktworsc{i}{k}
          &\leq \ktworsc{i-1}{k}
            + \left(2 \sum_{\ell\geq0} \frac{\ell}{N} \ktworsclcol{i-1}{k-1}{\ell}^2\right)^{1/2}
            + \new{4} \frac{(i-1)}{N} \ktworsc{i-1}{k-1}.
\end{align*}
\end{proof}

Following the proof from  the case $k=1$, we will split the sum in two using $\mu_3(j)$ as a threshold.
We also define a new notation that will simplify expressions:
\begin{definition}
  $$A_i = \sum_{\ell=0}^{i-1}\sqrt{2}\left(\sqrt{\frac{\mu_3(\ell-1)}{N}} + \new{\sqrt{8}}\frac{\ell-1}{N}\right),$$
  where
  $$\mu_3(\ell) = \max\left\{{\new{8}e\frac{\ell^{3/2}}{\sqrt{N}}, 10N^{1/8}}\right\}.$$
\end{definition}

Before bounding $\ktworsc{i}{k}$, we first prove a bound on $A_i$.
\begin{lemma}
  \label{lemma:boundAi}
  For every $i \in \mathbb{N}$, we have that:
  $$ A_i \leq \new{8} \sqrt{e} \frac{i^{7/4}}{N^{3/4}} + \new{4} \frac{i^2}{N} + O\left(N^{-1/48}\right).$$
  It follows that $A_i < 2eN^{1/8}$ for $i \leq N^{1/2}$.
\end{lemma}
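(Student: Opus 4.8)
The plan is to bound $A_i$ by splitting the defining sum at the threshold where $\mu_3$ switches regimes, estimating each piece separately, and then bounding the sum of the $(\ell-1)/N$ terms trivially. Concretely, write $A_i = \sqrt{2}\sum_{\ell=1}^{i-1}\sqrt{\mu_3(\ell-1)/N} + \sqrt{2}\sum_{\ell=1}^{i-1}\frac{\ell-1}{N}$. The second sum is at most $\sqrt{2}\,\frac{(i-1)^2}{2N} \le \sqrt{2}\,\frac{i^2}{N}$, which is one of the three claimed terms (and incidentally I should double-check the factor: $\sum_{\ell=1}^{i-1}(\ell-1) = \binom{i-1}{2} \le i^2/2$, so the $\sqrt 2$ in front makes this $\le \sqrt 2 \, i^2/N$, matching).

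For the first sum, let $j_0$ be the cutoff: $\mu_3(\ell) = 10 N^{1/8}$ for $\ell \le j_0$ and $\mu_3(\ell) = 2e\,\ell^{3/2}/\sqrt N$ for $\ell > j_0$, where $j_0$ solves $2e\,j_0^{3/2}/\sqrt N = 10 N^{1/8}$, i.e. $j_0 = (5/e)^{2/3} N^{5/12}$. On the small-$\ell$ range I bound $\sqrt{\mu_3(\ell-1)/N} = \sqrt{10} N^{1/16 - 1/2} = \sqrt{10}\,N^{-7/16}$, and there are at most $j_0 = O(N^{5/12})$ such terms, giving a contribution $O(N^{5/12 - 7/16}) = O(N^{-1/48})$, exactly the error term claimed. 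On the large-$\ell$ range I bound $\sqrt{\mu_3(\ell-1)/N} \le \sqrt{2e}\,\frac{(\ell-1)^{3/4}}{N^{3/4}} \le \sqrt{2e}\,\frac{\ell^{3/4}}{N^{3/4}}$ and sum: $\sum_{\ell=1}^{i-1}\ell^{3/4} \le \int_0^i x^{3/4}\,dx = \frac{4}{7} i^{7/4}$, so this piece is at most $\sqrt 2 \cdot \sqrt{2e}\cdot \frac{4}{7}\cdot \frac{i^{7/4}}{N^{3/4}} = \frac{8\sqrt e}{7}\cdot\frac{i^{7/4}}{N^{3/4}} \le 2\sqrt e\,\frac{i^{7/4}}{N^{3/4}}$ (here I am folding the $\sqrt 2$ from the definition of $A_i$ into the estimate; I should be careful to track exactly where the $\sqrt 2$'s land, but the constant $2\sqrt e$ comfortably absorbs $8\sqrt e/7$). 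Combining the three pieces yields $A_i \le 2\sqrt e\,\frac{i^{7/4}}{N^{3/4}} + \sqrt 2\,\frac{i^2}{N} + O(N^{-1/48})$, which is exactly the first claim. This is essentially the same computation already carried out in the proof of Theorem~\ref{theorem:2rsc}, so I would either cite that or reproduce it compactly.

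For the second claim, I substitute $i \le N^{1/2}$ into the bound: $2\sqrt e\, \frac{(N^{1/2})^{7/4}}{N^{3/4}} = 2\sqrt e\, N^{7/8 - 3/4} = 2\sqrt e\, N^{1/8}$, and $\sqrt 2\,\frac{N}{N} = \sqrt 2 = o(N^{1/8})$, and the $O(N^{-1/48})$ term vanishes; so $A_i \le 2\sqrt e\, N^{1/8} + o(N^{1/8})$. Since $2\sqrt e \approx 3.30 < e \approx 2.72$ is false — wait, $2\sqrt e \approx 3.297$ and $e \approx 2.718$, so $2\sqrt e > e$, which means the naive substitution does not immediately give $A_i < 2e N^{1/8}$ with room to spare for the lower-order terms. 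I would therefore need to be slightly more careful: either use the sharper constant $8\sqrt e/7 \approx 1.88 < 2$ from the integral bound above (so the leading term is really $\le 2 N^{1/8}$, plus $o(N^{1/8})$, comfortably below $2e N^{1/8}$), or simply note that for $i \le N^{1/2}$ and $N$ large enough all lower-order terms are dominated and $2\sqrt e\, N^{1/8} + o(N^{1/8}) < 2e N^{1/8}$ since $2\sqrt e < 2e$. That last inequality, $2\sqrt e < 2e$, i.e. $1 < \sqrt e$, is true, so the claim holds; the only subtlety is making sure "$o(N^{1/8})$" is genuinely negligible, which follows because $\sqrt 2$ is a constant and $N^{-1/48} \to 0$.

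**The main obstacle** is bookkeeping the constants and the stray factors of $\sqrt 2$ precisely enough that the final bound lands below $2e N^{1/8}$ rather than merely $O(N^{1/8})$ — the margin between $2\sqrt e$ and $2e$ is not large, so I must either invoke the sharper $8\sqrt e/7$ constant or explicitly argue the lower-order terms are small for $N$ above some absolute threshold. Everything else (splitting at $j_0$, the integral estimates, counting the number of small-$\ell$ terms) is routine and parallels the computation already done for Theorem~\ref{theorem:2rsc}.
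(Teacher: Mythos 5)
Your proposal is correct and follows essentially the same route as the paper's proof (Appendix~\ref{sec:boundAi}): split the sum at the threshold $j_0=(5/e)^{2/3}N^{5/12}$ where $\mu_3$ switches regimes, bound the small-$\ell$ piece by counting terms to get $O(N^{-1/48})$, bound the large-$\ell$ piece by comparison with $\ell^{3/4}$ (the paper uses the crude $\sum\ell^{3/4}\leq i^{7/4}$ where you use the sharper integral $\tfrac{4}{7}i^{7/4}$, but both land under $2\sqrt{e}\,i^{7/4}/N^{3/4}$), and bound $\sum(\ell-1)/N$ trivially. For the second claim your momentary worry is unfounded: $2e-2\sqrt{e}\approx 2.1$, so the margin $\bigl(2e-2\sqrt{e}\bigr)N^{1/8}$ grows while the leftover terms $\sqrt{2}+O(N^{-1/48})$ stay bounded, and the conclusion holds for all $N\geq 1$.
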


We leave the proof of Lemma~\ref{lemma:boundAi} to \Cref{sec:boundAi}.
We can now state the lemma that bounds $\ktworsc{i}{k}$.

\begin{lemma}
  \label{lemma:born2rsc}
  For every $i \in \mathbb{N}$ and $k \in \mathbb{N}$, we have that:
  $$
  \ktworsc{i}{k} < \frac{A_i^k}{k!} + \sqrt{2}\cdot 2^{-N^{1/8}}.
  $$
\end{lemma}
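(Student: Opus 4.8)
The plan is to prove the bound $\ktworsc{i}{k} < A_i^k/k! + \sqrt{2}\cdot 2^{-N^{1/8}}$ by induction on $k$. The base case $k=1$ is essentially Lemma~\ref{lemma:bound2rsc}: there we showed $\tworsc{i} \leq \sqrt{2}\sum_{j=1}^{i-1}\sqrt{\mu_3(j)/N} + \sqrt{2}\cdot 2^{-9.5N^{1/8}} + i^2/N$, and since $\ktworsc{i}{1} = \tworsc{i}$ and $A_i = \sum_{\ell=1}^{i-1}\sqrt{2}(\sqrt{\mu_3(\ell-1)/N} + (\ell-1)/N) \geq \sqrt{2}\sum_{j=1}^{i-1}\sqrt{\mu_3(j)/N} + \sqrt{2}\sum_{j=1}^{i-1}(j-1)/N$, one checks that the $i^2/N$ error term in Lemma~\ref{lemma:bound2rsc} is absorbed into $A_i$ (up to the $\sqrt 2$ factor) and the $\sqrt 2 \cdot 2^{-9.5N^{1/8}}$ term is dominated by $\sqrt 2 \cdot 2^{-N^{1/8}}$. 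So $\ktworsc{i}{1} \leq A_i + \sqrt{2}\cdot 2^{-N^{1/8}}$, which is the claim for $k=1$.

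For the inductive step, assume the bound holds for $k-1$ and all $i$. Starting from the recursive formula of Lemma~\ref{lemma:reck2rsc},
$$
\ktworsc{i}{k} \leq \ktworsc{i-1}{k} + \sqrt{2\sum_{\ell\geq0}\tfrac{\ell}{N}\ktworsclcol{i-1}{k-1}{\ell}^2} + \tfrac{i-1}{N}\ktworsc{i-1}{k-1},
$$
I would unroll the recursion down to $\ktworsc{0}{k}=0$, obtaining $\ktworsc{i}{k} \leq \sum_{j=1}^{i-1}\left(\sqrt{2\sum_{\ell\geq0}\tfrac{\ell}{N}\ktworsclcol{j}{k-1}{\ell}^2} + \tfrac{j}{N}\ktworsc{j}{k-1}\right)$. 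Now for each $j$, split the $\ell$-sum at the threshold $\mu_3(j)$ exactly as in Lemma~\ref{lemma:bound2rsc}: the tail $\sum_{\ell > \mu_3(j)}\ktworsclcol{j}{k-1}{\ell}^2$ is the squared amplitude of a database holding $k-1$ distinct $2$-RSC together with \emph{at least} $\mu_3(j)$ collisions on $h_1$, hence is at most $\acolh{j}{\mu_3(j)}^2 \leq (1/2)^{20 N^{1/8}}$ by Lemma~\ref{lemma:j2col} and the choice of $\mu_3$ — this contributes the exponentially small slack. The head $\sum_{\ell\leq\mu_3(j)}\tfrac{\ell}{N}\ktworsclcol{j}{k-1}{\ell}^2 \leq \tfrac{\mu_3(j)}{N}\sum_{\ell}\ktworsclcol{j}{k-1}{\ell}^2 = \tfrac{\mu_3(j)}{N}\ktworsc{j}{k-1}^2$, using that the $P_{2,k-1,\ell}$ partition $P_{k-1,2}^{RSC}$ over $\ell$. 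Therefore, up to exponentially small terms, each summand is bounded by $\left(\sqrt{2\mu_3(j)/N} + \sqrt 2\, j/N\right)\ktworsc{j}{k-1}$ — note the factor $\sqrt 2$ is harmlessly inserted on the second term — which is precisely $(A_{j+1}-A_j)\,\ktworsc{j}{k-1}$ in telescoping form.

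Plugging in the inductive hypothesis $\ktworsc{j}{k-1} \leq A_j^{k-1}/(k-1)! + \sqrt 2 \cdot 2^{-N^{1/8}}$ and using that $A_j$ is nondecreasing in $j$, the main sum becomes at most $\sum_{j=1}^{i-1}(A_{j+1}-A_j)\cdot A_j^{k-1}/(k-1)!$, and a standard integral-comparison argument (monotone increments, $\int_0^{A_i} t^{k-1}\,dt = A_i^k/k$) gives $\sum_j (A_{j+1}-A_j)A_j^{k-1} \leq A_i^k/k$, so this term is at most $A_i^k/k! = A_i^k/(k\cdot(k-1)!)$. It remains to collect the error terms: the $\sqrt 2 \cdot 2^{-N^{1/8}}$ factors from the hypothesis multiplied by $A_{j+1}-A_j$ sum to at most $\sqrt 2 \cdot 2^{-N^{1/8}}\cdot A_i$, which together with the $\sum_j \acolh{j}{\mu_3(j)}$ contributions and the base-case slack must be shown to stay below $\sqrt 2 \cdot 2^{-N^{1/8}}$ overall; here one invokes $A_i < 2eN^{1/8}$ from Lemma~\ref{lemma:boundAi} and $i \leq N^{1/2}$ so that polynomial-in-$N$ prefactors are swallowed by the $2^{-\Omega(N^{1/8})}$ decay, with a little room to spare to beat the looser $2^{-9.5N^{1/8}}$ and $2^{-20N^{1/8}}$ exponents. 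The delicate point — and the one I expect to be the main obstacle — is exactly this bookkeeping of error terms across the induction: one has to be sure the accumulated exponentially small contributions, each possibly multiplied by $\mathrm{poly}(N)$ or by $A_i^{k}$ for $k$ up to $N^{1/8}$, still telescope into a single clean $\sqrt 2 \cdot 2^{-N^{1/8}}$, which is why the constants $9.5$, $10$ and $2e$ in $\mu_3$ were chosen with slack.
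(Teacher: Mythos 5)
Your overall plan — split at $\mu_3$, unroll the recursion in $i$, and bound the main sum by $A_i^k/k!$ via a Riemann/telescoping argument — is the right idea and is essentially what the paper does (via a full double expansion in both $i$ and $k$ rather than an explicit induction on $k$). The main term analysis is fine: $\sum_j (A_{j+1}-A_j)A_j^{k-1} \leq \int_0^{A_i}t^{k-1}\,dt = A_i^k/k$ is correct (there is an off-by-one in identifying the coefficient $\sqrt{2}(\sqrt{\mu_3(j)/N}+j/N)$ with $A_{j+1}-A_j$, but that is cosmetic).

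The gap is in the error bookkeeping, and it is not a matter of having ``a little room to spare'' — as stated, your induction does not close. Your inductive hypothesis carries the final error $\sqrt 2\cdot 2^{-N^{1/8}}$, and in the step you plug it into $\sum_{j}(A_{j+1}-A_j)\ktworsc{j}{k-1}$. The error contribution is then
\[
\sum_{j}(A_{j+1}-A_j)\cdot\sqrt{2}\cdot 2^{-N^{1/8}} = (A_i - A_1)\cdot\sqrt{2}\cdot 2^{-N^{1/8}},
\]
and by Lemma~\ref{lemma:boundAi} the factor $A_i$ can be as large as $2eN^{1/8}\gg 1$, so the error at level $k$ strictly exceeds the error at level $k-1$; a fortiori it cannot stay below $\sqrt 2\cdot 2^{-N^{1/8}}$. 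Adding the new $\sum_j\acolh{j}{\mu_3(j)}$ contribution makes things worse. So the invariant $\ktworsc{i}{k}\leq A_i^k/k! + \sqrt 2\cdot 2^{-N^{1/8}}$ is too tight to propagate.

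The fix is to prove a looser intermediate bound whose error \emph{is} stable under the recursion and only specialize at the end. Concretely, carry the invariant
\[
\ktworsc{i}{k} \;\leq\; \frac{A_i^k}{k!} + \sqrt 2\cdot\Big(\sum_{t=0}^{k-1}\frac{A_i^t}{t!}\Big)\cdot 2^{-9.5N^{1/8}} \;\leq\; \frac{A_i^k}{k!} + \sqrt 2\cdot e^{A_i}\cdot 2^{-9.5 N^{1/8}},
\]
which is exactly what the nested expansion produces (this is \Cref{equation:bound2sc} in the paper): the error at level $k$ picks up one extra power $A_i^{k-1}/(k-1)!$ from the $\mu_3$-tail at that level, and these powers sum to $e^{A_i}$, not to another copy of the fixed target. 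Only at the very end do you use $A_i < 2eN^{1/8}$ (Lemma~\ref{lemma:boundAi}), together with $2e\log_2 e < 9.5$, to get $e^{A_i}2^{-9.5N^{1/8}} < 2^{-N^{1/8}}$ and conclude. With this strengthened invariant your induction closes; without it, the delicate bookkeeping you flag as the main obstacle is in fact a genuine failure.
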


\begin{proof}

  We write $\acolh{i}{j} = \left|P'_{j-col-h_1}\ket{\stateibefore}\right|$. From Lemma \ref{lemma:reck2rsc}, we have that:
\begin{align}
  \nonumber\ktworsc{i}{k} &\leq \ktworsc{i-1}{k} + \sqrt{2 \sum_{\ell\geq0} \frac{\ell}{N} \cdot \ktworsclcol{i-1}{k-1}{\ell}^2} + \new{4}\frac{i-1}{N} \cdot \ktworsc{i-1}{k-1} \\
          \nonumber&\leq \ktworsc{i-1}{k} + \sqrt{2}\left(\sqrt{\frac{\mu_3(i-1)}{N}} \cdot \ktworsc{i-1}{k-1} + \acolh{i-1}{\mu_3(i-1)}\right) + \new{4} \frac{i-1}{N} \cdot \ktworsc{i-1}{k-1} \\
          \label{equation:beforebound2sc}&= \ktworsc{i-1}{k} + \sqrt{2}\left(\sqrt{\frac{\mu_3(i-1)}{N}} + \new{\sqrt{8}}\frac{i-1}{N}\right)\ktworsc{i-1}{k-1} + \sqrt{2}\cdot \acolh{i-1}{\mu_3(i-1)},
  \end{align}
  where the second inequality comes from separating the sum in two, similar to the proof of Lemma \ref{lemma:bound2rsc}.
  
  Following \cite{EC:LiuZha19}'s proof for Lemma 14, by expanding the recursion we get:

  \begin{equation}\label{equation:bound2sc}
  \ktworsc{i}{k} \leq \frac{A_i^k}{k!} + \sqrt{2} \cdot e^{A_i}  2^{9.5N^{1/8}}.
  \end{equation}

  For completeness, the proof of \Cref{equation:bound2sc} is given in Appendix \ref{appendix:recursion}. Using Lemma \ref{lemma:boundAi}, we can bound the second term, and:
  $$\ktworsc{i}{k} < \frac{A_i^k}{k!} + \sqrt{2}\cdot 2^{-N^{1/8}}.$$
\end{proof}

We can now prove the main theorem of this subsection.

\begin{proof}[Proof of Theorem \ref{theorem:k2rsc}]
  Following from Lemma \ref{lemma:born2rsc}, we have that:
  
  \begin{align*}
    \ktworsc{i}{k} &\leq \frac{A_i^k}{k!} + \sqrt{2}\cdot 2^{-{N^{1/8}}} \leq \left(\frac{A_i \cdot e}{k}\right)^k + \sqrt{2}\cdot 2^{-N^{1/8}}.
  \end{align*}

  We now use the bound on $A_i$ of Lemma \ref{lemma:boundAi}:
  $$
  \ktworsc{i}{k} \leq \left(\frac{\new{8}e^{3/2}}{k} \cdot \frac{i^{7/4}}{N^{3/4}} + \frac{\new{4}e}{k}\cdot\frac{i^2}{N} + \frac{e}{k} \cdot O\left(N^{-1/48}\right)\right)^k + \sqrt{2}\cdot 2^{-N^{1/8}}.
  $$
  So if $i=o(k^{4/7}\cdot N^{3/7})$, then $\ktworsc{i}{k}=o(1)$.
  Hence if we want $\ktworsc{i}{k}$ to be a constant, i.e.\ not $o(1)$, we must have $i=\Omega\left(k^{4/7}\cdot N^{3/7}\right)$.
\end{proof}

\subsection{Finding $k$ distinct $s$-restricted subset cover}

In this section, we generalize the result to the problem of finding $k$ distinct $s$\---RSC, for any $s \geq 3$ and any $k \geq 1$.
We are given $s$ random functions $h_1, \dots, h_s$ such that for any $i \in [1,s]$, $h_i : \mathcal{X} \rightarrow \mathcal{Y}$.
We will prove the following theorem.

\begin{theorem}
\label{theorem:krsc}
  Given $s$ random functions $h_1,\dots,h_s: \mathcal{X} \rightarrow \mathcal{Y}$ where $N=|\mathcal{Y}|$, a quantum algorithm needs to make $\Omega\left((s+1)^{-\frac{2^{s}}{2^{s+1}-1}}\cdot k^{\frac{2^{s}}{2^{s+1}-1}}\cdot N^{\frac{2^{s}-1}{2^{s+1}-1}}\right)$ queries to $h_1,\dots,h_s$ to find $k$ distinct $s$\---RSC with constant probability, for any $s\leq \log (\log (N))$ and any $k \geq N^{1/2^{s+1}}$.
\end{theorem}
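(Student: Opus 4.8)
The plan is to prove Theorem~\ref{theorem:krsc} by induction on $s$, mirroring the structure already set up for $s=2$ in Theorems~\ref{theorem:2rsc} and~\ref{theorem:k2rsc}. The base case $s=1$ is essentially preimage-finding (Grover), and $s=2$ is Theorem~\ref{theorem:k2rsc}. For the inductive step, I would fix $s\geq 3$, assume the bound holds for all $s'<s$, and introduce the analogue of the notation from Section~\ref{subsection:k2rsc}: let $P_{s,k,\ell}^{RSC}= P_{k,s}^{RSC}\cap P_{\ell,s-1}^{RSC}$ be the databases containing $k$ distinct $s$\---RSC \emph{and} exactly $\ell$ distinct $(s-1)$\---RSC (on the first $s-1$ functions), and set $\ksc{i}{k}=\bigl|P_{k,s}^{RSC}\ket{\stateiafter}\bigr|$ together with the refined quantity $\bigl|P_{s,k,\ell}^{RSC}\ket{\stateiafter}\bigr|$. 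The key structural fact to exploit — and the reason the last-line uniqueness condition in~\eqref{eq:Plk} was imposed — is that a single query can upgrade at most \emph{one} $(s-1)$\---RSC (with distinct supporting set) into a new $s$\---RSC: querying $x$ with $h_j(x)=h_j(x_{0,m})$ for the missing coordinate $j$, where $(x_{0,m},x_{1,m},\dots,x_{s-1,m})$ is one of the existing $(s-1)$\---RSC whose support equals $(h_1(x),\dots,h_{s-1}(x))$.

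The concrete steps would be: (1) Prove a recursive formula, the $s$-analogue of Lemma~\ref{lemma:reck2rsc}, of the shape
\[
\ksc{i}{k}\leq \ksc{i-1}{k} + \sqrt{2\sum_{\ell\geq 0}\frac{\ell}{N}\bigl|P_{s,k-1,\ell}^{RSC}\ket{\stateiminusoneafter}\bigr|^2} + \frac{i-1}{N}\,\ksc{i-1}{k-1},
\]
by applying Lemma~\ref{lemma:recursive}, decomposing the projection $P_{k,s}^{RSC}cO(I-P_{k,s}^{RSC})$ over the cases ``$x$ plays the role of $x_0$'' versus ``$x$ plays the role of some $x_j$ that completes the last missing coordinate of an existing $(s-1)$\---RSC'', and using Lemma~\ref{lemma:commute} to strip the unitary $U_i$. (2) Choose a threshold $\mu_s(j)$ analogous to $\mu_3(j)$, splitting $\sum_\ell$ at $\mu_s(j)$; for $\ell\leq\mu_s(j)$ bound $\frac{\ell}{N}\leq\frac{\mu_s(j)}{N}$ and factor out $\bigl|P_{k-1,s}^{RSC}\ket{\stateiminusoneafter}\bigr|=\ksc{i-1}{k-1}$, while for $\ell>\mu_s(j)$ bound the tail by $\bigl|P'_{\mu_s(j),s-1}\ket{\stateiminusoneafter}\bigr|$, i.e.\ the amplitude of finding $\mu_s(j)$ distinct $(s-1)$\---RSC, which by the induction hypothesis is bounded by something like $\bigl(c_{s-1}\, j^{\alpha_{s-1}}/(\mu_s(j)\cdot N^{\beta_{s-1}})\bigr)^{\mu_s(j)}$. (3) Pick $\mu_s(j)=\max\{c\cdot j^{\alpha_{s-1}}/N^{\beta_{s-1}},\ \Theta(N^{1/2^{s+1}})\}$ so that the tail term is doubly-exponentially small (negligible), obtaining a clean recursion $\ksc{i}{k}\leq \ksc{i-1}{k} + B_i\,\ksc{i-1}{k-1} + \text{negl}$, where $B_i$ collects $\sqrt{2}\bigl(\sqrt{\mu_s(i-1)/N}+(i-1)/N\bigr)$. (4) Expand this recursion exactly as in Lemma~\ref{lemma:born2rsc} (the computation in Appendix~\ref{appendix:recursion}) to get $\ksc{i}{k}\leq A_i^{(s)k}/k! + \text{negl}\leq (e A_i^{(s)}/k)^k + \text{negl}$, where $A_i^{(s)}=\sum_{j}B_j$. (5) Evaluate $A_i^{(s)}$: summing $\sqrt{\mu_s(j)/N}$ over $j\leq i$ gives the dominant term of order $i^{(1+\alpha_{s-1}/2)}/N^{(1+\beta_{s-1})/2}$ plus lower-order pieces, and verify by the recurrence on the exponents that $1+\alpha_{s-1}/2$ and $(1+\beta_{s-1})/2$ reproduce $\alpha_s=\frac{2^s-1}{2^s-1}\cdot(\dots)$ — concretely that the exponent of $N$ in the final threshold expression comes out to $\frac{2^{s-1}-1}{2^s-1}$ and the $k$-exponent to $\frac{2^{s-1}}{2^s-1}$, with the $s$-dependence $s^{-2^{s-1}/(2^s-1)}$ tracking through the $\sqrt{2}$'s and the $c_{s-1}$ constants. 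Setting $\ksc{i}{k}=\Theta(1)$ then forces $i=\Omega\bigl(s^{-2^{s-1}/(2^s-1)}\,k^{2^{s-1}/(2^s-1)}\,N^{(2^{s-1}-1)/(2^s-1)}\bigr)$.

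The main obstacle is step~(5): carefully setting up and solving the exponent recurrences so that they telescope to the exact closed forms $\frac{2^{s-1}}{2^s-1}$ and $\frac{2^{s-1}-1}{2^s-1}$, and simultaneously tracking the multiplicative constant in $s$ down to $s^{-2^{s-1}/(2^s-1)}$ — this requires being honest about where factors of $s$ enter (each level contributes a bounded multiplicative loss, and these compound through $s-1$ levels of recursion). A secondary subtlety is verifying that the inductive hypothesis is applied with the \emph{correct} value of ``$k$'' at each level: the threshold $\mu_s(j)$ plays the role of the multi-$(s-1)$\---RSC count, and one must check the side condition $\mu_s(j)\geq N^{1/2^{s}}$ (the range of validity required by the inductive statement for $s-1$) holds for the chosen $\mu_s$, which is exactly why the floor term $\Theta(N^{1/2^{s+1}})$ — with a slightly different exponent at each level — must be chosen as it is. I would also need to double-check that the ``one query upgrades at most one $(s-1)$\---RSC'' claim is airtight given the uniqueness-of-support condition, since any leakage here (a query creating two new $s$\---RSC) would break the $\ell/N$ counting and hence the whole recursion; this is the place where the definition~\eqref{eq:Plk} does its real work, and I would isolate it as a standalone combinatorial lemma before running the amplitude argument.
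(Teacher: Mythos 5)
Your overall strategy matches the paper's: an induction on $s$ built on a recursive inequality from Lemma~\ref{lemma:recursive}, a threshold split of the sum over $\ell$ at a carefully chosen $\mu_s$, and the $A_i^k/k!$ telescoping expansion. However, there is a genuine gap in the case analysis behind your step~(1), and the recursion you wrote down is in fact only the $s=2$ specialisation, not the general one.

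Your decomposition covers only two cases --- ``$x$ plays the role of $x_0$'' and ``$x$ plays the role of a single satellite $x_j$ completing an existing $(s-1)$-RSC''. For $s\geq 3$ there is a third family of cases that simply cannot arise when $s=2$ and is therefore invisible if one only extrapolates from Lemma~\ref{lemma:reck2rsc}: the new query $x$ can play the role of \emph{several} satellite indices simultaneously, i.e.\ $x=x_{i_1}=\dots=x_{i_r}$ for some $r\in\{2,\dots,s\}$, since Definition~\ref{def:krsc} only forbids $x_0=x_i$ and allows $x_i=x_j$ for $i,j\geq 1$. This means a single fresh query can complete the last $r\geq 2$ missing coordinates of a pre-existing $(s-r)$-RSC in one shot, which contributes a term of the form $\bigl(\sum_{r=2}^{s}\tfrac{i-1}{N^{r}}\bigr)^{1/2}\cdot\srsc{i-1}{k-1}$ to the recursion. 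The paper explicitly treats this (see the third bullet in the proof of Lemma~\ref{lemma:krscgik} and Equation~\ref{eq:case3}), and the extra contribution propagates into $A_{i,s}$ as the third summand $\bigl(\sum_{r=2}^s \binom{s}{r}\frac{\ell}{N^r}\bigr)^{1/2}$. It happens to be dominated by the $\sqrt{(s-1)\mu_s(\ell)/N}$ term and so does not change the final asymptotics --- but without identifying it, your recursion is an underestimate and your conclusion would not be rigorously established. Relatedly, your written recursion carries the constants of the $s=2$ case: the $\sqrt{2}$ should be $\sqrt{s}$ (one factor per choice of missing coordinate), and the $(i-1)/N$ factor in the ``$x=x_0$'' branch should be $\bigl((i-1)/N\bigr)^{s/2}$, since $x_0$ must now collide with all $s$ satellites rather than just $2$.

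A secondary point: you correctly flag step~(5) as the technical workload, but leave it entirely open. The paper's route is to define the sequence $\Pi_s$ with $\Pi_{s+1}=2\sqrt{s}\sqrt{\Pi_s}$ (Proposition~\ref{proposition:Pi}), set $\mu_s(\ell)=\max\{\Pi_{s-1}(2e)^{(2^{s-2}-1)/2^{s-3}}\ell^{(2^{s-1}-1)/2^{s-2}}/N^{(2^{s-2}-1)/2^{s-2}},\ 10s^2\Pi_{s-1}N^{1/2^s}\}$, and prove Lemma~\ref{lemma:Aisbound} by counting the $\ell$'s on each side of the threshold; the $s$-prefactor in the theorem then falls out of $\Pi_s\leq 4s$. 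Your sketch agrees with this in spirit (you name the need for an ``$s$-dependent floor'' and for the exponent recurrences to telescope), but you do not propose the recurrence $\Pi_{s+1}=2\sqrt{s}\sqrt{\Pi_s}$ or a closed form for the floor, and you would need both to obtain the stated $s^{-2^{s-1}/(2^s-1)}$ factor. I would also push back gently on labelling the ``one query upgrades at most one $(s-1)$-RSC'' claim as the \emph{only} combinatorial subtlety to isolate: that claim is indeed what the support-uniqueness condition in \eqref{eq:Plk} buys you, and it is correct, but the more dangerous omission is the multi-role case above.
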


And naturally we have the following corollary for $k=1$:

\begin{corollary}
  Given $s$ random functions $h_1,\dots,h_s: \mathcal{X} \rightarrow \mathcal{Y}$ where $N=|\mathcal{Y}|$, a quantum algorithm needs to make $\Omega\left((s+1)^{-\frac{2^{s}}{2^{s+1}-1}}\cdot N^{\frac{2^{s}-1}{2^{s+1}-1}}\right)$ queries to $h_1,\dots,h_s$ to find one $s$\---RSC with constant probability, for any $s\leq \log (\log (N))$.
\end{corollary}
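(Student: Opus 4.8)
The Corollary is the $k=1$ instance of Theorem~\ref{theorem:krsc}, so my plan is to sketch how I would prove the Theorem and then note that its amplitude bound specializes to $k=1$ (the one wrinkle being that $k=1$ sits below the stated hypothesis $k\ge N^{1/2^{s+1}}$ — discussed in the last paragraph). The argument is an induction on $s$: the base case $s=2$ is Theorem~\ref{theorem:k2rsc}, and one level below, Lemma~\ref{lemma:j2col} serves as the ``$s=1$'' statement since a $1$-RSC is exactly a $2$-collision. So fix $s\ge 3$ and assume we already control, for every $j$ and $\ell$, the amplitude after $j$ queries of a database containing $\ell$ distinct $(s-1)$-RSC on any fixed $(s-1)$-subset of $h_1,\dots,h_s$ — all such subsets behave identically by symmetry of the $h_i$. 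Write $g_{i,k}=|P^{RSC}_{k,s}\ket{\stateiafter}|$ and, exactly as $\ktworsclcol{i}{k}{\ell}$ refined $\ktworsc{i}{k}$ in Section~\ref{subsection:k2rsc}, let $\widehat g_{i,k,\ell}$ be the amplitude of a database holding $k$ distinct $s$-RSC \emph{and} exactly $\ell$ distinct $(s-1)$-RSC on such a fixed subset; the Corollary asks for a bound on $g_{i,1}$.

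First I would prove a one-step recursion in the style of Lemma~\ref{lemma:reck2rsc}: apply Lemma~\ref{lemma:recursive} to $P^{RSC}_{k,s}$, then Lemma~\ref{lemma:commute} to strip the adversary's unitary, reducing matters to bounding $|P^{RSC}_{k,s}\,cO\,(I-P^{RSC}_{k,s})\ket{\stateibefore}|$. A single query $x\mapsto u'$ can create a fresh, $k$-th $s$-RSC in only two ways. Either $x$ is the new center $x_0$, in which case every coordinate of $u'$ must land on an already-recorded value, so at most $(i-1)^s$ of the $N^s$ values of $u'$ work, contributing at most $((i-1)/N)^{s/2}\,g_{i-1,k-1}\le\tfrac{i-1}{N}\,g_{i-1,k-1}$; or $x$ is one of the witnesses $x_1,\dots,x_s$, in which case the database \emph{before} the query must already contain a center with witnesses for the other $s-1$ coordinates — i.e.\ an $(s-1)$-RSC on the remaining functions — and $u'$ need only match on one coordinate, so summing over the number $\ell$ of such $(s-1)$-RSC and over the $s$ choices of the missing coordinate contributes at most $\big(s\sum_\ell \tfrac{\ell}{N}\,\widehat g_{i-1,k-1,\ell}^{\,2}\big)^{1/2}$; in both cases the database before must also hold $k-1$ distinct full $s$-RSC that are support-disjoint from the new one, via the last clause of \Cref{eq:Plk}. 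This yields
\[
g_{i,k}\;\le\;g_{i-1,k}\;+\;\Big(s\sum_{\ell\ge 0}\tfrac{\ell}{N}\,\widehat g_{i-1,k-1,\ell}^{\,2}\Big)^{1/2}\;+\;\tfrac{i-1}{N}\,g_{i-1,k-1}.
\]

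Next I would split the $\ell$-sum at a threshold $\mu_s(j)$, chosen — as $\mu_3$ was chosen from Lemma~\ref{lemma:j2col} — so that the inductive $(s-1)$-RSC bound makes the amplitude of more than $\mu_s(j)$ distinct $(s-1)$-RSC after $j$ queries super-polynomially small; then $\sum_{\ell\le\mu_s(j)}\tfrac{\ell}{N}\widehat g_{j,k-1,\ell}^{\,2}\le\tfrac{\mu_s(j)}{N}\,g_{j,k-1}^{\,2}$ and the tail is negligible. The resulting recursion has exactly the shape solved in Lemma~\ref{lemma:born2rsc} and Appendix~\ref{appendix:recursion}, giving $g_{i,k}\le B_i^k/k!+(\text{negligible})$ with $B_i=\sqrt{s}\sum_{j<i}\sqrt{2}\,(\sqrt{\mu_s(j)/N}+\tfrac{j-1}{N})$. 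The inductive hypothesis furnishes $\mu_s(j)$ in closed form — up to a polynomial floor it is $j^{1/\gamma_{s-1}}/N^{\beta_{s-1}/\gamma_{s-1}}$, where $\gamma_{s-1},\beta_{s-1}$ are the $k$- and $N$-exponents at level $s-1$ — and summing a geometric-type series gives $B_i=O(\sqrt{s}\cdot i^{1/\gamma_s}/N^{\beta_s/\gamma_s})$ with $\gamma_s=\tfrac{2\gamma_{s-1}}{2\gamma_{s-1}+1}$, $\beta_s=1-\gamma_s$, anchored at $\gamma_1=\tfrac23$, $\beta_1=\tfrac13$; solving this recursion together with tracking the $\sqrt s$ factor through the levels produces the query-complexity lower bound in the statement. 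Finally $g_{i,k}\le(eB_i/k)^k+(\text{negligible})=o(1)$ for $i$ below the corresponding threshold, and for the Corollary one sets $k=1$, so that $g_{i,1}\le B_i+(\text{negligible})=o(1)$ once $i$ drops below the claimed $\Omega$-bound with $k=1$.

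I expect the main obstacle to be the combinatorial single-query case analysis together with the distinctness bookkeeping of \Cref{eq:Plk}: one must verify that a single query upgrades at most one partial structure (the property already implicit in \cite{EC:LiuZha19} and in Section~\ref{subsection:2rsc}), that counting $(s-1)$-RSC on an arbitrary fixed $(s-1)$-subset of the functions is the correct refinement and that the ``all subsets behave the same'' symmetry is legitimate at the amplitude level, and that the $k$ completed $s$-RSC can be kept support-disjoint. A secondary point, which is precisely why $k=1$ needs care, is that invoking the inductive $(s-1)$-RSC bound requires the count $\ell$ to exceed $N^{1/2^s}$; the nested hypotheses $k\ge N^{1/2^{s+1}}$, then $\ell\ge N^{1/2^s}$, and so on down the chain are what make the successive inductive applications legal, and for $k=1$ one must instead use a crude (Markov / union) fallback for the low-$\ell$ ranges — or, equivalently, re-run the $k=1$ recursion directly, where the leading term is simply $B_i$ and the super-polynomially small additive error is harmless. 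Carefully propagating those error terms through the double induction over $s$ and $k$ is the remaining routine care.
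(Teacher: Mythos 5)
Your overall strategy matches the paper's: derive a one-step recursion via \Cref{lemma:recursive} and \Cref{lemma:commute}, split the inner $\ell$-sum at a threshold $\mu_{s+1}$, unwind the recursion to a bound of the form $\srsc{i}{k} \le A_{i,s+1}^k/k! + (\text{negligible})$, and then take $k=1$. However, there are two issues to address.

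The more serious one is a gap in your single-query case analysis. You claim a query $x \mapsto u'$ can create a fresh $s$-RSC in ``only two ways'': $x$ is the new centre $x_0$, or $x$ equals exactly one witness, so that the database already holds an $(s-1)$-RSC and $u'$ only has to match on one coordinate. But in Definition~\ref{def:krsc} the witnesses $x_1,\dots,x_s$ need not be pairwise distinct, so the queried $x$ may simultaneously serve as $x_{i_1}=\dots=x_{i_r}$ for some $r\ge 2$. In that situation the pre-query database only contains a centre with witnesses on $s-r$ of the coordinates, i.e.\ an $(s-r)$-RSC, and $u'$ must agree with $x_0$'s image on all $r$ missing coordinates at once. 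The paper treats this as a separate third case, which adds the term $\bigl(\sum_{r=2}^{s}\frac{i-1}{N^r}\bigr)^{1/2}\srsc{i-1}{k-1}$ to the recursion (see the proof of \Cref{lemma:krscgik}). Your claimed two-way dichotomy is therefore false and the recursion you write down is incomplete. The omitted term happens to be lower-order and is absorbed into $A_{i,s+1}$ (as shown in the proof of \Cref{lemma:Aisbound}), so the final bound is unaffected, but a correct proof must enumerate this case and argue it is dominated rather than silently drop it.

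Second, your worry about the hypothesis $k \ge N^{1/2^{s+1}}$ is a red herring. That condition appears only in the phrasing of Theorem~\ref{theorem:krsc} (and is in any case in tension with the $k \le N^{1/8}$ condition in Theorem~\ref{theorem:k2rsc}, so it is almost certainly a typo for $\le$). The amplitude bound the paper actually invokes, \Cref{lemma:krscgik}, holds for every $i,k\in\mathbb{N}$ with no constraint on $k$, and the Corollary follows directly by applying it at $k=1$ together with \Cref{lemma:Aisbound}. No Markov or union fallback is needed: the inductive hypothesis bounds $\sminusonersc{i}{\ell}$ for every $\ell$, and the definition of $\mu_{s+1}$ already floors the threshold at $\Theta\bigl((s+1)^2\Pi_s\,N^{1/2^{s+1}}\bigr)$, so the tail of the $\ell$-sum is super-polynomially small regardless of $k$.
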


In order to prove Theorem \ref{theorem:krsc}, we first define some notations, starting with the notations for the amplitudes. We define:
\begin{enumerate}
\item $\sminusonersc{i}{j}$ as the amplitude of the databases $D$ containing at least $j$ distinct $(s-1)$\---RSC after $i$ quantum queries.
\item $\ssrsc{i}{j}{k}$ as the amplitude of the databases $D$ containing at least $j$ distinct $(s-1)$\---RSC and exactly $k$ distinct $s$\---RSC after $i$ quantum queries.
\item $\srsc{i}{k}$ as the amplitude of the databases $D$ containing exactly $k$ distinct $s$\---RSC after $i$ quantum queries.
\end{enumerate}

More formally, let $\ket{\stateibefore}$ (resp. $\ket{\stateiafter}$) be the state of the algorithm just before (resp. after) the $i^{th}$ query to the oracle. We have:
\begin{align*}
&\sminusonersc{i}{j} = \left|P_{j,(s-1)}^{RSC}\ket{\stateiafter}\right|, \\
&\ssrsc{i}{j}{k} = \left|P_{j,(s-1)}^{RSC}P_{k,s}^{RSC}\neg P_{k+1,s}^{RSC}\ket{\stateiafter}\right|, \\
&\srsc{i}{k} = \left|P_{k,s}^{RSC}\neg P_{k+1,s}^{RSC}\ket{\stateiafter}\right|.
\end{align*}

We want to bound $\srsc{i}{k}$, and to do so, we define some convenient notation.
We start by defining $\Pi_s$, a term that appears in the bound of $\srsc{i}{k}$.

\begin{definition}
Let $\Pi_s$ be defined as follows:

\begin{align*}
  \begin{cases}
    \Pi_1 = 1 \\
    \Pi_2 = 1 \\
    \forall s \geq 2, & \Pi_{s+1} = 2 \cdot \sqrt{s} \cdot \sqrt{\Pi_s}
  \end{cases}
\end{align*}

\end{definition}

We define $A_{i,s}$ and $\mu_s(\ell)$ as follows:
\begin{definition}
  \label{def:Ais}
  $$A_{i,s} =  \sum_{\ell=0}^{i-1}B_{\ell,s-1},$$
  where
  $$
  B_{\ell,s} = \sqrt{s \cdot \frac{\mu_{s+1}(\ell)}{N}} + \new{4}\left(\frac{\ell}{N}\right)^{s/2} + \left(\sum_{r=2}^s \frac{\ell}{N^r}\right)^{1/2},
  $$
  and
  $$\mu_s(\ell) = \max \left\{ \Pi_{s-1} \cdot (\new{8}e)^{\frac{2^{s-2}-1}{2^{s-3}}}\frac{\ell^{(2^{s-1}-1)/2^{s-2}}}{N^{(2^{s-2}-1)/2^{s-2}}} , \new{40} \cdot s^2 \cdot \Pi_{s-1} \cdot N^{1/2^s} \right\}.$$
\end{definition}

We can now state the bound on $\srsc{i}{k}$ that we will need to prove Theorem \ref{theorem:krsc}:
\begin{lemma}
  \label{lemma:krscgik}
  For every $i \in \mathbb{N}$ and every $k \in \mathbb{N}$, we have that:
  $$
  \srsc{i}{k} \leq \frac{A_{i,s+1}^k}{k!} + O\left(2^{- (s+1)^2 \cdot \Pi_s \cdot N^{1/2^{s+1}}}\right).
  $$
\end{lemma}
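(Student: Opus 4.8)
The strategy is an induction on $s$, exactly mirroring the structure used for $s=2$ (Lemmas~\ref{lemma:reck2rsc}, \ref{lemma:born2rsc}) but now with the role of ``collisions on $h_1$'' replaced by ``$(s-1)$\---RSC''. The base case $s=1$ (ordinary preimages/collisions on a single function) and $s=2$ (Lemma~\ref{lemma:born2rsc}) are already established, so I assume the bound holds for $s-1$, i.e.\ I have a bound on $\sminusonersc{i}{j}$ of the shape $\left(\tfrac{A_{i,s}\cdot e}{j}\right)^j + (\text{negl})$, which is what the definition of $\mu_s(\ell)$ is engineered to exploit: $\mu_s(\ell)$ is precisely the threshold above which the amplitude of having that many distinct $(s-1)$\---RSC becomes negligible.

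\textbf{Step 1: the one-query recursion.} First I would prove the analogue of Lemma~\ref{lemma:reck2rsc}: applying Lemma~\ref{lemma:recursive} with $P = P_{k,s}^{RSC}\neg P_{k+1,s}^{RSC}$, then $\left|P_{k,s}^{RSC} cO (I-P_{k,s}^{RSC})\ket{\stateibefore}\right|$ is bounded by enumerating the ways a single query $x$ (with fresh Fourier-conjugate value $u'$) turns a database with only $k-1$ distinct $s$\---RSC into one with $k$. The cases are: (a) $x$ plays the role of some $x_t$ ($1\le t\le s$) completing an $s$\---RSC whose other $s-1$ coordinates already match — this requires the database to already contain a $(s-1)$\---RSC, and only $\ell$ values of $u'$ work if there are $\ell$ such, giving the $\sqrt{(s-1)\mu_s/N}$ term after thresholding (the $s-1$ counting the choice of which coordinate $x$ fills, and recording the $\ssrsc{i-1}{k-1}{\ell}$ amplitude); (b) $x$ is the ``center'' $x_0$ and the query simultaneously hits $s$ pre-existing values — at most $\left(\tfrac{i-1}{N}\right)^{s/2}$-ish amplitude per branch, but one must be careful because the $s$ hits need not be on distinct registers, hence the $\sum_{r=2}^s\binom{s}{r}\tfrac{\ell}{N^r}$ correction term; collecting these yields
\begin{align*}
  \srsc{i}{k} \leq \srsc{i-1}{k} + \sqrt{(s-1)\tfrac{\mu_s(i-1)}{N}}\cdot\srsc{i-1}{k-1} + \Bigl(\bigl(\tfrac{i-1}{N}\bigr)^{s/2} + \bigl(\textstyle\sum_{r=2}^s\binom{s}{r}\tfrac{i-1}{N^r}\bigr)^{1/2}\Bigr)\srsc{i-1}{k-1} + (\text{negl}),
\end{align*}
where the negligible term is $\sminusonersc{i-1}{\mu_s(i-1)}$ bounded via the induction hypothesis and the definition of $\mu_s$.

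\textbf{Step 2: unrolling.} Then I would unroll this recursion in $i$ and $k$ exactly as in the derivation of \eqref{equation:bound2sc} (Appendix~\ref{appendix:recursion}): the coefficient of $\srsc{i-1}{k-1}$ summed over the trajectory telescopes into $A_{i,s+1}$ (this is why $A_{i,s+1}$ is defined with summand $\sqrt{(s-1)\mu_s(\ell)/N} + (\ell/N)^{s/2} + (\sum_r\binom{s}{r}\ell/N^r)^{1/2}$), and the standard ``$e^{A}\cdot$negl'' error accumulates; one then checks $A_{i,s+1} = O((s+1)^2\Pi_s N^{1/2^{s+1}})$ for $i\le N^{1/2}$ (the analogue of Lemma~\ref{lemma:boundAi}, using the recursion $\Pi_{s+1}=2\sqrt{s}\sqrt{\Pi_s}$ to absorb the nested square roots coming from the induction hypothesis's $(\cdot/j)^j$ shape), so $e^{A_{i,s+1}}\cdot 2^{-\Theta((s+1)^2\Pi_s N^{1/2^{s+1}})}$ is still of the claimed form $O(2^{-(s+1)^2\Pi_s N^{1/2^{s+1}}})$, and $\srsc{i}{k}\le A_{i,s+1}^k/k! + (\text{that})$.

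\textbf{Main obstacle.} The delicate point is not the recursion itself but making the bookkeeping of the constants $\Pi_s$, $\mu_s$, $A_{i,s}$ consistent across the induction — in particular verifying that the induction hypothesis on $\sminusonersc{i}{j}$, which is $A_{i,s}^{\,j}/j! + (\text{negl}) \le (eA_{i,s}/j)^j + (\text{negl})$, when summed against the $\ell/N$ weight in $\sum_\ell \tfrac{\ell}{N}\,\ssrsc{i-1}{k-1}{\ell}^2$ and cut at $\mu_s$, really does produce the threshold term $\sqrt{(s-1)\mu_s/N}$ with the stated $\mu_s(\ell)$ and leaves a genuinely negligible tail. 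A secondary subtlety is case (b) of Step~1: one has to argue carefully that a single query cannot create two distinct $s$\---RSC at once (the uniqueness condition in the last line of \eqref{eq:Plk} is exactly what guarantees this), so that the projector decomposition is clean and no cross terms appear. Once those two points are nailed down, the rest is the same telescoping computation already carried out for $s=2$.
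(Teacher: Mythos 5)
Your overall plan — induction on $s$, a one-query recursion via Lemma~\ref{lemma:recursive}, a case split on the role played by the queried point, thresholding at $\mu$ using the induction hypothesis, then unrolling into $A_{i,s+1}^k/k!$ plus an $e^{A}$-dampened negligible tail — is exactly the paper's argument, and Step~2 is carried out correctly. But the case analysis in Step~1 misidentifies the origin of one of the three terms, and this would actually lead you astray if you pushed the computation through.

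You describe the $\sum_{r=2}^s\binom{s}{r}\ell/N^r$ contribution as a \emph{correction to the center case}, arising ``because the $s$ hits need not be on distinct registers.'' That is not where it comes from. If the fresh query $x$ is the center $x_0$ and some of the pre-existing satellites coincide ($x_{i_1}=x_{i_2}$, say), nothing new happens: you still need $h_i(x)=h_i(x_i)$ for every $i$, the output $u'$ is constrained on all $s$ coordinates, and the number of admissible targets only \emph{decreases}, so the $\left(\tfrac{i-1}{N}\right)^{s}$ bound already absorbs that sub-case — no correction is needed there. The $\sum_{r\ge 2}$ term is a genuinely \emph{third} case that your (a)/(b) dichotomy omits: the query $x$ plays $r\ge 2$ \emph{satellite} roles simultaneously, i.e.\ in the new $s$\---RSC one has $x = x_{i_1}=\dots=x_{i_r}$ while the center $x_0$ is already in $D$. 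This forces $h_{i_j}(x)=h_{i_j}(x_0)$ for $j=1,\dots,r$, costing $N^{-r}$ over $u'$, with at most $i-1$ choices of $x_0$ — giving the $\sum_{r=2}^s \tfrac{i-1}{N^r}$ term of \Cref{eq:case3}. If you only account for ``$x$ is one satellite'' and ``$x$ is the center,'' you lose this event entirely. Relatedly, your satellite case (a) should carry a factor $s$ (the choice of which coordinate $x$ completes, as in \Cref{eq:gik}), not $s-1$, and the recursion for $\srsc{i}{k}$ should involve $\mu_{s+1}$, not $\mu_s$, so that unrolling produces $A_{i,s+1}$ as in the lemma's statement. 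Once the three cases are separated correctly, the rest of your plan — bounding $\sminusonersc{i-1}{\mu_{s+1}(i-1)}$ via the induction hypothesis, telescoping with $B_{\ell,s}$, and using $A_{i,s+1}=O\left((s+1)^2\Pi_s N^{1/2^{s+1}}\right)$ to absorb the $e^{A_{i,s+1}}$ factor — goes through as in the paper.
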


In order to prove Lemma~\ref{lemma:krscgik}, we first prove a bound on $A_{i,s}$.

\begin{lemma}
  \label{lemma:Aisbound}

  $A_{i,s} \leq (\new{8}e)^{\frac{2^{s-2} - 1}{2^{s-2}}} \frac{i^{(2^s-1)/2^{s-1}}}{N^{(2^{s-1}-1)/2^{s-1}}} \cdot \Pi_s + O\left( s^4 \cdot \Pi_s \cdot N^{-1/(2^s(2^s-2))}\right)$
\end{lemma}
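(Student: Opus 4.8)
The plan is to generalize the proof of Lemma~\ref{lemma:boundAi} by splitting the sum defining $A_{i,s}$ into its three summands
$$A_{i,s} = \underbrace{\sum_{\ell=0}^{i-1}\sqrt{(s-1)\tfrac{\mu_s(\ell)}{N}}}_{T_1} + \underbrace{\sum_{\ell=0}^{i-1}\left(\tfrac{\ell}{N}\right)^{s/2}}_{T_2} + \underbrace{\sum_{\ell=0}^{i-1}\left(\sum_{r=2}^{s}\binom sr\tfrac{\ell}{N^r}\right)^{1/2}}_{T_3},$$
and to bound each of them, with the dominant part of $T_1$ producing the main term of the statement and everything else falling into the error term. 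Throughout we may assume $i\leq N^{1/2}$, since otherwise the adversary can simply run the algorithm of \cite{subres}, whose query complexity is $O(N^{1/2})$; this is needed to control $T_2$ and $T_3$.

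For $T_1$, I would split the range of $\ell$ according to which branch of the maximum defining $\mu_s(\ell)$ is active. On the "polynomial branch" — where $\mu_s(\ell) = \Pi_{s-1}(2e)^{(2^{s-2}-1)/2^{s-3}}\ell^{(2^{s-1}-1)/2^{s-2}}N^{-(2^{s-2}-1)/2^{s-2}}$ — substituting and taking the square root, the powers of $\ell$ combine to $\ell^{(2^{s-1}-1)/2^{s-1}}$ and the powers of $N$ telescope (the exponent being $-(2^{s-2}-1)/2^{s-1}-\tfrac12 = -(2^{s-1}-1)/2^{s-1}$); using $\sum_{\ell=0}^{i-1}\ell^{\alpha}\leq i^{\alpha+1}/(\alpha+1)$ with $\alpha = 1-2^{-(s-1)}$, so $\alpha+1 = (2^s-1)/2^{s-1}$, together with the identity $\sqrt{(s-1)\Pi_{s-1}} = \tfrac12\Pi_s$ (immediate from $\Pi_s = 2\sqrt{s-1}\sqrt{\Pi_{s-1}}$ for $s\geq3$), this branch is bounded by $\tfrac{2^{s-2}}{2^s-1}\,\Pi_s\,(2e)^{(2^{s-2}-1)/2^{s-2}}\,i^{(2^s-1)/2^{s-1}}/N^{(2^{s-1}-1)/2^{s-1}}$, which is strictly below the claimed main term since $2^{s-2}/(2^s-1)<1$.

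On the "constant branch" — where $\mu_s(\ell) = 10s^2\Pi_{s-1}N^{1/2^s}$ — I would first count the $\ell$ in this branch by solving the threshold inequality $\Pi_{s-1}(2e)^{(2^{s-2}-1)/2^{s-3}}\ell^{(2^{s-1}-1)/2^{s-2}}N^{-(2^{s-2}-1)/2^{s-2}}\leq 10s^2\Pi_{s-1}N^{1/2^s}$, which after cancelling $\Pi_{s-1}$, raising to the power $2^{s-2}/(2^{s-1}-1)$, and simplifying ($N$-exponent $1/2^s + 1 - 2^{-(s-2)} = (2^s-3)/2^s$) gives at most $10s^2\,N^{(2^s-3)/(2^{s+1}-4)}$ such values; each contributes $\sqrt{(s-1)\,10s^2\Pi_{s-1}N^{1/2^s}/N}\leq \sqrt{10}\,s^{3/2}\sqrt{\Pi_{s-1}}\,N^{1/2^{s+1}-1/2}$. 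Multiplying, the $N$-exponent is $\tfrac{2^s-3}{2^{s+1}-4}+\tfrac{1}{2^{s+1}}-\tfrac12$, and substituting $m = 2^{s-1}$ the combined numerator over $4m(m-1)$ collapses to $-1$, so this exponent equals exactly $-1/(2^s(2^s-2))$; with $\sqrt{\Pi_{s-1}}\leq\Pi_s$ and $s^{7/2} = O(s^4)$ this contribution is $O(s^4\,\Pi_s\,N^{-1/(2^s(2^s-2))})$. For the leftovers: $T_2\leq i^{s/2+1}/N^{s/2}\leq N^{-(s-2)/4}$, and bounding the inner sum of $T_3$ by $2s^2\ell/N^2$ (valid for $N\geq 2s$) gives $T_3\leq s\sqrt2\,i^{3/2}/N\leq s\sqrt2\,N^{-1/4}$; for $s\geq3$ both exponents are at most $-1/(2^s(2^s-2))$ (since $(s-2)/4\geq\tfrac14\geq 1/(2^s(2^s-2))$), so $T_2,T_3 = O(s^4\,\Pi_s\,N^{-1/(2^s(2^s-2))})$ as well, and summing all four contributions gives the stated bound.

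The main obstacle is precisely the exponent bookkeeping on the constant branch of $T_1$: one must correctly solve the threshold inequality for $\ell$, raise to the right power, and check that combining the resulting count of $\ell$'s with the per-term factor $N^{1/2^{s+1}-1/2}$ produces exactly the exponent $-1/(2^s(2^s-2))$ appearing in the statement — the cancellation is exact but not visually obvious — while simultaneously verifying that the accumulated $\Pi_{s-1}$ and $\mathrm{poly}(s)$ factors are dominated by $s^4\Pi_s$ (using $\Pi_{s-1}\geq1$ and monotonicity of $\Pi$). The remaining estimates are routine.
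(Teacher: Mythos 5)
Your proof is correct and follows essentially the same route as the paper's: split $A_{i,s}$ into the three summands, decompose $T_1$ according to which branch of $\mu_s(\ell)$ is active, and verify that the polynomial branch produces the main term while the constant branch (after the exact $-1/(2^s(2^s-2))$ cancellation, which you carried out correctly) lands in the error term. The only meaningful deviation is how $T_2$ and $T_3$ are handled: the paper bounds $(\ell/N)^{s/2}$ and $(\sum_r \ell/N^r)^{1/2}$ above by $(\ell/N)^{(2^{s-1}-1)/2^{s-1}}$ (valid once $\ell<N$) and absorbs them into the polynomial branch, paying a factor $2$ on the main-term coefficient that is exactly what lifts $\tfrac12\Pi_s$ to $\Pi_s$; you instead send $T_2,T_3$ into the error term, which forces you to invoke $i\le N^{1/2}$ (WLOG via \cite{subres}'s $s$\---RSC algorithm) — a harmless extra assumption that the paper avoids, at the cost of a cruder $\sum_{\ell<i}\ell^\alpha\le i^{\alpha+1}$. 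Both variants give the stated bound; yours produces a slightly smaller main-term coefficient, the paper's is assumption-free in $i$. Neither difference rises to a genuine gap.
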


In the interest of space, we leave the proof of Lemma~\ref{lemma:Aisbound} to \Cref{sec:Aisbound}, and we now prove Lemma~\ref{lemma:krscgik}.

\begin{proof}[Proof of Lemma \ref{lemma:krscgik}]
  We prove this theorem by induction.
  The case $s=3$ corresponds to the subsection \ref{subsection:k2rsc}.
  Fix $s\geq3$.
  We assume that $\sminusonersc{i}{j} \leq \frac{A_{i,s}^j}{j!} + O\left(2^{-s^2 \cdot \Pi_{s-1} \cdot N^{1/2^s}}\right)$ for every $i \in \mathbb{N}$ and $j \in \mathbb{N}$.
  We will show that $\srsc{i}{k} \leq \frac{A_{i,s+1}^k}{k!} + O\left(2^{-(s+1)^2 \cdot \Pi_s \cdot N^{1/2^{s+1}}}\right)$.

  Similarly to the previous subsection, we will bound $\srsc{i}{k}$ recursively.
  Using Lemma \ref{lemma:recursive}, we have that:
  \begin{align*}
    \left|P_{k,s}^{RSC}\ket{\stateiafter}\right| &\leq \left|P_{k,s}^{RSC}\ket{\stateibefore}\right| + \left|P_{k,s}^{RSC} \cO \left(I - P_{k,s}^{RSC}\right)\ket{\stateibefore}\right|,
  \end{align*}
  where the second term can be written as:
  \begin{align}
      \label{eq:ksrscbot}\left|P_{k,s}^{RSC}\sum_{\substack{x,\hat{y},z\\D:(k-1) \text{ distinct } s-RSC\\\new{D(x) = \bot}}} \frac{1}{\sqrt{N^s}} \sum_{y'}\omega_{n}^{yy'}\alpha_{x,\hat{y},z,D}\ket{x,\hat{y},z,D\cup(x,y')}\right| \\
    \label{eq:ksrscnotbot} \new{+ \left|P_{k,s}^{RSC}\cO\sum_{\substack{x,\hat{y},z\\D:(k-1) \text{ distinct } s-RSC\\\new{D(x) \neq \bot}}} \alpha_{x,\hat{y},z,D}\ket{x,\hat{y},z,D}\right|.}
  \end{align}

  To bound the term of \Cref{eq:ksrscbot}, we analyse now the possibilities for achieving a $s$\---RSC, considering the different cases of the inner sum.
  We have different possible ways to get from $D$ that does not have a $s$-RSC to $D_{y'} := D \cup (x,y')$ that has a $s$-RSC.
  \begin{itemize}
  \item ($x=x_0$) As for the case $s=2$, we consider the cases where we query $x$ such that we have $x_1,\dots,x_s$, such that $\forall 1\leq j\leq s$, $h_s(x)=h_s(x_s)$.
    For every $1\leq j\leq s$, only $i-1$ values of $y'$ will make $D_{y'}$ contain a collision on $h_s$.
    Thus there are at most $\frac{(i-1)^s}{N^s}$ values of $y'$ such that $D_{y'}$ contain a new $s$\---RSC in this case.
  \item ($x=x_{s}$) Similarly to the case $s=2$, we consider the case where there exists $x_0,\dots,x_{s-1}$ such that $x_0,\dots,x_{s-1}$ is a $(s-1)$\---RSC, and we query $x$ such that $h_j(x)=h_j(x_0)$ for some $1\leq j\leq s$.
    If we have found $\ell$ distinct $(s-1)$\---RSC in $D$ previously, then $\ell$ values of $y'$ can make $D_{y'}$ contain a $s$\---RSC, out of the N possible values for the outcome of $h_j$ (notice that the values of $h_i(x)$ for $i\neq j$ are not relevant for this case), and there are $s$ different values for $j$.
  \item However, some new terms do not appear in the case of $2$\---RSC.
    That would be the case where the query $x$ is equal to $x_{i_1} = x_{i_2} = \dots = x_{i_r}$ for some $r \in \{2,\dots,s \}$ in the new $s$\---RSC.
    We bound these terms as follows: for each $r$, there is at most $(i-1)$ distinct $(s-r)$\---RSC.
    For each of these $(s-r)$\---RSC, there are $r$ collisions missing on some $h_{i_1}, \dots, h_{i_r}$.
    And exactly one value of $y'$ will make $D_{y'}$ contain a collision for $h_{i_j}$.
    The values of the other hash functions are irrelevant here.
    Hence using Lemma \ref{lemma:commute} we can bound the probability of this event by:
    \begin{align}\label{eq:case3}
        \sum_{r=2}^s \frac{i-1}{N^r} \srsc{i-1}{k-1}^2,
    \end{align}
    where we bound the amplitude of the database containing at least one $(s-r)$\---RSC and $k-1$ distinct $s$\---RSC after $i-1$ quantum queries by $\srsc{i-1}{k-1}$, the amplitude of the databases containing only $k-1$ distinct $s$\---RSC after $i-1$ quantum queries.
  \end{itemize}

Using Lemma~\ref{lemma:commute}, \new{and as for the previous cases, by bounding the term of \Cref{eq:ksrscnotbot} by $3 \left(\frac{(i-1)}{N}\right)^{s/2} \srsc{i-1}{k-1}$, }we can upper bound $\srsc{i}{k}$ by
\begin{align}
  \nonumber
&  \srsc{i-1}{k} + \sqrt{
            s \sum_{\ell\geq0} \frac{\ell}{N} \ssrsc{i-1}{\ell}{k-1}^2}
            + \new{4}\sqrt{\frac{(i-1)^s}{N^s} \srsc{i-1}{k-1}^2} +
            \sqrt{\sum_{r=2}^s \frac{i-1}{N^r}\srsc{i-1}{k-1}^2} \\
          \label{eq:gik}&\leq \srsc{i-1}{k} 
          + \sqrt{s \sum_{\ell\geq0} \frac{\ell}{N} \ssrsc{i-1}{\ell}{k-1}^2}
            + \left(\new{4}\left(\frac{i-1}{N}\right)^{s/2}
            + \left(\sum_{r=2}^s \frac{i-1}{N^r}\right)^{1/2}\right)\srsc{i-1}{k-1},
\end{align}
where the second term can be split in two, similar to the proof of Lemma \ref{lemma:bound2rsc}:

\begin{align*}
    \sqrt{s \sum_{\ell\geq0} \frac{\ell}{N} \ssrsc{i-1}{\ell}{k-1}^2} &\leq \sqrt{s \cdot \frac{\mu_{s+1}(i-1)}{N}}\srsc{i-1}{k-1} + \sqrt{s} \cdot \sminusonersc{i-1}{\mu_{s+1}(i-1)}
\end{align*}

The term $\sminusonersc{i-1}{\mu_{s+1}(i-1)}$ can be bounded by induction hypothesis by:
\begin{align*}
  \sminusonersc{i-1}{\mu_{s+1}(i-1)} &\leq \frac{A_{i-1,s}^{\mu_{s+1}(i-1)}}{\mu_{s+1}(i-1)!} + O\left(2^{-s^2 \cdot \Pi_{s-1} \cdot N^{1/2^s}}\right),
\end{align*}
  and the first term can be bounded by using Lemma \ref{lemma:Aisbound} and the definition of $\mu_{s+1}(i-1)$ by:
\begin{align*}
                      \left( \frac{e(\new{4}e)^{\frac{2^{s-2}-1}{2^{s-2}}} \frac{i^{(2^s-1)/2^{s-1}}}{N^{(2^{s-1}-1)/2^{s-1}}} \Pi_s
                         + O\left(s^4\Pi_s N^{-1/(2^s(2^s-2))}\right)}
                         {\max \left\{ (\new{8}e)^{\frac{2^{s-1}-1}{2^{s-2}}} \frac{i^{(2^{s}-1)/2^{s-1}}}{N^{(2^{s-1}-1)/2^{s-1}}} \Pi_s,
                         \new{40} (s+1)^2 \Pi_s \cdot N^{1/2^s} \right\}}\right)^{\new{40} (s+1)^2 \Pi_s N^{1/2^{s+1}}},
\end{align*}
which is smaller than
\begin{align*}
                       \left(\frac{1}{2} + o(1)\right)^{\new{40} (s+1)^2 \cdot \Pi_s \cdot N^{1/2^{s+1}}},
\end{align*}
which leads to:
$$
\sminusonersc{i-1}{\mu_{s+1}(i-1)} < 2^{-9.8 \new{\cdot 4} \cdot (s+1)^2 \cdot \Pi_s \cdot N^{1/2^{s+1}}}.
$$

Using Definition \ref{def:Ais}, we rewrite \Cref{eq:gik} as:
$$
\srsc{i}{k} \leq \srsc{i-1}{k} + B_{\ell,s} \cdot \srsc{i-1}{k-1} + \sqrt{s} \cdot 2^{-9.8 \cdot \new{4} \cdot (s+1)^2\cdot\Pi_s\cdot N^{1/2^{s+1}}}.
$$

Then, by expanding the inequality and using the fact that $\srsc{0}{k-1}=0$, we get:
\begin{align*}
  \srsc{i}{k} \leq& \srsc{i-1}{k} + B_{\ell,s} \cdot \srsc{i-1}{k-1} + \sqrt{s} \cdot 2^{-9.8 \new{\cdot 4}\cdot (s+1)^2\cdot\Pi_s\cdot N^{1/2^{s+1}}} \\
  \vdots& \\
  \leq& \sum_{\ell=0}^{i-1}\left(B_{\ell,s} \cdot \srsc{\ell}{k-1} + \sqrt{s} \cdot 2^{-9.8 \new{\cdot 4}\cdot (s+1)^2 \cdot \Pi_s \cdot N^{1/2^{s+1}}} \right)\\
  \leq& \left(\sum_{\ell=0}^{i-1}B_{\ell,s} \cdot \srsc{\ell}{k-1}\right) + s \cdot N^{1/2} \cdot \sqrt{s} \cdot 2^{-9.8 \new{\cdot 4}\cdot (s+1)^2 \cdot \Pi_s \cdot N^{1/2^{s+1}}} \\
  \leq& \left(\sum_{\ell=0}^{i-1}B_{\ell,s} \cdot \srsc{\ell}{k-1}\right) + s^{3/2} \cdot 2^{-9.5 \new{\cdot 4} \cdot (s+1)^2 \cdot \Pi_s \cdot N^{1/2^{s+1}}},
\end{align*}
where we use the fact that $i\leq s\cdot\sqrt{N}$ for the third inequality.

Expanding this inequality, we obtain
\begin{align}\label{eq:bound-giksrsc-app}
  \srsc{i}{k} \leq
   \frac{A_{i,s+1}^k}{k!} + s^{3/2} \cdot e^{A_{i,s+1}}\cdot 2^{-9.5 \new{\cdot 4}\cdot (s+1)^2 \cdot \Pi_s \cdot N^{1/2^{s+1}}}.
\end{align}

For details on \Cref{eq:bound-giksrsc-app}, see \Cref{sec:eq:bound-giksrsc-app}.

And because $i \leq s\cdot \sqrt{N}$, we have $A_{i,s+1} \leq  \new{8}e \cdot (s+1)^2 \cdot \Pi_s \cdot N^{1/2^{s+1}}$. Using this and the fact that $s^{3/2} \leq 2^{\Pi_s \cdot (s+1)^2 \cdot N^{1/2^{s+1}}}$, we conclude:

$$
\srsc{i}{k} \leq \frac{A_{i,s+1}^k}{k!} +  2^{-(s+1)^2 \cdot \Pi_s \cdot N^{1/2^{s+1}}}.
$$
\end{proof}

At last we bound $\Pi_s$ to conclude the analysis.
\begin{proposition}
  \label{proposition:Pi}
  We have for any $s \in \mathbb{N}$ that:
  $$
  \Pi_s \leq 4 s
  $$
\end{proposition}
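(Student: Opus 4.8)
The plan is to prove $\Pi_s \le 4s$ by a straightforward induction on $s$, the key observation being that the recursion $\Pi_{s+1} = 2\sqrt{s}\,\sqrt{\Pi_s}$ is ``square-root contracting'' enough that the linear function $s \mapsto 4s$ is essentially a fixed point of it: the induction hypothesis reproduces itself with room to spare.

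First I would dispatch the base cases. By definition $\Pi_1 = 1 \le 4$ and $\Pi_2 = 1 \le 8$, so the claim holds for $s \in \{1,2\}$. Since the recursive clause only applies for $s \ge 2$, it is important to treat $s=2$ as a base case rather than trying to derive $\Pi_2$ from $\Pi_1$.

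For the inductive step, fix $s \ge 2$ and assume $\Pi_s \le 4s$. Using that $x \mapsto \sqrt{x}$ is non-decreasing together with the recursion, one obtains
$$
\Pi_{s+1} \;=\; 2\sqrt{s}\,\sqrt{\Pi_s} \;\le\; 2\sqrt{s}\,\sqrt{4s} \;=\; 4s \;\le\; 4(s+1),
$$
which closes the induction; in fact this even yields the marginally stronger bound $\Pi_{s+1} \le 4s$.

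I do not anticipate any real obstacle here: once the base cases are handled correctly, the entire argument collapses to the identity $2\sqrt{s}\cdot 2\sqrt{s} = 4s$, and the slack between $4s$ and $4(s+1)$ is not even needed. The only subtlety worth flagging is the indexing of the recurrence (that it starts producing new values only from $\Pi_3$ onward), which is why both $s=1$ and $s=2$ must be verified directly.
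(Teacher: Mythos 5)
Your proof is correct and follows essentially the same straightforward induction as the paper: verify the base cases $s=1,2$ directly (necessary because the recursion only defines $\Pi_{s+1}$ for $s\geq 2$), then close the step via $2\sqrt{s}\cdot\sqrt{4s}=4s\leq 4(s+1)$. If anything, your version is the cleaner one: the paper's displayed step bounds $\sqrt{\Pi_s}$ by $\sqrt{4(s-1)}$, which does not follow from the stated inductive hypothesis $\Pi_s\leq 4s$ (it looks like a small typo for $\sqrt{4s}$, or an implicit use of a slightly stronger unstated hypothesis); your chain $\Pi_{s+1}\le 2\sqrt{s}\sqrt{4s}=4s$ avoids that slip while delivering the same conclusion.
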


\begin{proof}
  The statement is true for $s=1,2$.  Assume it is true for $s\geq2$. Then,
  \begin{align*}
    \Pi_{s+1} &= 2 \sqrt{s} \cdot \sqrt{\Pi_s} 
              \leq 2 \sqrt{s} \cdot \sqrt{4s} 
              \leq 4 (s+1).
  \end{align*}
\end{proof}

Finally, we can prove Theorem \ref{theorem:krsc}:

\begin{proof}[Proof of Theorem \ref{theorem:krsc}]
  From Lemma \ref{lemma:Aisbound}, we have:
  $$A_{i,s} \leq (\new{8}e)^{\frac{2^{s-2} - 1}{2^{s-2}}} \frac{i^{(2^s-1)/2^{s-1}}}{N^{(2^{s-1}-1)/2^{s-1}}} \cdot \Pi_s + O\left( s^4 \cdot \Pi_s \cdot N^{-1/(2^s(2^s-2))}\right).
  $$

  Hence we can bound $\srsc{i}{k}$ for any $i,k$, by:
  \begin{align*}
    \srsc{i}{k} &\leq \frac{A_{i,s+1}^k}{k!} + O\left(2^{-(s+1)^2 \cdot \Pi_s \cdot N^{1/2^{s+1}}}\right) \\
            &\leq \left(\frac{e \cdot A_{i,s+1}}{k}\right)^k + O\left(2^{-(s+1)^2 \cdot \Pi_s \cdot N^{1/2^{s+1}}}\right) \\
            &\leq \left(\frac{e}{k} (\new{8}e)^{\frac{2^{s-1} - 1}{2^{s-1}}} \frac{i^{(2^{s+1}-1)/2^{s}}}{N^{(2^{s}-1)/2^{s}}} \cdot \Pi_{s+1} + \frac{e}{k} \cdot O\left((s+1)^4  \Pi_{s+1} \cdot N^{-1/(2^{s+1}(2^{s+1}-2))}\right)\right)^k \\
            & \quad  + O\left(2^{-(s+1)^2 \cdot \Pi_s \cdot N^{1/2^{s+1}}}\right) \\
            &\leq \left(\frac{e}{k} \cdot (\new{8}e)^{\frac{2^{s-1} - 1}{2^{s-1}}} \frac{i^{(2^{s+1}-1)/2^{s}}}{N^{(2^{s}-1)/2^{s}}} \cdot 4(s+1) + \frac{e}{k} \cdot O\left(4(s+1)^{5} \cdot N^{-1/(2^{s+1}(2^{s+1}-2))}\right)\right)^k \\
            & \quad + O\left(2^{- 4s(s+1)^2 \cdot N^{1/2^{s+1}}}\right),
  \end{align*}
  where the first inequality comes from Lemma \ref{lemma:krscgik}, the third inequality comes from Lemma \ref{lemma:Aisbound} and the last inequality comes from Proposition \ref{proposition:Pi}.

  If $i=o\left((s+1)^{-\frac{2^{s}}{2^{s+1}-1}}\cdot k^{\frac{2^{s}}{2^{s+1}-1}}\cdot N^{\frac{2^{s}-1}{2^{s+1}-1}}\right)$, then $\srsc{i}{k}=o(1)$.
  Hence if we want $\srsc{i}{k}$ to be constant, i.e.\ not $o(1)$, we must have $i = \Omega\left(s^{-\frac{2^{s}}{2^{s+1}-1}}\cdot k^{\frac{2^{s}}{2^{s+1}-1}} \cdot N^{\frac{2^{s}-1}{2^{s+1}-1}}\right)$.
\end{proof}

\section{The $(r,k)$\---subset cover problem}
\label{section:rksc}

In this section, we prove some upper and lower bounds on the $(r,k)$\---SC problem.
As far as we know, there is no quantum algorithm to find a $(r,k)$\---SC problem, except for \cite{subres}'s algorithm when $k=r$, and for the harder problem of finding a $k$\---RSC.
We first prove a lower bound on the $(1,k)$\---SC problem, then design new algorithms for finding a $(r,k)$\---SC.

\subsection{Lower bound on finding a $(1,k)$\---subset cover}
\label{section:bound1ksc}
In this subsection, we will prove a lower bound on the $(1,k)$\---SC problem.
We are given $k$ random functions $h_1,\dots,h_k$ such that for $i \in [1,k]$, $h_i: \mathcal{X} \rightarrow \mathcal{Y}$.
We write $N = |\mathcal{Y}|$ and for $x \in \mathcal{X}$, we write $H(x) = \left\{h_i(x) \vert i \in [1,k]\right\}$.
The goal of this subsection is to prove the following theorem.
\begin{theorem}
  \label{theorem:1ksc}
  Given $k$ random functions $h_1,\dots,h_k: \mathcal{X}\rightarrow\mathcal{Y}$ where $N=|\mathcal{Y}|$, a quantum algorithm needs to make $\Omega\left(C_k^{-1/5}\cdot N^{k/5}\right)$ queries to $h_1,\dots,h_k$ to find one (1,k)\---SC with constant probability, where $C_k = \sum_{j=2}^k \frac{k!}{(j-1)!}$. \end{theorem}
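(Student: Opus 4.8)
The plan is to follow the same compressed-oracle induction strategy used for the $k$\---RSC lower bound, but now the recursion is over the parameter $k$ (the number of functions) of the $(1,k)$\---SC problem rather than over a ``restriction'' parameter. The base case is $k=1$: a $(1,1)$\---SC is just a collision for the single random function $h_1$, so by the multi-collision / collision lower bound machinery (Lemma~\ref{lemma:j2col} and Zhandry's technique) the amplitude of a database containing a $(1,1)$\---SC after $i$ queries is $O(i^{3/2}/N^{1/2})$, giving the required $\Omega(N^{1/3})$-type bound, which is consistent with $C_1^{-1/5}N^{1/5}$ up to the precise exponent bookkeeping. For the inductive step I would fix $k$, assume a good bound on the amplitude $\ljrepet{i}{}{}$-type quantities for all smaller values, and analyze the state $\ket{\stateiafter}$ just after the $i$-th query with the recursion from Lemma~\ref{lemma:recursive}: $|P^{SC}_{(1,k)}\ket{\stateiafter}| \le |P^{SC}_{(1,k)}\ket{\stateibefore}| + |P^{SC}_{(1,k)}\,cO\,(I-P^{SC}_{(1,k)})\ket{\stateibefore}|$. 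As usual the first term is the recursive piece; the real work is bounding the second term, i.e.\ the amplitude that a single query $D \mapsto D\oplus(x,u')$ turns a database with no $(1,k)$\---SC into one with a $(1,k)$\---SC.

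The key combinatorial observation, and the reason the \emph{$j$\---repetition} problem (alluded to in the technical overview) must be introduced, is that there are many ways a single new query can complete a $(1,k)$\---SC. Writing $H(x_0) = \{h_1(x_0),\dots,h_k(x_0)\}$, a $(1,k)$\---SC is a pair $x_0 \ne x_1$ with $H(x_0) \subseteq H(x_1)$, which forces a function $\sigma:[k]\to[k]$ with $h_i(x_0) = h_{\sigma(i)}(x_1)$. When the new query is placed at $x = x_1$, the database must already ``almost'' contain the cover: for some subset of indices the matching is already present and the query must supply the missing coordinates. Counting these configurations is exactly counting, for each $j$ with $2 \le j \le k$, the number of ways $j$ of the $k$ coordinates of some existing $x_0$ can coincide among themselves or be hit — this is where $C_k = \sum_{j=2}^k \frac{k!}{(j-1)!}$ comes from (the number of injective-type partial assignments). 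So I would define $P^{rep}$-properties capturing ``$D$ contains $j$ coordinates of some point that collide appropriately'' (the $j$\---repetition property, encoded via the $\ljrepet{i}{j}{\cdot}$ macros the paper has set up), prove a Liu–Zhandry-style bound on their amplitudes by the same threshold-splitting argument (split the sum at a threshold $\mu_j$, bound the below-threshold part by $\sqrt{\mu_j/N}$ and the above-threshold tail by something negligible via the analogue of Lemma~\ref{lemma:j2col}), and feed this into the recursion.

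Concretely the recursive inequality will have the shape
\begin{align*}
  |P^{SC}_{(1,k)}\ket{\stateiafter}| \le |P^{SC}_{(1,k)}\ket{\stateiminusoneafter}| + \sqrt{\sum_{j=2}^k c_{k,j}\sum_{\ell\ge 0}\frac{\ell}{N^{\,?}}\,(\text{amplitude of }j\text{-repetition with }\ell\text{ copies})^2} + (\text{negligible}),
\end{align*}
where $c_{k,j}$ are the combinatorial multiplicities summing to $C_k$, and the powers of $N$ in the denominator track how many coordinates must be freshly guessed. Expanding the recursion over $i$ (using $|P^{SC}_{(1,k)}\ket{\psi_0}| = 0$), substituting the inductively-known bounds on the repetition amplitudes, and optimizing the thresholds $\mu_j(\ell)$ as in Lemma~\ref{lemma:bound2rsc} and Lemma~\ref{lemma:Aisbound} yields a bound of the form $|P^{SC}_{(1,k)}\ket{\stateiafter}| \le \mathrm{poly}(C_k)\cdot i^{?}/N^{?} + o(1)$; setting this to be $\Omega(1)$ forces $i = \Omega(C_k^{-1/5} N^{k/5})$, where the exponent $1/5$ (and $k/5$) emerges from solving the recurrence on the exponents across the $k$ levels, just as $\frac{2^{s-1}-1}{2^s-1}$ emerged in the RSC case.

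The main obstacle I anticipate is getting the combinatorial bookkeeping of the single-query transition correct: unlike the RSC case, where each query transforms at most one $\ell$\---RSC into an $(\ell+1)$\---RSC and the ``supporting set'' uniqueness condition makes the counting clean, here a query at $x_1$ can simultaneously participate in completing covers for several different $x_0$'s and via several permutations $\sigma$, so I must carefully choose the repetition database properties so that (i) a single query still adds at most one new ``unit'' of each property (the analogue of the last line of \eqref{eq:Plk}), and (ii) the sum of multiplicities telescopes to $C_k$. Secondarily, verifying that all the error terms coming from the threshold cutoffs and from Zhandry's $\sqrt{k/|\mathcal{Y}|}$ loss remain negligible for the relevant range of $i$ (presumably $i \le k\sqrt{N}$ or similar) requires the same kind of careful-but-routine estimates as Lemma~\ref{lemma:boundAi} and Lemma~\ref{lemma:Aisbound}.
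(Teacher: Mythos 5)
You have the right general template (compressed oracle, a recursion over the query index $i$ via Lemma~\ref{lemma:recursive}, and the introduction of $j$\---repetitions as the intermediate statistic), but several structural claims in your plan do not match how the argument actually works and would lead you astray.

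First, there is no induction on $k$, and the exponent $1/5$ does \emph{not} emerge from ``solving a recurrence on the exponents across the $k$ levels.'' Unlike the RSC case, where an $s$\---RSC is built from $(s-1)$\---RSC's and the exponent $\frac{2^{s-1}-1}{2^s-1}$ compounds level by level, here the $j$\---repetition amplitude is bounded \emph{directly} by a single elementary recursion in $i$ and $\ell$ (Lemma~\ref{lemma:jrepet}): each new $j$\---repetition requires hitting a $(j-1)$\---coordinate target, giving $\aljrepet{i}{\ell}{j} \le \left(\frac{ei}{\ell N^{(j-1)/2}}\right)^\ell$. There is no dependence of this bound on $(1,k')$\---SC for smaller $k'$, and feeding it into the one-step recursion (Lemma~\ref{lemma:1kscrec}) gives a bound of the shape $\sqrt{C_k}\, i^{5/2}/N^{k/2}$; the $5/2$ comes from three factors of $i$ (one for the trivial upper bound $\ell \le i-1$ on the number of $j$\---repetitions, two from $\aljrepet{i-1}{1}{j}^2$) under a square root, then summed over $i$ queries. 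That single-step analysis is what yields $i = \Omega(C_k^{-1/5} N^{k/5})$; the exponent $1/5$ is the same for every $k$.

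Second, the threshold-splitting machinery ($\mu_j$, \Cref{lemma:j2col}-type tail bounds) is not needed here. The paper replaces it by the much cruder observation that a database seen after $i-1$ queries cannot contain more than $i-1$ distinct $j$\---repetitions, so $\sum_{\ell\ge 0}\frac{\ell}{N^{k+1-j}}\ljrepet{i-1}{\ell}{j}^2 \le \frac{i-1}{N^{k+1-j}}\cdot\aljrepet{i-1}{1}{j}^2$. Chasing a threshold cutoff would cost effort without gain.

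Third, your worry about ``a query at $x_1$ simultaneously completing covers for several different $x_0$'s and via several $\sigma$'s'' is not an issue: $P^{SC}_{(1,k)}$ asks only whether the database contains \emph{at least one} $(1,k)$\---SC, so there is no multiplicity bookkeeping analogous to the last line of \eqref{eq:Plk}. The combinatorics really is: if $x_0$ has a $j$\---repetition, then $|H(x_0)| = k-j+1$, and the number of target tuples $H(x)$ containing $H(x_0)$ is $\binom{k}{k-j+1}(k-j+1)!\,N^{j-1} = \frac{k!}{(j-1)!}\,N^{j-1}$, which after dividing by $N^k$ gives the $\frac{k!}{(j-1)!}/N^{k+1-j}$ weight whose sum over $j$ is $C_k$. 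The separate ``$x=x_0$'' case (query lands inside an existing image set) contributes the lower-order $k^{k/2} i^{3/2}/N^{k/2}$ term. Finally, the $k=1$ base case you propose is vacuous here: $C_1 = 0$ and the collision exponent $1/3$ does not reduce to $1/5$; the theorem is meaningful for $k\ge 2$ and no base case is needed since the argument is not an induction on $k$.
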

To prove Theorem \ref{theorem:1ksc}, we introduce the problem of finding a \emph{$j$\---repetition} on $h_{i_1},\dots,h_{i_j}$, that consists in finding an $x \in \mathcal{X}$ such that $h_{i_1}(x) = \dots = h_{i_j}(x)$.
More formally, we define the following database property:
\begin{definition}
  $$
  \forall \ell,j, P_{\ell,j}^{rep} = \left\{D \in \mathcal{D} \middle\vert \begin{array}{l}
  \exists x_{1},x_{2},\dots,x_{\ell}, \forall i, \forall 1 \leq \ell \leq j, h_{1}(x_i) = h_{\ell}(x_i) \\
  \forall i \neq p, x_{i} \neq x_{p}
  \end{array}\right\}.
  $$
\end{definition}

Note that we define the property only for $\ell$ distinct $j$\---repetition on $h_1,\dots,h_j$, because by symmetry, the probability of finding a $j$\---repetition on $h_1,\dots,h_j$ is the same as finding a $j$\---repetition on $h_{i_1},\dots,h_{i_\ell}$.

We also define:
\begin{enumerate}
\item $\aljrepet{i}{\ell}{j}$ as the amplitude of the databases $D$ containing \emph{at least} $\ell$ distinct $j$\---repetitions on $h_1,\dots,h_j$ after $i$ quantum queries.
\item $\ljrepet{i}{\ell}{j}$ as the amplitude of the databases $D$ containing \emph{exactly} $\ell$ distinct $j$\---repetitions on $h_1,\dots,h_j$ after $i$ quantum queries.
\item $\ksc{i}{k}$ as the amplitude of the databases $D$ containing at least one $(1,k)$\---SC after $i$ quantum queries.
\end{enumerate}

More formally, let $\ket{\stateiafter}$ be the state just after the $i^{th}$ query to the oracle, then $\aljrepet{i}{\ell}{j} = \left|P_{\ell,j}^{rep}\ket{\stateiafter}\right|$, $\ljrepet{i}{\ell}{j} = \left|P_{\ell,j}^{rep}\neg P_{\ell+1,j}^{rep}\ket{\stateiafter}\right|$, and $\ksc{i}{k} = \left|P_{(1,k)}^{SC}\ket{\stateiafter}\right|$.

Our goal is to bound $\ksc{i}{k}$ and for that we will bound $\aljrepet{i}{\ell}{j}$.
\begin{lemma}
  \label{lemma:jrepet}
  For all $i,\ell,j \in \mathbb{N}$, we have that:
  $$
  \aljrepet{i}{\ell}{j} \leq \left(\frac{\new{4}e \cdot i}{\ell \cdot N^{\frac{j-1}{2}}}\right)^\ell.
  $$
\end{lemma}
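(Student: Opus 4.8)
The plan is to mimic the recursive-bound strategy already used for $2$-collisions (Lemma~\ref{lemma:j2col}) and for the RSC problems, but with the observation that a $j$-repetition on $h_1,\dots,h_j$ is a much more ``rigid'' object: it is essentially a simultaneous preimage condition $h_1(x)=\dots=h_j(x)$, so adding one such repetition with a single query is only possible when the queried $x$ already has $j-1$ of the required collisions recorded in the database, and each of these is pinned to a unique output value. First I would set up the recursion using Lemma~\ref{lemma:recursive} and Lemma~\ref{lemma:commute}: writing $\aljrepet{i}{\ell}{j}\le \left|P^{rep}_{\ell,j}\ket{\stateibefore}\right| + \left|P^{rep}_{\ell,j}\,cO\,(I-P^{rep}_{\ell,j})\ket{\stateibefore}\right|$, I would analyze the transition term. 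Starting from a database with only $\ell-1$ distinct $j$-repetitions, a query $(x,u')$ can create a new one only if $x$ is a point where $j-1$ of the conditions $h_1(x)=\dots=h_j(x)$ already hold and the fresh value $u'$ of the remaining function equals the common value; since the common value is one fixed element of $\mathcal{Y}$, only one out of the $N$ possible outputs of that last function works. Counting the number of candidate $x$: the database has at most $i-1$ recorded entries per function, so the number of $x$ at which a given tuple of $j-1$ functions already agrees is at most $i-1$. This yields a bound of the shape $\left|P^{rep}_{\ell,j}\,cO\,(I-P^{rep}_{\ell,j})\ket{\stateibefore}\right| \le \sqrt{\frac{i-1}{N}}\cdot\aljrepet{i-1}{\ell-1}{j-1}$, i.e. each new repetition costs a factor $\sqrt{(i-1)/N}$ times the amplitude of having $\ell-1$ repetitions of ``depth'' $j-1$.

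Next I would iterate this in the depth parameter $j$ down to the base case. The base case is $j=1$: a ``$1$-repetition'' is a vacuous condition (every recorded $x$ is one), so $\aljrepet{i}{\ell}{1}$ is either trivially controlled or we start the induction at $j=2$, where $\aljrepet{i}{\ell}{2}$ is the amplitude of $\ell$ distinct preimages of a fixed value, bounded by $\left(\tfrac{ei}{\ell\sqrt N}\right)^{\ell}$ by the same argument as Lemma~\ref{lemma:j2col}/Lemma~\ref{lemma:jrepet} with $j=2$ (indeed $N^{(j-1)/2}=N^{1/2}$ there). Then, combining the two recursions — one in $i$ (expanding $\aljrepet{i}{\ell}{j}\le\sum_{m=1}^{i-1}\sqrt{\frac{m}{N}}\,\aljrepet{m}{\ell-1}{j}$, which telescopes to give the $\ell!$-type denominator and an $i^{\ell}$ numerator) and one in $j$ (each step replacing one factor $\sqrt{1/N}$ coming from a collision with the extra $\sqrt{1/N}$ needed to ``complete'' a repetition) — one arrives at
\[
\aljrepet{i}{\ell}{j} \;\le\; \frac{1}{\ell!}\left(\frac{i}{N^{(j-1)/2}}\right)^{\ell} \cdot (\text{constant})^{\ell},
\]
and then $\ell!\ge(\ell/e)^\ell$ gives the stated $\left(\tfrac{e\,i}{\ell\,N^{(j-1)/2}}\right)^{\ell}$.

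The cleanest way to organize the write-up is an induction on $j$: assume the bound for $j-1$ (for all $i,\ell$), prove the recursive inequality $\aljrepet{i}{\ell}{j}\le\aljrepet{i-1}{\ell}{j} + \sqrt{\frac{i-1}{N}}\,\aljrepet{i-1}{\ell-1}{j-1}$, expand the recursion in $i$ exactly as in the proof of Lemma~\ref{lemma:born2rsc} (an $A_i^\ell/\ell!$-style bound), and simplify. The main obstacle I anticipate is the bookkeeping in the transition term: carefully justifying that only $O(i)$ choices of $x$ and exactly one choice of the fresh output value can create a genuinely \emph{new, distinct} $j$-repetition (using the ``distinctness'' clause of $P^{rep}_{\ell,j}$, analogous to the last line of \eqref{eq:Plk}), and that the nested-square-root manipulation of amplitudes over the inner sum respects Cauchy--Schwarz in the same way as in Lemmas~\ref{lemma:rec2rsc2} and~\ref{lemma:reck2rsc}. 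Once that single-query analysis is pinned down, the rest is the now-standard double expansion of the recursion.
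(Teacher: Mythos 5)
Your proposal misreads the oracle model, and this leads to a recursion that is both incorrect and unnecessary. In this paper the oracle $H=(h_1,\dots,h_k)$ answers a query to $x$ with the full tuple $(h_1(x),\dots,h_k(x))\in\mathcal{Y}^k$ in a single call (that is the $\sqrt{N^k}$ normalization in the transition term); the database never contains a record where only some of the $h_i(x)$ are filled in. Consequently, a new $j$\---repetition on $h_1,\dots,h_j$ is created entirely by one fresh query to a previously unrecorded $x$: there is nothing for $x$ to ``already have'' in the database, and the amplitude cost is that the $j$ freshly-drawn values $h_1(x),\dots,h_j(x)$ all coincide, which is $1/N^{(j-1)/2}$ in amplitude (no factor of $i$: the witness involves a single point, not a pairing with old entries). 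The paper's recursion therefore keeps $j$ fixed: $\aljrepet{i}{\ell}{j}\le\aljrepet{i-1}{\ell}{j}+N^{-(j-1)/2}\,\aljrepet{i-1}{\ell-1}{j}$, and the only expansion is in $(i,\ell)$, giving $\binom{i}{\ell}\,N^{-\ell(j-1)/2}\le \bigl(e\,i/(\ell N^{(j-1)/2})\bigr)^{\ell}$.

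Your proposed transition bound $\sqrt{(i-1)/N}\cdot\aljrepet{i-1}{\ell-1}{j-1}$ is wrong on two counts: the extra $i-1$ does not belong (there is no ``matching with a previously recorded partial repetition''), and decrementing $j$ alongside $\ell$ makes the double recursion fail to terminate at a sensible base case for general $\ell$ (after $\ell$ steps you would be looking at ``$(j-\ell)$-repetitions'', which may not even be well defined). The proposed base case also conflates a $2$\---repetition (a single $x$ with $h_1(x)=h_2(x)$, appearing with one query at amplitude $N^{-1/2}$) with a $2$\---collision in the sense of Lemma~\ref{lemma:j2col} (two distinct $x$'s with equal hashes, whose bound carries an extra $i^{1/2}$ per step), which are quantitatively different objects. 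Fixing the model and dropping the induction on $j$ gives the paper's much shorter argument.
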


\begin{proof}
  Following the proof of Lemma \ref{lemma:reck2rsc}, we have that:
  \begin{align*}
    \aljrepet{i}{\ell}{j} &\leq \aljrepet{i-1}{\ell}{j} + \sqrt{\frac{1}{N^{j-1}} \aljrepet{i-1}{\ell-1}{k}^2} \new{+ \frac{3 (i-1)}{N^{j}} \aljrepet{i-1}{\ell-1}{k}} \\
                          &\new{\leq \aljrepet{i-1}{\ell}{j} + 4 \sqrt{\frac{1}{N^{j-1}} \aljrepet{i-1}{\ell-1}{k}^2}} \\
              &\leq \sum_{m=0}^{i-1} \new{4}\sqrt{\frac{1}{N^{j-1}}} \aljrepet{m}{\ell-1}{k} \\
              &\leq \sum_{m_1=0}^{i-1}\sum_{m_2=0}^{m_1} \new{4}\sqrt{\frac{1}{N^{j-1}}}\new{4}\sqrt{\frac{1}{N^{j-1}}}\aljrepet{m_2}{\ell-2}{k}\\
              &\vdots \\
              &\leq \sum_{0 \leq m_\ell < m_{\ell-1} < \dots < m_1 < i} \left(\frac{\new{16}}{N^{j-1}}\right)^{\ell/2}\\
              &\leq \frac{i^\ell}{\ell!} \left(\frac{\new{16}}{N^{j-1}}\right)^{\ell/2}\\
              &\leq \left(\frac{\new{4}e \cdot i}{\ell \cdot N^{(j-1)/2}}\right)^{\ell},
  \end{align*}
\new{where the second inequality comes from the fact that we can assume $i \leq N^{j/2}$.}
\end{proof}

We now bound the amplitude $\ksc{i}{k}$ with an inductive formula, as for the RSC problem.
\begin{lemma}
  \label{lemma:1kscrec}
  For all $i \in \mathbb{N}$ and $k \in \mathbb{N}$, we have that:
  $$
  \ksc{i}{k} \leq \ksc{i-1}{k} + \new{4}\left(k^k\frac{i-1}{N^k}\right)^{1/2} + \left(\sum_{j=2}^k\sum_{\ell \geq 0} \frac{\ell}{N^{k+1-j}}\cdot\frac{k!}{(j-1)!}\ljrepet{i-1}{\ell}{j}^2\right)^{1/2}.
  $$
\end{lemma}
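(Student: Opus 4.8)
The plan is to use Lemma~\ref{lemma:recursive} with the projector $P = P_{(1,k)}^{SC}$, exactly mirroring the recursive arguments used for the RSC problems (Lemmas~\ref{lemma:rec2rsc2} and~\ref{lemma:reck2rsc}). Writing $\ket{\stateibefore}$ for the state just before the $i^{th}$ query and $\ket{\stateiafter}$ for the state just after, Lemma~\ref{lemma:recursive} gives
\[
\ksc{i}{k} = \left|P_{(1,k)}^{SC}\ket{\stateiafter}\right| \leq \left|P_{(1,k)}^{SC}\ket{\stateibefore}\right| + \left|P_{(1,k)}^{SC}\,cO\,(I - P_{(1,k)}^{SC})\ket{\stateibefore}\right|.
\]
After using Lemma~\ref{lemma:commute} to strip off the unitary $U_i$, the first term becomes $\ksc{i-1}{k}$, so the whole task reduces to bounding the ``transition'' term $\left|P_{(1,k)}^{SC}\,cO\,(I - P_{(1,k)}^{SC})\ket{\stateibefore}\right|$ by the remaining two summands in the statement.

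For the transition term I would expand $cO$ in the Fourier basis as in the earlier proofs, restricting the sum to databases $D$ with no $(1,k)$\---SC, and then enumerate the ways a single query $x$ (with new answer $u'$) can create a $(1,k)$\---SC in $D_{u'} = D \oplus (x,u')$. By Definition~\ref{def:rksc}, a $(1,k)$\---SC needs $x_0 \neq x_1$ with $H(x_0) \subseteq H(x_1)$, i.e. for every $i$ there is $\ell$ with $h_i(x_0) = h_\ell(x_1)$. The cases are: (i) the queried $x$ plays the role of $x_0$, so that all $k$ images $h_1(x),\dots,h_k(x)$ must land inside $H(x_1)$ for some already-recorded $x_1$; since each $h_i(x_1)$ is one of at most $i-1$ recorded values and there are $k$ target slots, at most $k^k(i-1)$ values of $u'$ work, giving the term $\left(k^k\frac{i-1}{N^k}\right)^{1/2}$; (ii) the queried $x$ plays the role of $x_1$, so there is a recorded $x_0$ whose image set is covered except that one ``missing'' equation $h_i(x_0) = h_j(x)$ (for some $1\le i\le k$, $2\le j\le k$) must be completed by the query — and crucially, for this to be the \emph{last} missing equation, $x_0$ must already be a $j$\---repetition (or more) on the relevant hash functions among $h_1,\dots,h_k$, because the remaining coordinates of $h_i(x_0)$ are matched by the single value $h_j(x)$. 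Counting: there are at most $\frac{k!}{(j-1)!}$ ways to choose which $j$ of the hash functions are involved and how the indices match up, $\ell$ recorded $j$\---repetitions to choose $x_0$ from, and $N^{j-1}$ unconstrained output coordinates of $x$ out of $N^k$ — wait, more carefully, only $\ell$ choices for the repetition-value and then the remaining factor of $N^{-(k+1-j)}$ comes from the constrained coordinates, matching the third summand $\left(\sum_{j=2}^k\sum_{\ell\ge0}\frac{\ell}{N^{k+1-j}}\cdot\frac{k!}{(j-1)!}\ljrepet{i-1}{\ell}{j}^2\right)^{1/2}$; (iii) ``diagonal'' cases where the query simultaneously serves as $x_0$ and $x_1$ (i.e. $x_0 = x_1$) are excluded by the $x_0\neq x_1$ requirement, and other degenerate overlaps contribute lower-order terms absorbed into case (i). Summing the squared amplitudes of these disjoint cases, applying Cauchy--Schwarz / triangle inequality over the cases, and using $\ljrepet{i-1}{\ell}{j} = \left|P_{\ell,j}^{rep}\neg P_{\ell+1,j}^{rep}\ket{\stateiafter}\right|$ together with Lemma~\ref{lemma:commute} to replace $\ket{\stateibefore}$-amplitudes by $\ket{\stateiminusoneafter}$-amplitudes, yields the claimed recursion.

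The main obstacle I expect is the careful bookkeeping in case (ii): identifying precisely which database property must hold before the query so that \emph{one} query completes a fresh $(1,k)$\---SC, and pinning down the combinatorial factor $\frac{k!}{(j-1)!}$ and the power $N^{k+1-j}$. One has to argue that when the last missing coordinate equation is $h_i(x_0) = h_j(x)$, the pre-query constraint on $x_0$ is exactly that it forms a $j$\---repetition (the value $h_j(x)$ must equal $h_i(x_0)$ for however many coordinates $i$ of $H(x_0)$ are being covered by that single value of $x$), and that the freedom in $u'$ creating this is $N^{-(k+1-j)}$ after accounting for which of $x$'s coordinates are pinned. This is the analogue of the "one query transforms one $k$\---collision into a $(k{+}1)$\---collision" principle invoked for the RSC bounds, and getting the constants right is where the real work lies; the rest is a direct transcription of the Fourier-basis manipulation and the triangle-inequality bookkeeping from the proof of Lemma~\ref{lemma:reck2rsc}.
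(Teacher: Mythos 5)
Your plan reproduces the paper's proof almost step for step: Lemma~\ref{lemma:recursive} with $P = P_{(1,k)}^{SC}$, Lemma~\ref{lemma:commute} to strip $U_i$, and a case split on whether the queried $x$ plays the role of $x_0$ or $x_1$, with counts $(i-1)k^k$ in the first case and $\ell\cdot\frac{k!}{(j-1)!}/N^{k+1-j}$ in the second, all matching the paper.

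One conceptual point to tighten before you formalize case (ii): the ``last missing equation $h_i(x_0)=h_j(x)$'' picture does not reflect how the compressed oracle operates. A query to $x$ XORs in the \emph{entire} $k$-tuple $H(x)=(h_1(x),\dots,h_k(x))$ at once, so the database never holds a ``partial'' $(1,k)$-SC involving $x$ with one equation remaining. The paper's counting instead fixes $x_0$ having a $j$-repetition, so that $S:=\{h_i(x_0)\}$ has exactly $k-j+1$ elements, and counts among all $N^k$ possible output vectors $H(x)$ those containing $S$: choose which $k-j+1$ of the $k$ coordinate slots carry $S$, in $\binom{k}{k-j+1}(k-j+1)!=\frac{k!}{(j-1)!}$ ordered ways, and fill the remaining $j-1$ slots freely, for $\frac{k!}{(j-1)!}N^{j-1}$ out of $N^k$. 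You do self-correct to this count mid-sentence, so the substance matches the paper, but if you tried to write the case out from the ``missing equation'' picture you would get the bookkeeping wrong. Your case (iii) can simply be dropped: the paper has no such case, and the $x_0\neq x_1$ requirement already excludes the only diagonal.
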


\begin{proof}
  For convenience, we denote $P_{k}=P_{(1,k)}^{SC}$ the projector on the databases $D$ that contain at least a $(1,k)$\---SC.
  We write $\ket{\stateibefore}$ the state just before the $i^{th}$ quantum query, and $\ket{\stateiafter}$ the state just after the $i^{th}$ quantum query.

  Using Lemma \ref{lemma:recursive}, and writing $D_{y'} := D \cup (x,y')$ we have that:
  \begin{align}
    \label{eq:1ksc}\left|P_{k}\ket{\stateiafter}\right| &\leq \left|P_{k}\ket{\stateibefore}\right|
    + \left|P_{k} \sum_{\substack{x,\hat{y},z \\ D : \text{no (1,k)\---SC}}} \frac{1}{\sqrt{N^k}} \sum_{y'} \omega_N^{yy'} \alpha_{x,\hat{y},z,D} \ket{x,\hat{y},z,D_{y'}}\right|\new{+ 3 \frac{i-1}{N^k}}
    \end{align}

  We analyse now the possibilities for achieving a $(1,k)$\---SC, considering the different cases of the inner sum.
  We have multiple possible ways to get from $D$ that does not have a $(1,k)$\---SC to $D_{y'}$ that has a $(1,k)$\---SC.
  \begin{itemize}
  \item ($x=x_0$) Here, we consider the case where we query $x$ such that $\{h_i(x)\} \subseteq \{h_i(x_1)\}$, where $x_1$ was queried before. Notice that there are $(i-1)$ possible values of $x_1$, and for each fixed value of $x_1$, we have $k^k$ possible values of $H(x)$ that would lead to this value. This leads to $k^k (i-1)$ possible values of $y'$ that would lead to an $(1,k)$-SC.
  \item $(x = x_1)$ Here, we consider the case where we query $x$ such that $\{h_i(x_0)\} \subseteq \{h_i(x)\}$, where $x_0$ was queried before.  
  
  Let us suppose that $x_0$ has a $j$-repetition on $h_{i_1},\dots,h_{i_j}$, for some distinct $i_1,...,i_j$. Notice that in this case, $S:=\{h_i(x_0)\}$ has $k-j+1$ elements and we will count the number of possible $H(x)$ that contains all of these elements.
  Out of the $k$ functions $h_1,\dots,h_k$, we have $\binom{k}{k-j+1}$ possible ways of choosing the functions that will be filled with the values in $S$. When we fix such functions, there are $|S|!=(k-j+1)!$ ways of filling them with the elements of $S$, and $N^{j-1}$ ways of filling the other functions. Therefore, there are $\binom{k}{k-j+1}(k-j+1)!N^{j-1}$ values of $H(x)$ such that $\{h_i(x_0)\} \subseteq \{h_i(x)\}$. \\

  \end{itemize}
  This gives, \new{bounding the last term of \Cref{eq:1ksc} by $3\left(k^k\frac{i-1}{N^k}\right)^{1/2}$}: %
  \begin{align*}
    \left|P_{k}\ket{\stateiafter}\right| 
    & \leq \left|P_{k}\ket{\stateibefore}\right|
      + \new{4}\left(k^k\frac{i-1}{N^k}\sum_{\substack{x,\hat{y},z \\ D : \text{no k\---SC}}}\left|\alpha_{x,\hat{y},z,D}\right|^2\right)^{1/2} \\
    &+ \left(\sum_{j=2}^k\sum_{\ell \geq 0} \frac{\ell}{N^{k+1-j}}\cdot\frac{k!}{(j-1)!}\sum_{\substack{x,\hat{y},z \\ D : \text{no k\---SC} \\ \ell \text{ distinct } j-repetitions}}\left|\alpha_{x,\hat{y},z,D}\right|^2\right)^{1/2}.
  \end{align*}

  Using Lemma \ref{lemma:commute} and our notations, we conclude:
  $$
  \ksc{i}{k} \leq \ksc{i-1}{k} + \new{4}\left(k^k\frac{i-1}{N^k}\right)^{1/2} + \left(\sum_{j=2}^k\sum_{\ell \geq 0} \frac{\ell}{N^{k+1-j}}\cdot\frac{k!}{(j-1)!}\ljrepet{i-1}{\ell}{j}^2\right)^{1/2}.
  $$
\end{proof}

We now bound $\ksc{i}{k}$ in the following lemma.
\begin{lemma}
  \label{lemma:1kscbound}
  For every $i \in \mathbb{N}$ and $k \in \mathbb{N}$, we have that:
  $$
  \ksc{i}{k} \leq \new{4}k^{k/2}\cdot\frac{i^{3/2}}{N^{k/2}}
  + \sqrt{\sum_{j=2}^{k}\frac{k!}{(j-1)!}}\cdot\frac{\new{4}e \cdot  i^{5/2}}{N^{k/2}}.
  $$
\end{lemma}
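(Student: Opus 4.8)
The plan is to take the recursive inequality from Lemma~\ref{lemma:1kscrec}, expand it down to $\ksc{0}{k}=0$, and control each of the two error terms separately. Telescoping the recursion gives
$$
\ksc{i}{k} \leq \sum_{m=1}^{i-1}\left(k^k\frac{m-1}{N^k}\right)^{1/2} + \sum_{m=1}^{i-1}\left(\sum_{j=2}^k\sum_{\ell \geq 0} \frac{\ell}{N^{k+1-j}}\cdot\frac{k!}{(j-1)!}\ljrepet{m-1}{\ell}{j}^2\right)^{1/2}.
$$
The first sum is elementary: each summand is at most $(k^k i / N^k)^{1/2}$, so the whole thing is at most $i\cdot k^{k/2} i^{1/2}/N^{k/2} = k^{k/2} i^{3/2}/N^{k/2}$, matching the first term of the claimed bound. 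So the real work is the second sum.

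For the second sum, the key step is to bound $\sum_{\ell\geq 0}\ell\,\ljrepet{m}{\ell}{j}^2$ for each fixed $j$. Here I would use that $\ljrepet{m}{\ell}{j}$ is the amplitude of having \emph{exactly} $\ell$ distinct $j$-repetitions, so $\sum_{\ell\geq 0}\ljrepet{m}{\ell}{j}^2 = 1$ and, more usefully, $\sum_{\ell\geq \ell_0}\ljrepet{m}{\ell}{j}^2 = \aljrepet{m}{\ell_0}{j}^2$. Writing $\sum_\ell \ell\,\ljrepet{m}{\ell}{j}^2 = \sum_{\ell_0\geq 1}\sum_{\ell\geq\ell_0}\ljrepet{m}{\ell}{j}^2 = \sum_{\ell_0\geq 1}\aljrepet{m}{\ell_0}{j}^2$, I can then plug in Lemma~\ref{lemma:jrepet}, which gives $\aljrepet{m}{\ell_0}{j}^2 \leq (e m / (\ell_0 N^{(j-1)/2}))^{2\ell_0}$. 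For $m \leq N^{(j-1)/2}/(2e)$ this is a rapidly decaying geometric-type series dominated by its $\ell_0=1$ term, so $\sum_\ell \ell\,\ljrepet{m}{\ell}{j}^2 = O(m^2/N^{j-1})$. (One should double-check the regime: if $m$ is larger than $N^{(j-1)/2}$, the adversary already has enough queries that $i=\Omega(N^{k/5})$ trivially, or one can fold this into the ``otherwise run the known algorithm'' escape hatch used elsewhere in the paper; I would state this caveat explicitly.) Substituting back, the inner double sum over $j$ and $\ell$ is at most $\sum_{j=2}^k \frac{k!}{(j-1)!}\cdot\frac{1}{N^{k+1-j}}\cdot O(m^2/N^{j-1}) = O\!\left(\frac{m^2}{N^k}\sum_{j=2}^k\frac{k!}{(j-1)!}\right)$, where crucially the powers of $N$ combine to $N^{k+1-j}\cdot N^{j-1} = N^k$ independently of $j$.

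Taking the square root turns this into $O\!\left(\frac{m}{N^{k/2}}\sqrt{\sum_{j=2}^k \frac{k!}{(j-1)!}}\right)$, and summing over $m$ from $1$ to $i-1$ gives $O\!\left(\frac{i^2}{N^{k/2}}\sqrt{\sum_{j=2}^k\frac{k!}{(j-1)!}}\right)$. This is slightly better than the $i^{5/2}$ appearing in the statement, so the stated bound follows a fortiori (the $i^{5/2}$ presumably leaves slack for the threshold-splitting argument analogous to Lemma~\ref{lemma:bound2rsc}, or simply for uniformity with the constant-probability conclusion $i=o(N^{k/5})\Rightarrow \ksc{i}{k}=o(1)$). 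The main obstacle I anticipate is being careful about the range of $m$ where Lemma~\ref{lemma:jrepet} gives a geometrically decaying series versus where it is vacuous; a clean way to handle this is exactly the $\mu$-threshold split used in Lemma~\ref{lemma:bound2rsc}, separating $\ell \leq \mu$ (bounded by $\mu/N^{k+1-j}\cdot$ stuff) from $\ell > \mu$ (bounded by $\aljrepet{m}{\mu}{j}$, made negligible by the choice of $\mu$), and this is almost certainly what forces the extra power of $i$ in the final statement.
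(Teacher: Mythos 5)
Your handling of the first term is fine and matches the paper. For the second term, the Abel-summation rewrite $\sum_{\ell}\ell\,\ljrepet{m}{\ell}{j}^2 = \sum_{\ell_0\geq 1}\aljrepet{m}{\ell_0}{j}^2$ is algebraically correct, but the ensuing claim $\sum_{\ell}\ell\,\ljrepet{m}{\ell}{j}^2 = O(m^2/N^{j-1})$ only holds in the regime where $em/N^{(j-1)/2} \leq 1/2$, and this regime is too narrow: for $j=2$ it requires $m \lesssim N^{1/2}$, whereas the lemma must be usable up to $i$ of order $N^{k/5}$, which exceeds $N^{1/2}$ for every $k\geq 3$. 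The ``escape hatch'' you invoke does not close this gap, since the algorithmic escape in the paper kicks in only for much larger $i$ (where the adversary can run the known algorithm). Your alternative suggestion, a $\mu$-threshold split, is also not what happens here: a threshold argument would produce a term of order $\mu$, not of order $m^2/N^{j-1}$, and your final guess that the $i^{5/2}$ in the statement ``leaves slack for a threshold-split'' misidentifies the source of the extra power of $i$.

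The paper's actual argument is \emph{simpler} than yours, not more elaborate. It never applies Lemma~\ref{lemma:jrepet} for $\ell_0 > 1$ and uses no threshold. Instead it uses the trivial observation that after $m$ queries there can be at most $m$ distinct $j$-repetitions, so $\ell \leq m$ on the support of $\ljrepet{m}{\ell}{j}$, hence
$$
\sum_{\ell\geq 0}\ell\,\ljrepet{m}{\ell}{j}^2 \;\leq\; m\sum_{\ell\geq 1}\ljrepet{m}{\ell}{j}^2 \;=\; m\,\aljrepet{m}{1}{j}^2 \;\leq\; m\left(\frac{em}{N^{(j-1)/2}}\right)^2 \;=\; \frac{e^2 m^3}{N^{j-1}},
$$
valid for \emph{all} $m$ with no regime restriction. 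The $N$-exponents then still combine as $N^{k+1-j}\cdot N^{j-1}=N^k$, and summing $m^{3/2}$ over $m<i$ produces the $i^{5/2}$ in the claimed bound. In short, the $i^{5/2}$ is not slack but the exact consequence of paying a factor $m$ via the trivial bound $\ell\leq m$; your refinement would shave that to $i^2$ if it worked, but as written it does not hold in the range of $i$ where the theorem is applied.
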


\begin{proof}
  From Lemma \ref{lemma:1kscrec}, we have that:
  $$
  \ksc{i}{k} \leq \ksc{i-1}{k} + \new{4}\left(k^k\frac{i-1}{N^k}\right)^{1/2} + \left(\sum_{j=2}^k\sum_{\ell \geq 0} \frac{\ell}{N^{k+1-j}}\cdot\frac{k!}{(j-1)!}\ljrepet{i-1}{\ell}{j}^2\right)^{1/2}.
  $$
  
  We want to bound each term in the sum indexed by $j$.
  Fix $j \in \{2,\dots,k\}$. We have that:
  \begin{align*}
    \sum_{\ell\geq0}\frac{\ell}{N^{k+1-j}}\cdot \frac{k!}{(j-1)!} \ljrepet{i-1}{\ell}{j}^2 = \frac{k!}{(j-1)!}\cdot\sum_{\ell\geq0}\frac{\ell}{N^{k+1-j}}\ljrepet{i-1}{\ell}{j}^2.
  \end{align*}

  Next, we have that:
  \begin{align*}
    \sum_{\ell\geq0}\frac{\ell}{N^{k+1-j}}\ljrepet{i-1}{\ell}{j}^2 &\leq \frac{i-1}{N^{k+1-j}} \cdot \sum_{\ell\geq1}\ljrepet{i-1}{\ell}{j}^2 \\
                                                           &= \frac{i-1}{N^{k+1-j}} \cdot \aljrepet{i-1}{1}{j}^{2}\\
                                                           &\leq \frac{i-1}{N^{k+1-j}} \cdot \left(\frac{\new{4}e \cdot (i-1)}{N^{\frac{j-1}{2}}}\right)^2 \\
                                                           &= \frac{(\new{4}e)^2  (i-1)^{3}}{N^{k}},
  \end{align*}
  where $\aljrepet{i-1}{1}{j}$ is the amplitude of the databases $D$ containing \emph{at least} one $j$\---repetition on $h_1,\dots,h_j$ after $i-1$ quantum queries.
  The first inequality follows since there cannot be more than $i-1$ distinct $j$\---repetitions on $h_1,\dots,h_j$ after $i-1$ quantum queries.
  The second inequality comes from the bound on $\aljrepet{i-1}{1}{j}$ in Lemma \ref{lemma:jrepet}.

  This gives:
  \begin{align*}
    \left(\sum_{j=2}^k\frac{k!}{(j-1)!}\sum_{\ell\geq0}\frac{\ell}{N^{k+1-j}}\ljrepet{i-1}{\ell}{j}^2\right)^{1/2} &\leq \sqrt{\sum_{j=2}^k\frac{k!}{(j-1)!}} \cdot \frac{\new{4}e \cdot  (i-1)^{3/2}}{N^{k/2}}.
  \end{align*}

  Finally, by developing the recursive terms (and using that $\ksc{0}{k}=0$), we get that:
  \begin{align*}
    \ksc{i}{k} &\leq \ksc{i-1}{k} + \new{4}\sqrt{k^k\frac{i-1}{N^k}}+ \sqrt{\sum_{j=2}^k\frac{k!}{(j-1)!}} \cdot \frac{\new{4}e\cdot (i-1)^{3/2}}{N^{k/2}} \\
            &\vdots \\
            &\leq \sum_{\ell=0}^{i-1} \left(\new{4}\sqrt{k^k\frac{\ell}{N^k}}+ \sqrt{\sum_{j=2}^k\frac{k!}{(j-1)!}} \cdot \frac{\new{4}e\cdot \ell^{3/2}}{N^{k/2}}\right) \\
            &\leq \new{4}k^{k/2}\frac{i^{3/2}}{N^{k/2}}
              + \sqrt{\sum_{j=2}^{k}\frac{k!}{(j-1)!}}\cdot\frac{\new{4}e \cdot  i^{5/2}}{N^{k/2}}.
  \end{align*}
\end{proof}

We can now prove Theorem \ref{theorem:1ksc}.
\begin{proof}[Proof of Theorem \ref{theorem:1ksc}]
  From Lemma \ref{lemma:1kscbound}, we have that:
  $$
  \ksc{i}{k} \leq \new{4}k^{k/2}\cdot\frac{i^{3/2}}{N^{k/2}}
  + \sqrt{\sum_{j=2}^{k}\frac{k!}{(j-1)!}}\cdot\frac{\new{4}e \cdot  i^{5/2}}{N^{k/2}}.
  $$

  Writing $C_k= \sum_{j=2}^{k}\frac{k!}{(j-1)!}$, this rewrites as:
  $$
  \ksc{i}{k} \leq \new{4}k^{k/2}\cdot\frac{i^{3/2}}{N^{k/2}}
  + \sqrt{C_k}\cdot\frac{\new{4}e \cdot  i^{5/2}}{N^{k/2}}.
  $$

  If $i=o\left(C_k^{-1/5}\cdot N^{k/5}\right)$, then $\ksc{i}{k}=o(1)$.
  Hence if we want $\ksc{i}{k}$ to be constant, i.e.\ not $o(1)$, we must have $i=\Omega\left(C_k^{-1/5}\cdot N^{k/5}\right)$.
\end{proof}

\subsection{Algorithm for finding a $(1,k)$\---subset cover}
\label{section:algo1ksc}

We now describe an algorithm that finds a $(1,k)$\---SC, assuming
 $|\mathcal{X}| = |\mathcal{Y}|^k = N^k$. 
We first notice that an algorithm that finds a collision on $H$ also finds a $(1,k)$\---SC in an expected $O(N^{k/3})$ number of queries.
We show now that there is a more efficient algorithm, as stated in the following theorem:
\begin{theorem}
  \label{theorem:algo1ksc}
  There exists a quantum algorithm that finds a $(1,k)$\---SC in expected $O\left(N^{k/4}\right)$ quantum queries if $k$ is even, and $O(N^{k/4 + 1/12})$ if $k$ is odd.
\end{theorem}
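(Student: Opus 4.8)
The plan is to generalize the Brassard--H\o{}yer--Tapp collision-finding strategy. First I would assemble a short list of \emph{collapsed} inputs, i.e.\ inputs whose image-set under $(h_1,\dots,h_k)$ is unusually small, and then use Grover's algorithm to search for a second input whose image-set \emph{contains} one of these small sets; such a pair is exactly a $(1,k)$\---SC. Concretely, set $p=\lceil k/2\rceil$, fix a partition of $\{1,\dots,k\}$ into $p$ blocks (all of size $2$, except one singleton when $k$ is odd), and call $x\in\mathcal X$ collapsed if the $h_i$ agree on $x$ within each block. Then $|H(x)|\le p$, and a uniformly random $x$ is collapsed with probability $\Theta(N^{-(k-p)})$, since this amounts to $k-p$ independent equalities between uniform values in $\mathcal Y$.

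The first stage would produce $t$ distinct collapsed inputs $x^{(1)},\dots,x^{(t)}$, where $t$ is a parameter to be optimized, each found by a Grover search over the $\Theta(N^{p})$ collapsed inputs (excluding those already found) in expected $O(N^{(k-p)/2})$ queries, for a total of $O(t\cdot N^{(k-p)/2})$ queries; I would then compute and store $S_j:=H(x^{(j)})$ using $O(kt)$ further queries. The second stage would run Grover's search over $x\in\mathcal X\setminus\{x^{(1)},\dots,x^{(t)}\}$ for an $x$ with $S_j\subseteq H(x)$ for some $j$, the predicate costing $k$ queries to compute $H(x)$ plus $t$ classical comparisons. For a fixed $p$-element set $S$ a uniformly random $x$ satisfies $S\subseteq H(x)$ with probability $\Theta(N^{-p})$ (the leading inclusion--exclusion term being $\frac{k!}{(k-p)!}N^{-p}$), so the number of marked inputs is $\Theta(t\cdot N^{k-p})$ and the search succeeds in expected $O\left(\sqrt{N^{k}/(t\,N^{k-p})}\right)=O(N^{p/2}/\sqrt t)$ queries. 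A marked $x_1$ together with the witnessing $x^{(j)}$ satisfies $H(x^{(j)})\subseteq H(x_1)$, and $x_1\neq x^{(j)}$ holds with overwhelming probability (there are $\gg t$ marked inputs), so it is a valid $(1,k)$\---SC, which one then verifies with $O(k)$ queries.

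Balancing the two contributions $t\,N^{(k-p)/2}$ and $N^{p/2}/\sqrt t$ over $t\ge 1$ gives the optimum $t=\max\{1,\Theta(N^{(2p-k)/3})\}$ and total expected cost $O(N^{(k+p)/6})$. For even $k$ we have $p=k/2$ and $t=1$, hence $O(N^{k/4})$; for odd $k$ we have $p=(k+1)/2$ and $t=\Theta(N^{1/3})$, hence $O(N^{(3k+1)/12})=O(N^{k/4+1/12})$; note that for $k=1$ this degenerates exactly into Brassard--H\o{}yer--Tapp. To make the ``expected number of queries'' claims rigorous I would invoke the Boyer--Brassard--H\o{}yer--Tapp variant of Grover search, which handles an a priori unknown number $M$ of marked elements at expected cost $O(\sqrt{|\mathcal X|/M})$; here $M$ is a sum of nearly independent indicators, so it concentrates around the means estimated above.

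The step I expect to be the main obstacle is the combinatorial bookkeeping of the second stage: pinning down the number of marked inputs up to constant factors, namely the inclusion--exclusion expansion of $\Pr[S\subseteq H(x)]$, the union bound over the $t$ targets $S_1,\dots,S_t$ (valid provided $t\,N^{-p}=o(1)$, so that the pairwise-intersection corrections of order $\binom{t}{2}N^{-(p+1)}$ are negligible), and the requirement that the image-sets $S_j$ be pairwise distinct so the $t$ targets are genuine, which holds except with probability $O(t^2N^{-p})$ by a birthday estimate. One then has to check that all these lower-order corrections, and those coming from the concentration of $M$, are negligible in every regime of $N$. The hypothesis $|\mathcal X|=N^{k}$ is what guarantees there is enough room for all the distinct inputs the algorithm manipulates; the remainder is a routine combination of amplitude amplification with classical list management, as in \cite{subres} and \cite{EC:LiuZha19}.
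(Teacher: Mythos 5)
Your proof is correct and essentially the same as the paper's approach (Algorithm~\ref{algo:1ksc} with Lemma~\ref{lemma:algo1kscgeneral}): find $t$ inputs whose image set has been shrunk to size $p$ by imposing equality constraints among the $h_i$, then Grover-search for a second input whose image set contains one of those small sets, balancing the two stages. The only difference is cosmetic: you realize the shrinkage via a partition of $\{1,\dots,k\}$ into $p=\lceil k/2\rceil$ pairs, while the paper uses a single $j$-wise repetition $h_1(x)=\dots=h_j(x)$ with $j=k-p+1$; both impose $k-p$ equalities and yield image size $p$, so the per-stage Grover costs, the optimal $t=\Theta(N^{(2p-k)/3})$, and the total $O(N^{(k+p)/6})$ all coincide.
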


To prove this theorem, we describe the following algorithm (which takes as parameters $j$ and $t$, whose values will be chosen later):
\begin{algorithm} Input: $j \in \{2,\dots,k\}$ and $t \in \mathbb{N}$.
  \label{algo:1ksc}
  \begin{enumerate}
  \item Define $F_1:\mathcal{X} \rightarrow \{0,1\}$ as follows:
    $$
    F_1(x) =
    \begin{cases}
      1,& \text{if } h_{1}(x) = h_{2}(x) = \dots = h_{j}(x) \\
      0,& \text{otherwise.}
    \end{cases}
    $$
    (Note that an element $x \in \mathcal{X}$ such that $F_1(x) = 1$ is a \emph{$j$\---repetition}.)
  \item Execute Grover's algorithm $t$ times on $F_1$ to find $t$ distinct $j$\---repetitions in $H$.
    Let T = $\{x_1,\dots,x_t\}$ be the set of these $j$\---repetitions.
  \item Define $F_2:\mathcal{X} \rightarrow \{0,1\}$ as follows:
    $$
    F_2(x) =
    \begin{cases}
      1,& \text{if there exists } x_0 \in T \text{ such that }h_{1}(x) = h_{1}(x_0) \\
        & \text{and for } 1 \leq m \leq k-j, h_{m+1}(x) = h_{j+m}(x_0) \\
      0,&\text{otherwise.}
    \end{cases}
    $$
  \item Execute Grover's algorithm to find an $x$ such that $F_2(x)=1$
  \item Find $x_0$ in $T$ corresponding to $x$, and output $(x,x_0)$.
  \end{enumerate}
\end{algorithm}

\begin{lemma}
  \label{lemma:algo1kscgeneral}
  Algorithm \ref{algo:1ksc} makes an expected number of $O\left(N^{(2k-j+1)/6}\right)$ queries to the oracle when $j \leq \frac{k+2}{2}$ for $t=N^{(k-2j+2)/3}$.
\end{lemma}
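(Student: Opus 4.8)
The plan is to analyze Algorithm~\ref{algo:1ksc} directly: first verify that whenever it terminates it outputs a valid $(1,k)$\---SC, then bound the expected number of oracle queries made by the two Grover phases (Steps~2 and~4) as functions of $j$ and $t$, and finally substitute the prescribed value $t = N^{(k-2j+2)/3}$.

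\textbf{Correctness.} Suppose Step~4 returns $(x,x_0)$ with $x_0 \in T$ and $F_2(x)=1$. By definition of $F_2$, $h_1(x) = h_1(x_0)$ and $h_{m+1}(x) = h_{j+m}(x_0)$ for $1 \le m \le k-j$, and since $x_0 \in T$ is a $j$\---repetition, $h_1(x_0) = \dots = h_j(x_0)$. Hence $H(x_0) = \{h_1(x_0)\} \cup \{h_{j+1}(x_0),\dots,h_k(x_0)\}$, and every element of this set occurs among $h_1(x),\dots,h_{k-j+1}(x)$, so $H(x_0) \subseteq H(x)$, i.e.\ the pair is a $(1,k)$\---SC. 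The condition $x \ne x_0$ holds except with negligible probability (otherwise $x_0$ would have to satisfy the spurious equalities $h_{m+1}(x_0) = h_{j+m}(x_0)$ obtained by pairing it with itself), and it can anyway be enforced by excluding $T$ from the domain searched in Step~4. Finally, $F_1(x)=1$ precisely when $x$ is a $j$\---repetition, so Step~2 indeed returns $j$\---repetitions.

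\textbf{Query count.} Using $|\mathcal{X}| = N^k$, a uniform $x$ satisfies $F_1(x)=1$ with probability $N^{-(j-1)}$ (the $j-1$ equalities $h_1(x)=h_i(x)$ for $2 \le i \le j$), so there are $\Theta(N^{k-j+1})$ distinct $j$\---repetitions; since $t \le N^{k-j+1}$ (which holds because $j \le k$), Step~2 can find $t$ of them. Running Grover's search in the form that handles an unknown number of marked elements, together with the standard trick of forbidding already found solutions, each new repetition costs an expected $O\!\left(\sqrt{N^k/N^{k-j+1}}\right) = O\!\left(N^{(j-1)/2}\right)$ queries; this estimate does not degrade, because $t \ll N^{k-j+1}$ keeps the pool of solutions far from exhausted. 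So Step~2 costs $O\!\left(t\, N^{(j-1)/2}\right)$ queries. For $F_2$, a uniform $x$ is a solution with probability $\Theta\!\left(t\, N^{-(k-j+1)}\right)$: for each of the $t$ fixed $x_0 \in T$ there are $k-j+1$ equalities, namely $h_1(x)=h_1(x_0)$ and $h_{m+1}(x)=h_{j+m}(x_0)$ for $1 \le m \le k-j$, and the events for distinct $x_0$ are essentially disjoint. In particular at least one solution exists, so Step~4 finds it in expected $O\!\left(\sqrt{N^{k-j+1}/t}\right)$ queries. (Each evaluation of $F_1$ or $F_2$ uses $O(k)$ oracle queries, which is absorbed into the $O(\cdot)$.)

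\textbf{Optimization and the main obstacle.} Balancing the two phases, $t\, N^{(j-1)/2} = \sqrt{N^{k-j+1}/t}$ forces $t^{3/2} = N^{(k-2j+2)/2}$, i.e.\ $t = N^{(k-2j+2)/3}$; this value is $\ge 1$ exactly when $j \le \tfrac{k+2}{2}$, which is where the hypothesis is used, and it is $\le N^{k-j+1}$ as needed above. Substituting it back,
$$O\!\left(N^{(k-2j+2)/3}\cdot N^{(j-1)/2}\right) = O\!\left(N^{(2k-j+1)/6}\right),$$
since $\tfrac{k-2j+2}{3} + \tfrac{j-1}{2} = \tfrac{2(k-2j+2)+3(j-1)}{6} = \tfrac{2k-j+1}{6}$. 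I expect the delicate point to be the step that produces $t$ distinct $j$\---repetitions: one must be precise about the density estimates, namely that enough $j$\---repetitions exist and that the per-call Grover cost stays $O(N^{(j-1)/2})$ throughout, rather than invoking Grover as a black box; all of this ultimately rests on the elementary bound $t \le N^{k-j+1}$.
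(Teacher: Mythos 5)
Your proposal is correct and follows essentially the same route as the paper: compute the density of $j$\---repetitions (probability $N^{-(j-1)}$, hence $\Theta(N^{k-j+1})$ solutions of $F_1$), compute the density of $F_2$-solutions (probability $\Theta(t\,N^{-(k-j+1)})$), apply Grover's bound to each phase, and balance to get $t = N^{(k-2j+2)/3}$. The one step you flag as "delicate" and defer — that the claimed densities hold with high probability so the per-call Grover cost is as stated — is exactly what the paper fills in, via a Chernoff bound showing that with all but exponentially small probability the number of marked elements is at least half its mean; your observation that $t \ll N^{k-j+1}$ (so the pool does not get depleted) is the remaining ingredient and is correct. Incidentally, your explicit balancing calculation exposes a typo in the paper's proof, which writes $t=N^{(k-2j+2)/2}$ in one place but then (correctly, matching the lemma and your derivation) uses $N^{(k-2j+2)/3}$ in the final cost.
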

\begin{proof}

  Notice that if we consider a uniformly random function, we have that $
  Pr[h_1(x) = \cdots = h_j(x)] = N^{-j+1}$. Therefore, the expected number of elements in $\mathcal{X}$ such that 
  $F_1(x) = 1$ is $N^k \cdot N^{-j+1} =  N^{k-j+1}$.
  We write $X_1,\dots,X_{N^k}$ the random variables corresponding to $F_1$'s output on each $x \in \mathcal{X}$, $X$ the sum of these variables, $\mu=N^{k-j+1}$ their mean.
  Chernoff bound tells us that for any $0\leq\delta \leq1$,
  $$
  Pr\left(\vert X - \mu \vert \geq \mu\delta \right) \leq e^{-\delta^2\mu/3}.
  $$
  With $\delta=1/2$, we have:
  $$
  Pr\left(\vert X - \mu \vert \geq \frac{\mu}{2} \right) \leq e^{-\mu/12}.
  $$
  Thus, unless with probability $e^{-(N^{k-j+1})/12}$, the number of elements $x \in \mathcal{X}$ such that $F_1(x)=1$ is greater than $N^{k-j+1}/2$.

  Hence using Theorem \ref{theorem:grover}, the second step of the algorithm is expected to make $O\left(t \cdot \sqrt{\frac{N^k}{N^{k-j+1}}}\right) = O\left(t \cdot N^{(j-1)/2}\right)$ quantum queries to the oracle.

  Notice that for a fixed value $x_0$, if we consider a uniformly random function, we have that 
  $$Pr[h_{1}(x) = h_{1}(x_0) \land h_2(x) = h_{m+1}(x_0) \land \dots \land h_{k-j+1}(x) = h_k(x)] = N^{j-k-1}.$$ 
  Therefore, the expected number of elements such that 
  $F_2(x) = 1$ is $t \cdot N^k \cdot N^{j-k-1} =  t \cdot N^{j-1}$.
  Similarly, using Chernoff bound, unless with probability $e^{-(t\cdot N^{j-1})/12}$, the number of elements such that $F_2(x) = 1$ is greater than $t \cdot N^{j-1}/2$.
  Hence, using Theorem \ref{theorem:grover}, the fourth step of the algorithm is expected to make $O\left(\sqrt{\frac{N^k}{t \cdot N^{j-1}}}\right) = O\left(\frac{N^{(k-j+1)/2}}{\sqrt{t}}\right)$ quantum queries to the oracle.

  By picking $t=N^{(k-2j+2)/3}$ with $j\leq\frac{k+2}{2}$ (otherwise $t<1$), the complexity of the algorithm is $O(N^{(k-2j+2)/3} \cdot N^{(j-1)/2}) = O(N^{(2k-j+1)/6})$.
\end{proof}

We now prove Theorem \ref{theorem:algo1ksc}

\begin{proof}[Proof of Theorem \ref{theorem:algo1ksc}]
  From Lemma \ref{lemma:algo1kscgeneral}, the complexity of Algorithm \ref{algo:1ksc} is \ifx\format\plain\else\linebreak\fi
  $O(N^{(2k-j+1)/6})$ when $j \leq \frac{k+2}{2}$.
  \begin{itemize}
  \item If $k$ is even, then we pick $j=\frac{k+2}{2}$ to reach a complexity of $O(N^{k/4})$.
  \item If $k$ is odd, then we pick $j=\frac{k+1}{2}$ to reach a complexity of $O(N^{k/4+1/12})$.
  \end{itemize}

  Note that if $j > \frac{k+1}{2}$, then the second step of the algorithm is expected to make at least $O\left(N^{\frac{k+1}{4}}\right)$ quantum queries, which is worse than $O(N^{k/4+1/12})$.
\end{proof}

\begin{rem}
Note that we do not reach the lower bound of Theorem \ref{theorem:1ksc}, and it would be interesting to see if the gap can be further reduced by either improving our lower bounds or designing a more efficient algorithm.
\end{rem}

\paragraph{A slightly better algorithm}
We describe a more efficient algorithm when $k$ is not constant.
The idea is to take into account the fact that we do not necessarily need the $j$\---repetitions from the previous algorithm to occur on the first $j$ functions $h_1,\dots,h_j$, but they could rather be on any $h_{i_1},\dots,h_{i_j}$ instead.
We also consider permutations of the $h_1,\dots,h_k$ in the fourth step of Algorithm \ref{algo:1ksc}.

\begin{theorem}
  \label{theorem:algo1kscopti}
  There exists a quantum algorithm that finds a $(1,k)$\---SC in:
  \begin{itemize}
      \item $O\left(\binom{k}{(k+2)/2}^{-1/2} \cdot N^{k/4}\right)$ quantum queries if $k$ is even, 
      \item $O\left(\binom{k}{(k+1)/2}^{-1/2} \cdot N^{k/4+1/12}\right)$ quantum queries if $k$ is odd.
  \end{itemize}
\end{theorem}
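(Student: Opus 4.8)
The plan is to adapt Algorithm~\ref{algo:1ksc} and the analysis of Lemma~\ref{lemma:algo1kscgeneral} by relaxing both marking predicates to exploit the freedom in \emph{which} $j$ of the $k$ functions carry the repetition and in \emph{how} the resulting smaller image set is embedded into the $k$ output coordinates. Concretely, I would redefine $F_1(x)=1$ to hold whenever there exist indices $i_1<\dots<i_j$ with $h_{i_1}(x)=\dots=h_{i_j}(x)$; run Grover's algorithm (Theorem~\ref{theorem:grover}) $t$ times, removing the witnesses found so far between runs (as in \cite{EC:LiuZha19,subres}), to obtain a set $T$ of $t$ distinct such $x$; and redefine $F_2(x)=1$ to hold whenever there is an $x_0\in T$ with $x\neq x_0$ and $\{h_i(x_0)\mid 1\le i\le k\}\subseteq\{h_i(x)\mid 1\le i\le k\}$. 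A single Grover search over $F_2$ then returns a $(1,k)$\---SC $(x,x_0)$.

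The next step is to recompute the two density estimates with their combinatorial gains. For $F_1$, each fixed $j$-subset of the functions agrees on a uniformly random input with probability $N^{-(j-1)}$, so a Bonferroni-type estimate gives $\Pr[F_1(x)=1]=(1-o(1))\binom{k}{j}N^{-(j-1)}$ as long as $\binom{k}{j}=o(N)$; hence the expected number of $F_1$-witnesses is $(1-o(1))\binom{k}{j}N^{k-j+1}$, and a Chernoff bound as in Lemma~\ref{lemma:algo1kscgeneral} gives concentration, so Step~2 costs $O\!\left(t\cdot\binom{k}{j}^{-1/2}N^{(j-1)/2}\right)$ queries. For $F_2$, any $x_0\in T$ has $|\{h_i(x_0)\}|=k-j+1$, and the number of output tuples $(h_1(x),\dots,h_k(x))$ containing these $k-j+1$ prescribed values is $(1-o(1))\binom{k}{k-j+1}(k-j+1)!\,N^{j-1}=(1-o(1))\frac{k!}{(j-1)!}N^{j-1}$ (choose and order the coordinates that carry the prescribed values, fill the remaining $j-1$ coordinates freely; coincidences contribute only lower-order terms). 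Summing over $T$, there are $(1-o(1))\,t\cdot\frac{k!}{(j-1)!}N^{j-1}$ witnesses of $F_2$, so Step~4 costs $O\!\left(\left(\frac{k!}{(j-1)!}\right)^{-1/2}N^{(k-j+1)/2}t^{-1/2}\right)$ queries.

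Balancing the two over $t\ge 1$ gives $t=\Theta\!\left(\binom{k}{j}^{1/3}\left(\frac{k!}{(j-1)!}\right)^{-1/3}N^{(k-2j+2)/3}\right)$ and total complexity $O\!\left(\binom{k}{j}^{-1/6}\left(\frac{k!}{(j-1)!}\right)^{-1/3}N^{(2k-j+1)/6}\right)$; since $\frac{k!}{(j-1)!}=j\,(k-j)!\,\binom{k}{j}\ge\binom{k}{j}$, this is at most $O\!\left(\binom{k}{j}^{-1/2}N^{(2k-j+1)/6}\right)$ for every $j\le\frac{k+2}{2}$. I would then specialize $j$: for even $k$, $j=\frac{k+2}{2}$ gives exponent $\frac{2k-j+1}{6}=\frac k4$ and the bound $O\!\left(\binom{k}{(k+2)/2}^{-1/2}N^{k/4}\right)$; for odd $k$, $j=\frac{k+1}{2}$ gives exponent $\frac k4+\frac1{12}$ and the bound $O\!\left(\binom{k}{(k+1)/2}^{-1/2}N^{k/4+1/12}\right)$, as claimed.

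I expect the main obstacle to be the bookkeeping of lower-order corrections and of the constraint $t\ge 1$. One must check that the $o(1)$ terms in the two density estimates---from an input being a repetition on more than $j$ coordinates, from accidental collisions among the free coordinates of $F_2$, and from one $x$ witnessing $F_2$ for several $x_0\in T$---are negligible, which constrains how fast $k$ may grow relative to $N$ (it suffices that $\binom{k}{j}$ and $\frac{k!}{(j-1)!}$ be subpolynomial in $N$); and one must handle the regime where the balancing value of $t$ would be below $1$ (this happens exactly around $j=\frac{k+2}{2}$), where one simply sets $t=1$ and the repetition-finding Grover phase becomes the bottleneck---still giving $O\!\left(\binom{k}{j}^{-1/2}N^{(j-1)/2}\right)=O\!\left(\binom{k}{j}^{-1/2}N^{(2k-j+1)/6}\right)$ since $(j-1)/2\le(2k-j+1)/6$ for $j\le\frac{k+2}{2}$, with equality at $j=\frac{k+2}{2}$.
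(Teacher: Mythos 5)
Your proposal reproduces the paper's construction and analysis essentially verbatim: you relax $F_1$ to accept any $j$-subset of indices carrying a repetition and relax $F_2$ to accept any embedding of the $(k-j+1)$-element image set of $x_0$ into the coordinates of $H(x)$, estimate the witness densities $\binom{k}{j}N^{k-j+1}$ and $t\,\tfrac{k!}{(j-1)!}N^{j-1}$, apply Grover (Theorem~\ref{theorem:grover}) with concentration as in Lemma~\ref{lemma:algo1kscgeneral}, and then pick $j=\lfloor(k+2)/2\rfloor$; this is exactly Algorithm~\ref{algo:1kscopti} and the proof of Lemma~\ref{lemma:algo1kscoptigeneral}. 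The only departure is that you balance $t$ optimally (obtaining a $k$-factor $\binom{k}{j}^{-1/6}(k!/(j-1)!)^{-1/3}$ that is never worse than the paper's $\binom{k}{j}^{-1/2}$) rather than taking the paper's simpler $t=N^{(k-2j+2)/3}$, but both land on the same claimed bound and you correctly handle the $t<1$ degeneracy at $j=(k+2)/2$.
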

The gain that we obtain is a function of $k$ and is therefore not significant if $k$ is constant. However, as we have shown in Theorem \ref{theorem:1ksc}, the dependence in $k$ can be quite large for the $(1,k)$\---SC problem.

To prove this theorem, we describe the algorithm as follows (which takes again as input two integers $j$ and $t$ playing the role of parameters whose optimal values will be determined later):
\begin{algorithm}
  \label{algo:1kscopti}
  Input: $j\in \{2,\dots,k\}$ and $t \in \mathbb{N}$.
  \begin{enumerate}
  \item Define $F_1:\mathcal{X} \rightarrow \{0,1\}$ as follows:
    $$
    F_1(x) = 
    \begin{cases}
      1,& \text{if there exists distinct }i_1,\dots,i_j \in [1,k] \text{ such that} \\
        & h_{i_1}(x) = h_{i_2}(x) = \dots = h_{i_j}(x) \\
      0,& \text{otherwise.}
    \end{cases}
    $$
    (Note that an element $x \in \mathcal{X}$ such that $F_1(x) = 1$ is a \emph{$j$\---repetition}.)
  \item Execute Grover's algorithm $t$ times on $F_1$ to find $t$ distinct j\---repetitions in $H$.
    Let T = $\{x_1,\dots,x_t\}$ be the set of these j\---repetitions.
    We write, for $\ell \in [1,t]$ $I_\ell = \{i_1^{\ell},\dots,i_j^{\ell}\}$ the set of indices such that $h_{i_1^{\ell}}(x_t) = \dots = h_{i_k^{\ell}}(x_t)$, and $I'_\ell = [1,k]\backslash I_\ell = \{i_{j+1}^{\ell},\dots,i_{k}^{\ell}\}$. 
  \item Define $F_2:\mathcal{X} \rightarrow \{0,1\}$ as follows:
    $$
    F_2(x) = 
    \begin{cases}
      1, & \text{if there exists distinct }j_0,j_1,\dots,j_{k-j+1} \in [1,k], \\
      & \text{ and } \ell \in [1,t]  %
       \text{ s.t. } h_{i_1^{\ell}}(x_\ell) = h_{j_0}(x) \\
&      \text{ and for all }  1 \leq m \leq k-j, h_{j_m}(x) = h_{i_{j+m}^{\ell}}(x_\ell) \\
      0,& \text{otherwise.}
    \end{cases}
    $$
  \item Execute Grover's algorithm to find an $x$ such that $F_2(x) = 1$
  \item Find $x_0$ in $T$, and output $(x,x_0)$.
  \end{enumerate}
\end{algorithm}

\begin{rem} $F_1$ (resp. $F_2$) can be constructed with $O\left(\binom{k}{j}\right)$ (resp. $O\left(\frac{k!}{(j-1)!}\right)$) quantum gates and one query to $H$.
\end{rem}

\begin{lemma}
  \label{lemma:algo1kscoptigeneral}
  Algorithm~\ref{algo:1kscopti} makes an expected number of  $O\left(\binom{k}{j}^{-1/2}N^{(2k-j+1)/6}\right)$ queries to the oracle when $j \leq \frac{k+2}{2}$ for $t=N^{(k-2j+2)/3}$.
\end{lemma}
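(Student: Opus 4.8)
The plan is to mirror the proof of Lemma~\ref{lemma:algo1kscgeneral}, tracking the two extra combinatorial factors that now appear because $F_1$ quantifies over \emph{all} size-$j$ subsets of $[k]$, and $F_2$ over all ordered index assignments, rather than over the fixed block $h_1,\dots,h_j$. The only substantive changes are in counting the marked elements for the two Grover searches of Algorithm~\ref{algo:1kscopti}; the Chernoff concentration arguments and the final choice of $t$ then go through exactly as before.

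For Step~2, I would first estimate, for a uniformly random $H$ and a fixed $x\in\mathcal{X}$, the probability $\Pr_H[F_1(x)=1]$. Each of the $\binom{k}{j}$ size-$j$ index sets $S$ contributes the event ``$h_i(x)$ is constant over $i\in S$'', of probability $N^{-(j-1)}$, so a union bound gives $\Pr_H[F_1(x)=1]\le\binom{k}{j}N^{-(j-1)}$; a short inclusion--exclusion estimate shows that the first-order term $\binom{k}{j}N^{-(j-1)}$ also dominates from below, provided $\binom{k}{j}=o(N^{j-1})$ (the pairwise-intersection corrections, indexed by pairs of index sets sharing coordinates, are smaller by a factor $o(1)$ under that hypothesis). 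Hence the number of $x$ with $F_1(x)=1$ has expectation $\Theta(\binom{k}{j}N^{k-j+1})$, and the same Chernoff bound as in Lemma~\ref{lemma:algo1kscgeneral} shows that, except with exponentially small probability, at least $\tfrac12\binom{k}{j}N^{k-j+1}$ of them exist. By Theorem~\ref{theorem:grover}, Step~2 then costs $O\big(t\cdot\sqrt{N^{k}/(\binom{k}{j}N^{k-j+1})}\big)=O\big(t\cdot\binom{k}{j}^{-1/2}N^{(j-1)/2}\big)$ queries.

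For Step~4, I would count the marked elements of $F_2$ in the same way: for a fixed $x_\ell\in T$, its supporting value set has $k-j+1$ elements, and for each of the $k!/(j-1)!$ ordered choices of distinct indices $(j_0,j_1,\dots,j_{k-j})$, the probability over $H$ that $x$ realizes that assignment is $N^{-(k-j+1)}$. After checking that w.h.p.\ the $t$ elements of $T$ have pairwise distinct supporting value sets (so that the contributions over distinct $\ell$ essentially add) and discarding the lower-order overlap terms, the number of $F_2$-solutions has expectation $\Theta(t\cdot\tfrac{k!}{(j-1)!}\cdot N^{j-1})$, concentrated again by Chernoff. Thus Step~4 costs $O\big(\sqrt{N^{k}/(t\cdot\tfrac{k!}{(j-1)!}N^{j-1})}\big)=O\big((\tfrac{k!}{(j-1)!})^{-1/2}\cdot N^{(k-j+1)/2}/\sqrt{t}\big)$ queries. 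Substituting the prescribed $t=N^{(k-2j+2)/3}$ — which is $\ge1$ precisely because $j\le\tfrac{k+2}{2}$ — turns Step~2 into $O(\binom{k}{j}^{-1/2}N^{(2k-j+1)/6})$ and Step~4 into $O((\tfrac{k!}{(j-1)!})^{-1/2}N^{(2k-j+1)/6})$; since $\tfrac{k!}{(j-1)!}=j\,(k-j)!\,\binom{k}{j}\ge\binom{k}{j}$ for $j\ge2$, Step~4 is dominated by Step~2, and Steps~3 and~5 make no oracle calls, so the total is $O(\binom{k}{j}^{-1/2}N^{(2k-j+1)/6})$.

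The hard part will be the lower bounds on the number of witnesses of $F_1$ and of $F_2$: a union bound alone bounds these counts only from above, whereas the Grover estimate needs a matching lower bound, so one must verify by inclusion--exclusion that the $\binom{k}{j}N^{k-j+1}$ (resp.\ $t\cdot\tfrac{k!}{(j-1)!}N^{j-1}$) first-order term survives the cancellation from overlapping index sets — which is where an implicit bound such as $\binom{k}{j}=o(N^{j-1})$ enters (and without it the claimed bound would even fall below the trivial lower bound $t$) — together with verifying that distinct elements of $T$ almost surely carry distinct supporting value sets, so that their contributions to the $F_2$-count do not collapse.
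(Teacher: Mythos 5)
Your proposal is correct and follows essentially the same route as the paper: count the marked elements of $F_1$ and $F_2$ (now boosted by $\binom{k}{j}$ and $\frac{k!}{(j-1)!}$ respectively, the latter dominating the former combinatorially so that Step~2 sets the query complexity), apply Grover via Theorem~\ref{theorem:grover}, and substitute $t=N^{(k-2j+2)/3}$. The concerns you flag at the end — that a union bound only upper-bounds the marked-set size and one needs an inclusion--exclusion argument plus a bound like $\binom{k}{j}=o(N^{j-1})$ to get the matching lower bound, and that distinct elements of $T$ should have distinct supporting value sets — are genuine rigor points, but the paper's proof glosses over them in exactly the same way (appealing informally to "similarly to Lemma~\ref{lemma:algo1kscgeneral}"), so you have not diverged from its approach.
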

The proof of Lemma \ref{lemma:algo1kscoptigeneral} is given in \Cref{sec:algo1kscoptigeneral}. We now prove Theorem~\ref{theorem:algo1kscopti}.
\begin{proof}[Proof of Theorem \ref{theorem:algo1kscopti}]
  From  Lemma~\ref{lemma:algo1kscoptigeneral}, the complexity of Algorithm \ref{algo:1kscopti} is \ifx\format\plain\else\linebreak\fi $O\left(\binom{k}{j}^{-1/2} \cdot N^{(2k-j+1)/6}\right)$ when $j \leq \frac{k+2}{2}$.
\begin{itemize}
\item If $k$ is even, for $j=\frac{k+2}{2}$, we get a complexity of $O\left(\binom{k}{(k+2)/2}^{-1/2} \cdot N^{k/4}\right)$.
\item If $k$ is odd, for  $j=\frac{k+1}{2}$, we get a complexity of $O\left(\binom{k}{(k+1)/2}^{-1/2} \cdot N^{k/4+1/12}\right)$.
\end{itemize}

Note that if $j > \frac{k+1}{2}$, then the second step of the algorithm is expected to make at least $O\left(\binom{k}{(k+1)/2}^{-1/2} \cdot N^{\frac{k+1}{4}}\right)$ quantum queries.
\end{proof}

\subsection{Algorithm for finding a $(r,k)$\---subset cover}
\label{section:algorksc}

In this section, we describe an algorithm for solving the $(r,k)$\---SC problem.
We consider the case where $|\mathcal{X}| = |r\cdot\mathcal{Y}|^k = r^k\cdot N^k$.
The result is stated as follows:
\begin{theorem}
  \label{theorem:rkscalgo}
  There exists a quantum algorithm that finds a $(r,k)$\---SC in \ifx\format\plain\else\linebreak\fi $O\left(N^{k/(2+2r)}\right)$ quantum queries to $H$, if $k$ is divisible by $r+1$, and \ifx\format\plain\else\linebreak\fi $O\left(N^{k/(2+2r) + 1/2}\right)$ otherwise.
\end{theorem}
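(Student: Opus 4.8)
The plan is to build a recursive algorithm, parametrised by two integers $t$ and $k'$, that reduces $(r,k)$\---SC to the $(r-1,k')$\---SC problem plus one Grover search, with base case $r=1$ given by Algorithm~\ref{algo:1ksc} (Theorem~\ref{theorem:algo1ksc}). The algorithm first produces $t$ distinct $(r-1,k')$\---SC $(z_0^{(\ell)},z_1^{(\ell)},\dots,z_{r-1}^{(\ell)})_{1\le\ell\le t}$ by recursive calls, where the $(r-1,k')$\---SC is taken with respect to the first $k'$ of the $k$ hash functions, and stores them in a list $T$. It then defines a predicate $F\colon\mathcal X\to\{0,1\}$ with $F(x)=1$ iff there is some $(z_0,\dots,z_{r-1})\in T$ with $h_i(x)=h_{k'+i}(z_0)$ for all $1\le i\le k-k'$ (as in Algorithm~\ref{algo:1kscopti} one may also allow any injection from the $k-k'$ uncovered coordinates of $z_0$ into the $k$ coordinates of $x$, and any choice of stored element, which only improves the hidden constants). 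Finally it Grover-searches for an $x$ with $F(x)=1$ and outputs $(x_0,\dots,x_r):=(z_0,z_1,\dots,z_{r-1},x)$ for the matching tuple.

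Correctness: I would check that this output is a genuine $(r,k)$\---SC. The first $k'$ images of $x_0=z_0$ are covered by $x_1,\dots,x_{r-1}=z_1,\dots,z_{r-1}$ since $(z_0,\dots,z_{r-1})$ is an $(r-1,k')$\---SC, and the remaining $k-k'$ images of $x_0$ are covered by $x_r=x$ by the construction of $F$; hence $H(x_0)\subseteq\bigcup_{j=1}^{r}H(x_j)$. The condition $x_0\notin\{x_1,\dots,x_r\}$ follows from the same property of the recursive solution together with the fact that a Grover-found $x$ coincides with one of the finitely many stored elements only with negligible probability (or one simply adds $x\neq z_0$ to the predicate).

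Query complexity: let $Q_r(k)$ be the expected number of oracle queries. Producing $t$ distinct $(r-1,k')$\---SC costs at most $t\cdot Q_{r-1}(k')$ (run the recursive algorithm $t$ times, conditioning the predicate on the solutions already found); here one must verify that $\mathcal X$, of size $r^k N^k$, is large enough that far more than $t$ distinct intermediate solutions exist so that Grover's expected running time is not inflated — this is exactly where Chernoff bounds on the number of partial covers at each recursion level enter, as in Lemma~\ref{lemma:algo1kscgeneral}. The final Grover search marks a set of density $\approx t\cdot N^{-(k-k')}$, hence costs $O\!\left(N^{(k-k')/2}/\sqrt t\right)$, giving the recursion $Q_r(k)\le t\,Q_{r-1}(k')+O\!\left(N^{(k-k')/2}/\sqrt t\right)$ with $Q_1(k)=O(N^{k/4})$ for even $k$. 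Balancing the two terms yields $t=N^{(k-k')/3-k'/(3r)}$ (using inductively $Q_{r-1}(k')=O(N^{k'/(2r)})$), which is $\ge 1$ exactly when $k'\le \tfrac{r}{r+1}k$; choosing $k'=\tfrac{r}{r+1}k$, an integer since $(r+1)\mid k$, forces $t=1$, so the algorithm collapses to a clean chain: find one $(1,\tfrac{2}{r+1}k)$\---SC via Theorem~\ref{theorem:algo1ksc} (its input $\tfrac{2k}{r+1}$ is even), then perform $r-1$ extension steps, each a single Grover search. At every level the governing quantity is $k-k'=\tfrac{k}{r+1}$, so each of the $r$ steps costs $O(N^{k/(2(r+1))})$ and the total is $O(N^{k/(2+2r)})$; a short induction closes the recursion.

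The main obstacle is the non-divisible case. When $(r+1)\nmid k$ one must round $k'$ to an admissible value, and doing so naively at each of the $r$ recursion levels accumulates too much error, so instead one isolates the remainder $s=k\bmod(r+1)$ once — for instance, running the divisible-case algorithm on the first $k-s$ functions at cost $O(N^{(k-s)/(2(r+1))})$ and spending one additional Grover search of cost $O(\sqrt N)$ (with matching slack in the innermost repetition size) to also match the last $s$ coordinates of $x_0$ — which yields the claimed $O(N^{k/(2+2r)+1/2})$ bound. Making that remainder step cost only $\sqrt N$, and making the ``$t$ distinct intermediate solutions'' bookkeeping (distinctness, and concentration of the number of marked states at every recursion level) fully rigorous, are the delicate points; the rest is a routine geometric-sum and induction argument.
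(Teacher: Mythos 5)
Your recursive algorithm (produce $t$ distinct $(r-1,k')$-SC, then one Grover search for the remaining $k-k'$ coordinates), the recursion $Q_r(k)\le t\,Q_{r-1}(k')+O\!\left(N^{(k-k')/2}/\sqrt t\right)$, the balancing choice $t=N^{(k-k')/3-k'/(3r)}$, and the setting $k'=\tfrac{r}{r+1}k$ (whence $t=1$) all coincide with the paper's Algorithm~\ref{algo:rksc} and the proof of Lemma~\ref{lemma:rkscalgogeneral}. So the divisible case is the same argument, and your observation that the algorithm collapses to a chain of $r$ Grover searches each of cost $O(N^{k/(2(r+1))})$ is a correct and slightly more explicit phrasing of what the paper does.

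The genuine divergence, and the gap, is in the non-divisible case. The paper \emph{pads up}: it picks $k''$ with $k\le k''\le k+r+1$ divisible by $r+1$, introduces fresh random functions $h_{k+1},\dots,h_{k''}$, and runs the divisible-case algorithm for $(r,k'')$-SC; by the structured form of the cover produced (each real coordinate of $x_0$ is matched to a real coordinate of some $x_j$), the output is also an $(r,k)$-SC, and the cost is $O\!\left(N^{k''/(2(r+1))}\right)\le O\!\left(N^{k/(2(r+1))+1/2}\right)$. You instead \emph{truncate down}: run the divisible-case algorithm on $h_1,\dots,h_{k-s}$ where $s=k\bmod(r+1)$, and then claim to cover the remaining $s$ coordinates of $x_0$ with ``one additional Grover search of cost $O(\sqrt N)$.'' That step does not hold up. Once you have an $(r,k-s)$-SC you cannot append a new element $x_{r+1}$ (that would give an $(r+1,k)$-SC), so the extra $s$ constraints must be pushed onto one of the existing Grover searches in the chain, shrinking the marked set by a factor $N^{-s}$ and hence inflating that Grover step by $N^{s/2}$, not $N^{1/2}$. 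Since $s$ can be as large as $r$, the resulting exponent $\tfrac{k-s}{2(r+1)}+\tfrac{s}{2}=\tfrac{k+sr}{2(r+1)}$ can exceed $\tfrac{k}{2(r+1)}+\tfrac12$ as soon as $r\ge 2$ and $s\ge 2$, so your variant does not deliver the stated $O(N^{k/(2+2r)+1/2})$. The same problem appears if one instead rounds $k'$ to the nearest multiple of $r$ below $\tfrac{r}{r+1}k$: the overhead becomes $N^{\delta(2r-1)/(6r)}$ with $\delta<r$, again exceeding $\sqrt N$ for $r\ge3$. You correctly flag the remainder step as ``the delicate point,'' but as described there is no mechanism that makes it cost only $\sqrt N$; the clean fix is the paper's padding trick, which never reduces the number of coordinates that must be matched and therefore never compounds the per-coordinate Grover cost.
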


The idea of the algorithm is essentially the same as Algorithm \ref{algo:1ksc} of Section \ref{section:algo1ksc}:
\begin{enumerate}
\item we first find $t$ distinct $(r-1,k')$\---SC for some integers $t$ and $k'$;
\item we then find the $(r,k)$\---SC.
\end{enumerate}

The first step is done recursively, using the algorithm defined for lower values of $k'$ and $r-1$.
The second step uses Grover's algorithm.
The algorithm can be defined for any value of $k'$ and $t$, and we pick them to optimize the complexity.

More formally, we define the algorithm recursively.
Assume that we have an algorithm that can output a $(r-1,k')$\---SC in $O\left(N^{k'/2r}\right)$ queries, for any $k'<k$ such that $k'$ is divisible by $r$.
Then, we can find a $(r,k)$\---SC as follows:

\begin{algorithm}
\label{algo:rksc}
Input: $t \in \mathbb{N}$, $k' \in \mathbb{N}$.
\begin{enumerate}
\item Execute the $(r-1,k')$\---SC algorithm $t$ times to find $t$ distinct $(r-1,k')$\---SC in $H$.
  Let T = $\{(x_{1,0},x_{1,1},\dots,x_{1,r-1}),\dots,(x_{t,0},x_{t,1},\dots,x_{t,r-1})\}$ be the set of these $(r-1,k')$\---SC.
\item Define $F:\mathcal{X} \rightarrow \{0,1\}$ as follows:
  $$
  F(x) =
  \begin{cases}
    1, & \text{if there exists } (x_{i,0},x_{i,1},\dots,x_{i,r-1}) \in T \text{ such that} \\
      &  \forall 1 \leq m \leq k-k', h_{m}(x) = h_{k'+m}(x_{i,0}), \\
    0, & \text{ otherwise.}
  \end{cases}
  $$
\item Execute Grover's algorithm to find an $x$ such that $F(x) = 1$
\item Find $(x_{i,0},x_{i,1},\dots,x_{i,r-1})$ in $T$ and output $(x_{i,0},x_{i,1},\dots,x_{i,r-1},x)$.
\end{enumerate}
\end{algorithm}

\begin{lemma}
  \label{lemma:rkscalgogeneral}
  Algorithm \ref{algo:rksc} makes an expected number of $O\left(N^{k/(2+2r)}\right)$ queries to the oracle, when $k$ is divisible by $r$, and $O\left(N^{k/(2+2r)+1/2}\right)$ otherwise.
\end{lemma}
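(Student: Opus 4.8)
The plan is to prove the statement by induction on $r$. The base case $r=1$ is Theorem~\ref{theorem:algo1ksc}: Algorithm~\ref{algo:1ksc} solves $(1,k)$\---SC in $O(N^{k/4})=O(N^{k/(2+2\cdot 1)})$ queries when $k$ is even and in $O(N^{k/4+1/12})\subseteq O(N^{k/(2+2\cdot 1)+1/2})$ otherwise. For $r\ge 2$ the object of study is Algorithm~\ref{algo:rksc}, and I would begin by checking \emph{correctness}. Given $(x_{i,0},x_{i,1},\dots,x_{i,r-1})\in T$, which the recursive subroutine returns as a $(r-1,k')$\---SC for $(h_1,\dots,h_{k'})$, and an $x$ with $F(x)=1$ witnessed by this tuple, set $y_0:=x_{i,0}$, $y_j:=x_{i,j}$ for $1\le j\le r-1$, and $y_r:=x$. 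The sub-cover property gives $\{h_i(y_0):i\le k'\}\subseteq\bigcup_{j=1}^{r-1}\{h_i(y_j):i\le k'\}$, while $F(x)=1$ forces $h_{k'+m}(y_0)=h_m(x)$ for $1\le m\le k-k'$, so $\{h_i(y_0):k'<i\le k\}\subseteq\{h_i(y_r):i\le k\}$. Enlarging each inner index set to $\{1,\dots,k\}$ and taking the union yields $\{h_i(y_0):i\le k\}\subseteq\bigcup_{j=1}^r\{h_i(y_j):i\le k\}$, i.e.\ $(y_0,\dots,y_r)$ is a $(r,k)$\---SC. Distinctness $y_0\neq y_j$ for $1\le j\le r-1$ is inherited from the sub-cover; $y_0\neq y_r$ is ensured by appending the cheaply checkable clause ``$x\notin T$'' to $F$, which deletes at most $|T|$ inputs from $F^{-1}(1)$ and is harmless below.

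For the query count when the divisibility hypothesis holds, run Algorithm~\ref{algo:rksc} with $t=1$ and $k':=rk/(r+1)\in\mathbb{Z}$; then $k'=r\cdot\frac{k}{r+1}$ is a multiple of $r$ with $k'<k$, so the induction hypothesis applies to the $(r-1,k')$\---SC subroutine and Step~1 costs $O(N^{k'/(2r)})=O(N^{k/(2(r+1))})$ queries. In Step~3, a counting argument shows that for a uniformly random oracle the expected size of $F^{-1}(1)$ is $|T|\cdot|\mathcal{X}|\cdot N^{-(k-k')}=r^k N^{k'}$ (each element of $T$ marks a fraction $N^{-(k-k')}$ of the $r^k N^k$ elements of $\mathcal{X}$), and a Chernoff bound makes $|F^{-1}(1)|\ge \frac{1}{2}r^k N^{k'}$ with overwhelming probability, so Grover's algorithm finds a marked $x$ in $O(\sqrt{|\mathcal{X}|/(r^k N^{k'})})=O(\sqrt{N^{k-k'}})=O(N^{k/(2(r+1))})$ queries to $H$. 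Summing the two steps gives the claimed $O(N^{k/(2+2r)})$.

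When $r+1\nmid k$, run Algorithm~\ref{algo:rksc} with $k':=\lfloor rk/(r+1)\rfloor$ and $t$ left free. Step~1 now costs $O(t\cdot N^{k'/(2r)+1/2})$ by the non-divisible branch of the induction hypothesis (invoked once per distinct sub-cover; the $t$ sub-covers are forced distinct by removing those already found from the search space before each invocation, which is affordable since $|\mathcal{X}|=r^k N^k$ leaves vastly more than $t$ of them), and since $|F^{-1}(1)|=\Theta(t\,r^k N^{k'})$ with high probability, Step~3 costs $O(N^{(k-k')/2}t^{-1/2})$; balancing the two terms with $t=\max\{1,\,N^{(k-k')/3-k'/(3r)-1/3}\}$ produces a query count $O(N^{E(k')+1/6})$ where $E(k'):=(k-k')/3+k'/(6r)$. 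Now $E(rk/(r+1))=k/(2(r+1))$ exactly, $E$ has slope $|E'|=|1/(6r)-1/3|\le 1/3$, and $|k'-rk/(r+1)|<1$, so the total exponent $E(k')+1/6< k/(2(r+1))+1/2$, i.e.\ the query count is $O(N^{k/(2+2r)+1/2})$. I expect the main obstacle to be precisely this last arithmetic: choosing $k'$ and $t$ so that the single-level rounding slack ($1/3$) together with the $N^{1/2}$ that the induction hypothesis already carries (which, after re-balancing $t$, contributes only $1/6$ to the exponent) sum to at most $1/2$ — and, correspondingly, observing that because the ``$+1/2$'' is built into the statement for every $r$, these losses are absorbed level by level and do not accumulate across the $r$ recursion levels. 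The remaining pieces (the Chernoff density estimates for $F^{-1}(1)$, the exclude-what-you-found trick for distinct sub-covers, and the $O(1)$ query overhead of evaluating $F$) are routine.
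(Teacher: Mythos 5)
Your proof is correct, and the two halves deserve separate verdicts. For the divisible case you follow essentially the paper's own route: induction on $r$, the choice $k'=rk/(r+1)$ (so that $k'$ is a multiple of $r$ and the inductive hypothesis applies to the $(r-1,k')$ subroutine), a Chernoff estimate on $|F^{-1}(1)|$, and Grover for the last element; note that the paper's general parameter $t=N^{(rk-rk'-k')/3r}$ collapses to $1$ at this choice of $k'$, which is exactly your $t=1$. (Your explicit correctness check of the composed tuple, including the ``$x\notin T$'' patch for $y_0\neq y_r$, is a welcome addition the paper omits.) For the non-divisible case, however, you take a genuinely different route. The paper pads the instance: it picks $k\le k''\le k+r+1$ with $r+1\mid k''$, adjoins fresh random functions $h_{k+1},\dots,h_{k''}$, solves $(r,k'')$\---SC by the divisible case, and reads off the exponent $(k+r+1)/(2+2r)=k/(2+2r)+1/2$. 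You instead keep the original $k$ functions, round $k'$ down to $\lfloor rk/(r+1)\rfloor$, invoke the ``$+1/2$'' branch of the induction hypothesis for the subroutine, and re-balance $t$; the arithmetic ($|E'|\le 1/3$ rounding slack plus the $1/6$ into which the recursive $+1/2$ is compressed after balancing) indeed closes at $+1/2$, and your degenerate-$t$ boundary ($t$ clipped to $1$) is also consistent with the bound. Your version buys a proof that never has to introduce new random oracles mid-reduction (a mild modelling convenience the paper's padding trick quietly assumes), at the cost of a more delicate exponent computation; the paper's padding is shorter and makes the source of the $+1/2$ completely transparent. One cosmetic remark: the lemma as printed says ``divisible by $r$'' while both your argument and the paper's (and Theorem~\ref{theorem:rkscalgo}) treat divisibility by $r+1$; you silently, and correctly, prove the $r+1$ version.
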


We defer the proof of Lemma~\ref{lemma:rkscalgogeneral} to \Cref{sec:rkscalgogeneral}.

\section*{Acknlowedgements}
ABG is supported by ANR JCJC TCS-NISQ ANR-22-CE47-0004, and by the PEPR integrated project EPiQ ANR-22-PETQ-0007 part of Plan France 2030. 
This work is part of HQI initiative (www.hqi.fr) and is supported by France 2030 under the French National Research Agency award number “ANR-22-PNCQ-0002”.
We thanks the anonymous reviewers for their valuable comments that helped improving the quality of this paper.

\bibliographystyle{abbrv}
\bibliography{abbrev3,crypto,bibliography}

\clearpage
\appendix

\section{Grover's algorithm and Quantum Fourier Transform}
\label{sec:grover_qft}
\subsection{Grover's algorithm}

Here we quickly recall Grover's algorithm.
We start by defining the search problem.

\begin{definition}[Search problem]
  We are given a function $F : \mathcal{X} \rightarrow \{0,1\}$.
  The search problem consists of finding an $x \in \mathcal{X}$ such that $F(x) = 1$, in the least amount of queries to $F$ possible.
\end{definition}

Grover's algorithm solves the search problem in $O\left(\sqrt{\frac{|\mathcal{X}|}{t}}\right)$, where $t$ is the number of $x$ such that $F(x) = 1$.
The result is stated as follows:

\begin{theorem}[\cite{STOC:Grover96}\cite{BBHT}]
\label{theorem:grover}
Let $F : \mathcal{X} \rightarrow \{0,1\}$ be a function, $t = \left|\{x\vert F(x)=1\}\right|$, and $N = |\mathcal{X}|$.
  Then, Grover's algorithm finds an $x$ such that $F(x)=1$ with constant probability with $O\left(\sqrt{\frac{N}{t}}\right)$ queries to F. Moreover, this algorithm is optimal.
\end{theorem}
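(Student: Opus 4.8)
\emph{Proof proposal.} The plan is to establish the two matching halves of the statement separately: the $O(\sqrt{N/t})$ upper bound (Grover's algorithm, together with the Boyer--Brassard--H\o{}yer--Tapp exponential search when $t$ is unknown), and the $\Omega(\sqrt{N/t})$ lower bound (the hybrid argument). For the upper bound, assume first $N=2^n$ (otherwise pad $\mathcal{X}$, which only affects constants) and that $t\geq1$ is known. Let $A = H^{\otimes n}$, so $A\ket{0} = \ket{u} := N^{-1/2}\sum_x \ket{x}$, and write $\ket{u} = \cos\theta\,\ket{u_0} + \sin\theta\,\ket{u_1}$, where $\ket{u_1}$, $\ket{u_0}$ are the normalised uniform superpositions over the marked, resp.\ unmarked, inputs and $\sin\theta = \sqrt{t/N}$. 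The Grover iterate $\mathsf{G} = -A S_0 A^{\dagger} S_F$, with $S_F$ the phase flip on marked inputs (one query to $F$) and $S_0$ the phase flip on $\ket{0}$, restricts to $\mathrm{span}\{\ket{u_0},\ket{u_1}\}$ as a rotation by angle $2\theta$; hence after $m$ iterations the amplitude on $\ket{u_1}$ is $\sin\bigl((2m+1)\theta\bigr)$. Taking $m = \lfloor \pi/(4\theta)\rfloor = O(\sqrt{N/t})$ makes $(2m+1)\theta$ fall within $\theta$ of $\pi/2$, so a computational-basis measurement returns a marked $x$ with constant probability (for $t\leq N/2$; the case $t>N/2$ is trivial), using $m+1 = O(\sqrt{N/t})$ queries. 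When $t$ is unknown, run the standard exponential search \cite{BBHT}: for $M = 1,2,4,\dots$, apply $\mathsf{G}$ a number of times chosen uniformly in $\{0,\dots,M-1\}$, measure, and verify with one query; once $M = \Theta(\sqrt{N/t})$ a round succeeds with constant probability, so the expected query count telescopes to $O(\sqrt{N/t})$.

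For the lower bound, let $\mathcal{A}$ be any algorithm making $q$ queries that, on every oracle with exactly $t$ marked inputs, outputs a marked input with probability $\geq 2/3$; I may assume $t\leq N/6$, as otherwise $\sqrt{N/t}=O(1)$ and there is nothing to prove. Run $\mathcal{A}$ on the all-zero oracle $F_0$, write $\ket{\psi_j}$ for the state after $j$ queries, let $P_x$ project onto the basis states whose query register is $x$, and set $m_x = \sum_{j=0}^{q-1}|P_x\ket{\psi_j}|^2$, so $\sum_x m_x \leq q$. Picking $S\subseteq\mathcal{X}$ of size $t$ uniformly at random, $\mathbb{E}_S[\sum_{x\in S}m_x] \leq tq/N$ and $\mathbb{E}_S[\Pr(\mathcal{A}^{F_0}\text{ outputs in }S)] \leq t/N$, so by Markov there is a fixed $S$ with $\sum_{x\in S}m_x \leq 2tq/N$ and $\Pr(\mathcal{A}^{F_0}\text{ outputs in }S) \leq 2t/N \leq 1/3$. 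A telescoping bound over the $q$ query steps (the non-query unitaries are oracle-independent, and the query unitary changes the state only on branches querying $S$) gives
$$
\bigl|\,\ket{\psi_q^{F_0}} - \ket{\psi_q^{F_S}}\,\bigr| \;\leq\; 2\sum_{j=0}^{q-1}\bigl|P_S\ket{\psi_j}\bigr| \;\leq\; 2\Bigl(q\sum_{j=0}^{q-1}\bigl|P_S\ket{\psi_j}\bigr|^2\Bigr)^{1/2} = 2\Bigl(q\textstyle\sum_{x\in S}m_x\Bigr)^{1/2} \leq 2q\sqrt{t/N},
$$
using Cauchy--Schwarz and $|P_S\ket{\psi_j}|^2 = \sum_{x\in S}|P_x\ket{\psi_j}|^2$. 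Since $\mathcal{A}^{F_S}$ outputs an element of $S$ (hence a marked input) with probability $\geq 2/3$ while $\mathcal{A}^{F_0}$ does so with probability $\leq 1/3$, the two output distributions have total variation distance $\geq 1/3$, which is at most the Euclidean state distance bounded above; hence $2q\sqrt{t/N}\geq 1/3$, i.e.\ $q = \Omega(\sqrt{N/t})$.

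I expect the upper bound to be routine once one verifies that $\mathsf{G}$ acts as a planar rotation and carries out the exponential-search bookkeeping. The main obstacle is the lower bound, and specifically the telescoping inequality $|\ket{\psi_q^{F_0}} - \ket{\psi_q^{F_S}}| \leq 2\sum_j |P_S\ket{\psi_j}|$: it requires carefully tracking how a single differing query step perturbs the state and summing these perturbations, and then pairing it with the two-sided averaging over $S$ needed to simultaneously control the ``query mass'' on $S$ and the probability that $\mathcal{A}^{F_0}$ accidentally outputs an element of $S$ when no input is marked. In the write-up I would cite \cite{Grover,BBHT} for the algorithm and the standard hybrid-argument optimality, reproducing only the sketch above.
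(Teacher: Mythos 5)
The paper does not prove this statement at all: Theorem~\ref{theorem:grover} is imported as a known result, cited to \cite{Grover} and \cite{BBHT} and stated in the appendix purely as background for the query counts in the algorithmic sections. So there is no proof in the paper to compare yours against. Your sketch is the standard, correct argument: the rotation analysis of the Grover iterate plus the BBHT exponential search for unknown $t$ gives the $O(\sqrt{N/t})$ upper bound, and the BBBV-style hybrid argument gives the matching $\Omega(\sqrt{N/t})$ lower bound. One small constant needs adjusting: applying Markov twice with threshold $2\mathbb{E}[\cdot]$ gives each bad event probability at most $1/2$, so the union bound only yields $\leq 1$ and does not by itself guarantee a good $S$ exists; use thresholds like $3\mathbb{E}[\cdot]$ (or note the events' probabilities are strictly controlled) so the union bound is strictly below $1$. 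This is cosmetic and does not affect the asymptotic conclusion. Given that the paper treats this as a citation, the appropriate write-up is exactly what you propose in your last sentence: cite \cite{Grover,BBHT} and omit the proof.
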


\begin{rem} When constructing quantum algorithms in the \emph{Quantum Random Oracle Model}, we are given a black box access to a function $H: \mathcal{X} \rightarrow \mathcal{Y}$.
To use Grover's algorithm in this model, we need to construct the function $F: \mathcal{X} \rightarrow \{0,1\}$ from the function $H$.
Then, to count the number of queries to $H$, it is sufficient to compute the number of queries to $F$.
\end{rem}

\subsection{The Quantum Fourier Transform}

Let $Y=\{0,1\}^n$, for some $n \in \mathbb{N}$.
We recall that the computational basis is $\{\ket{y}\}_{y \in Y}$.
The \textbf{Quantum Fourier Transform} is a unitary that, given an input state $\ket{\phi} = \sum_{k=0}^{2^n-1}x_k \ket{k}$, outputs $\sum_{k=0}^{2^n}y_k \ket{k}$ where the $y_k$'s are computed with the following formula:
$$
y_k = \frac{1}{2^{n/2}}\sum_{\ell=0}^{2^n-1}x_\ell\omega_N^{k\ell}
$$

where $\omega_N = e^{2\pi i/2^n}$ thus $\omega_N^\ell$ is a $2^n$-th root of unity.

This unitary can be efficiently implemented, and we write it \emph{QFT}.

Applying the $QFT$ to the computational basis yields the \emph{Fourier basis} $\{\ket{\hat{y}}\}_{y \in Y}$.

\section{Technical proofs}

\subsection{Proof of Lemma \ref{lemma:j2col}}
\label{appendix:j2col}
\new{As previously mentioned, the proof closely follows the proof of Corollary 11 from \cite{EC:LiuZha19}.

We write  $P'_{\ell-col-h_1}$ the set of databases that contain \emph{at least} $\ell$ \emph{distinct} collisions on $h_1$.

Let $i \in \mathbb{N}$. Let $\ket{\stateibefore}$ be the state just before the $i^{th}$ query to $H = (h_1,h_2)$, namely
\begin{align*}
  \ket{\stateibefore} = \sum_{x,\hat{y},z,D} \alpha_{x,\hat{y},z,D} \ket{x,\hat{y},z} \otimes \ket{D},
\end{align*}
where $x$ is the query register, $y$ is the answer register, $z$ is the work register and $D$ is the database register.
Let $\ket{\stateiafter}$ be the state right after the  $i^{th}$ query to H, namely
\begin{align*}
  \ket{\stateiafter} = \sum_{\substack{x,\hat{y},z,D \\ \new{D(x) = \bot}}} \frac{1}{\sqrt{N^2}}\sum_{y'} \omega_N^{yy'}\alpha_{x,\hat{y},z,D} \ket{x,\hat{y},z} \otimes \ket{D\cup(x,y')} \new{+ \cO \sum_{\substack{x,\hat{y},z,D \\ D(x) \neq \bot}} \alpha_{x,\hat{y},z,D} \ket{x,\hat{y},z} \otimes \ket{D}}.
\end{align*}

From Lemma \ref{lemma:recursive}, we have that:

\begin{equation}
  \left|P'_{\ell-col-h_1}\ket{\stateiafter}\right| \leq \left|P'_{\ell-col-h_1}\ket{\stateibefore}\right| + \left|P'_{\ell-col-h_1} \cO (I - P'_{\ell-col-h_1})\ket{\stateibefore}\right|.
\end{equation}

Writing $\acolh{i}{\ell} = \left|P'_{\ell-col-h_1}\ket{\stateiafter}\right|$ and similarly to the proof of Lemma \ref{lemma:rec2rsc2}, using Lemma \ref{lemma:matrixCO} and Lemma \ref{lemma:commute} we obtain the following recursive inequality:

\begin{align*}
\acolh{i}{\ell} &\leq \acolh{i-1}{\ell} + 4 \frac{\sqrt{i-1}}{\sqrt{N}} \acolh{i-1}{\ell-1}\\
&\leq \sum_{j=0}^{i-1} 4 \frac{\sqrt{j}}{\sqrt{N}}  \acolh{j}{\ell-1} \\  
&\leq \sum_{j_1=0}^{i-1} 4 \frac{\sqrt{j_1}}{\sqrt{N}} \sum_{j_2=0}^{j_1-1} 4 \frac{\sqrt{j_2}}{\sqrt{N}}  \acolh{j_2}{\ell-2} \\   
&\vdots \\
&\leq \sum_{0 \leq j_{\ell} < j_{\ell-1} < \dots < j_1 < i} \prod_{k=1}^{\ell} 4\frac{\sqrt{j_k}}{\sqrt{N}} \\  
&\leq \frac{1}{\ell!} \sum_{0 \leq j_{\ell}, j_{\ell-1}, \dots, j_1 < i} \prod_{k=1}^{\ell} 4\frac{\sqrt{j_k}} {\sqrt{N}} \\
&= \frac{1}{\ell!} \left(\sum_{0 < j < i} 4 \frac{\sqrt{i-1}}{\sqrt{N}}\right)^j\\
&\leq \left(\frac{4e \cdot i^{3/2}}{j\sqrt{N}}\right)^j,
\end{align*}
where the computation follows from the proof of Lemma 11 in \cite{EC:LiuZha19}.
}
\subsection{Proof of \Cref{eq:projection-2rsc-app}}
\label{sec:projection-2rsc-app}
Writing $D_{y'} = D \cup (x,y')$,
\begin{align*}
  &\left| P_2 \sum_{y'} \frac{1}{\sqrt{N^2}}
   \sum_{\substack{x,\hat{y},z \\ D : \neg P_2 \\ D(x) = \bot}}
   \omega_N^{yy'}\alpha_{x,\hat{y},z,D}\ket{x,\hat{y},z,D_{y'}} \right|\\      
   \leq& \left| \sum_{\ell\geq0} \frac{\ell}{N}
   \sum_{b \in \{1,2\}}
   \sum_{\substack{x,\hat{y},z \\ D : \neg P_2 \\ \text{exactly $\ell$} \\ \text{collisions on $h_b$}}}
   \sum_{y'} \frac{1}{\sqrt{N^2}} \omega_N^{yy'}\alpha_{x,\hat{y},z,D}\ket{x,\hat{y},z,D_{y'}}\right.\\
      &+ \left.\frac{(i-1)^2}{N^2}\sum_{\substack{x,\hat{y},z \\ D : \neg P_2}}
      \sum_{y'} \frac{1}{\sqrt{N^2}} \omega_N^{yy'}\alpha_{x,\hat{y},z,D}\ket{x,\hat{y},z,D_{y'}}\right| \\
  \leq& \left|2 \cdot \sum_{\ell\geq0} \frac{\ell}{N}\sum_{\substack{x,\hat{y},z \\ D : \neg P_2 \\ \text{exactly $\ell$} \\ \text{collisions on $h_1$}}}
  \sum_{y'} \frac{1}{\sqrt{N^2}} \omega_N^{yy'}\alpha_{x,\hat{y},z,D}\ket{x,\hat{y},z,D_{y'}} \right| \\
      &+ \left|\frac{(i-1)^2}{N^2}\sum_{\substack{x,\hat{y},z \\ D : \neg P_2}}
      \sum_{y'} \frac{1}{\sqrt{N^2}} \omega_N^{yy'}\alpha_{x,\hat{y},z,D}\ket{x,\hat{y},z,D_{y'}}\right| \\
      \leq& \left(2\cdot\sum_{\ell\geq0} \frac{\ell}{N}\sum_{\substack{x,\hat{y},z \\ D : \neg P_2 \\ \text{exactly $\ell$} \\ \text{collisions on $h_1$}}} \left|\alpha_{x,\hat{y},z,D}\right|^2\right)^{1/2}
      + \left(\frac{(i-1)^2}{N^2}\sum_{\substack{x,\hat{y},z \\ D : \neg P_2}} \left|\alpha_{x,\hat{y},z,D}\right|^2\right)^{1/2}  \\
  \leq& \left(2 \cdot \sum_{\ell\geq0} \frac{\ell}{N} \left|P_{\ell-col-h_1}\ket{\stateibefore}\right|^2\right)^{1/2}
    + \frac{(i-1)}{N}, \\
\end{align*}
where in the second inequality, we used the symmetry of finding collisions on $h_1$ and collisions on $h_2$, and used the definition of $\left|P_{\ell-col-h_1}\ket{\stateibefore}\right|^2$ in the last inequality.

\subsection{Proof of Lemma~\ref{lemma:boundAi}}
\label{sec:boundAi}
\begin{proof}
  We have that
  \begin{align*}
    A_i &\leq \sum_{\ell:\mu(\ell)={\new{8}e\frac{\ell^{3/2}}{\sqrt{N}}}}\sqrt{2} \cdot \frac{\sqrt{\new{8}e\ell^{3/2}}}{N^{3/4}} + \sum_{\ell:\mu_3(\ell)=10N^{1/8}} \sqrt{2} \cdot \frac{\sqrt{10N^{1/8}}}{N^{1/2}} + \sum_{\ell=0}^{i-1}\new{4} \cdot \frac{\ell-1}{N}\\
        &\leq \sum_{\ell=1}^{i-1}\sqrt{2} \cdot \frac{\sqrt{\new{8}e\ell^{3/2}}}{N^{3/4}} + \sum_{\ell:\mu_3(\ell)=10N^{1/8}} \sqrt{2} \cdot \frac{\sqrt{10N^{1/8}}}{N^{1/2}} + \sum_{\ell=0}^{i-1}\new{4} \cdot \frac{\ell-1}{N}\\
        &\leq \new{4}\sqrt{e}\frac{i^{7/4}}{N^{3/4}} + \sqrt{2}\cdot\left(\frac{10}{\new{8}e}\right)^{2/3}\cdot N^{5/12} \cdot \frac{\sqrt{10N^{1/8}}}{N^{1/2}} + \new{4}\cdot\frac{i^2}{N} \\
        &\leq \new{4}\sqrt{e}\cdot\frac{i^{7/4}}{N^{3/4}} + \new{4}\cdot\frac{i^2}{N} + O\left(N^{-1/48}\right), \\
  \end{align*}
  where the third inequality comes from counting the number of $\ell$ such that $\mu_3(\ell)=10N^{1/8}$, which is equal to the number of $\ell$ such that $\new{8}e\frac{\ell^{3/2}}{\sqrt{N}} \leq 10N^{1/8}$.
\end{proof}

\subsection{Proof of \Cref{equation:bound2sc}}
\label{appendix:recursion}
Here, we give a proof of \Cref{equation:bound2sc}.
Starting from \Cref{equation:beforebound2sc}, we have that:

\begin{align*}
  \ktworsc{i}{k} &\leq \ktworsc{i-1}{k} + \sqrt{2}\left(\sqrt{\frac{\mu_3(i-1)}{N}} + \new{\sqrt{8}}\frac{i-1}{N}\right)\ktworsc{i-1}{k-1} + \sqrt{2}\cdot \acolh{i-1}{\mu_3(i-1)} \\
  &\vdots \\
          &\leq \sqrt{2}\sum_{\ell=0}^{i-1} \left(\left(\sqrt{\frac{\mu_3(\ell)}{N}} + \new{\sqrt{8}}\frac{\ell}{N}\right)\ktworsc{\ell}{k-1} + \acolh{\ell}{\mu_3(\ell)}\right) \\
          &\leq \sqrt{2}\sum_{\ell=0}^{i-1} \left(\left(\sqrt{\frac{\mu_3(\ell)}{N}} + \new{\sqrt{8}}\frac{\ell}{N}\right)\ktworsc{\ell}{k-1} + \left(\frac{1}{2}\right)^{10N^{1/8}} \right) \\
          &\leq \sum_{\ell=0}^{i-1} \sqrt{2}\left(\sqrt{\frac{\mu_3(\ell)}{N}} + \new{\sqrt{8}}\frac{\ell}{N}\right)\ktworsc{\ell}{k-1} +  \sqrt{2}\cdot 2^{-10N^{1/8}} \cdot N^{1/2} \\
          &\leq \sum_{\ell=0}^{i-1} \sqrt{2}\left(\sqrt{\frac{\mu_3(\ell)}{N}} + \new{\sqrt{8}}\frac{\ell}{N}\right)\ktworsc{\ell}{k-1} +  \sqrt{2}\cdot 2^{-9.5N^{1/8}},
\end{align*}
where the second inequality comes from the recursion on the first term $\ktworsc{i-1}{k}$, and using the fact that $\ktworsc{0}{k} = 0$.
For the third inequality, we used Lemma \ref{lemma:j2col} and the definition of $\mu_3$.
Expanding recursively inside the sum, we have:
\begin{align*}
  \ktworsc{i}{k} &\leq \sum_{\ell=0}^{i-1} \sqrt{2}\left(\sqrt{\frac{\mu_3(\ell)}{N}} + \new{\sqrt{8}}\frac{\ell}{N}\right)\ktworsc{\ell}{k-1} +  \sqrt{2}\cdot 2^{-9.5N^{1/8}} \\
          &\leq \sum_{\ell_1=0}^{i-1} \sqrt{2}\left(\sqrt{\frac{\mu_3(\ell_1)}{N}} + \new{\sqrt{8}}\frac{\ell_1}{N}\right)\Bigg(\sum_{\ell_2=0}^{\ell_1} \sqrt{2}\left(\sqrt{\frac{\mu_3(\ell_2)}{N}} + \new{\sqrt{8}}\frac{\ell_2}{N}\right)\ktworsc{\ell_2}{k-2} \\
          &\quad + \sqrt{2}\cdot  2^{-9.5N^{1/8}}\Bigg)
          +  \sqrt{2}\cdot 2^{-9.5N^{1/8}} \\
          &\vdots \\
          &\leq \sum_{0 \leq \ell_k < \ell_{k-1} <\dots<\ell_1<i} \prod_{j=1}^k\sqrt{2}\left(\sqrt{\frac{\mu_3(\ell_j)}{N}} + \new{\sqrt{8}}\frac{\ell_j}{N}\right) \\
          &\quad + \sqrt{2}\cdot 2^{-9.5N^{1/8}} \sum_{t=0}^{k-1} \sum_{0 \leq \ell_t < \ell_{t-1} <\dots<\ell_1<i} \prod_{j=1}^t\sqrt{2}\left(\sqrt{\frac{\mu_3(\ell_j)}{N}} + \new{\sqrt{8}}\frac{\ell_j}{N}\right) \\
  &\leq \frac{A_i^k}{k!} + \sqrt{2}\cdot 2^{9.5N^{1/8}}\sum_{t=0}^{k-1}\frac{A_i^t}{t!} \\
  &\leq \frac{A_i^k}{k!} + \sqrt{2}\cdot e^{A_i}2^{9.5N^{1/8}},
\end{align*}
where the third inequality comes from expanding recursively all of the terms $\ktworsc{\ell_t}{k-t}$, and using the fact that $\ktworsc{\ell}{0} = 1$.
The fourth inequality comes from the fact that:

\begin{align*}
  &\sum_{0 \leq \ell_k < \ell_{k-1} <\dots<\ell_1<i} \prod_{j=1}^k\sqrt{2}\left(\sqrt{\frac{\mu_3(\ell_j)}{N}} + \new{\sqrt{8}}\frac{\ell_j}{N}\right) \\
  &\leq  \frac{1}{k!} \sum_{0 \leq \ell_k, \ell_{k-1},\dots,\ell_1<i} \prod_{j=1}^k\sqrt{2}\left(\sqrt{\frac{\mu_3(\ell_j)}{N}} + \new{\sqrt{8}}\frac{\ell_j}{N}\right) \\
                                                                                                                      &= \frac{1}{k!} \prod_{j=1}^k \sum_{0 \leq \ell_j <i} \sqrt{2}\left(\sqrt{\frac{\mu_3(\ell_j)}{N}} + \new{\sqrt{8}}\frac{\ell_j}{N}\right) \\
                                                                                                                      &= \frac{1}{k!} \prod_{j=1}^k A_i \\
                                                                                                                      &= \frac{A_i^k}{k!}.
\end{align*}

\subsection{Proof of Lemma~\ref{lemma:Aisbound}}
\label{sec:Aisbound}
\begin{proof}
  We have that:
  \begin{align}
    \nonumber A_{i,s} &= \sum_{\ell=0}^{i-1} \left(\sqrt{(s-1) \cdot \frac{\mu_s(\ell)}{N}} + \new{4} \left(\frac{\ell}{N}\right)^{s/2} + \left(\sum_{r=2}^s \frac{\ell}{N^r}\right)^{1/2}\right)\\
            \label{eq:Ais}&= \sqrt{s-1} \sum_{\ell=0}^{i-1} \sqrt{\frac{\mu_s(\ell)}{N}} + \new{4}\sum_{\ell=0}^{i-1}\left(\frac{\ell}{N}\right)^{s/2} + \sum_{\ell=0}^{i-1}\left(\sum_{r=2}^s \frac{\ell}{N^r}\right)^{1/2}.
  \end{align}

  Notice that
  \begin{align}
    &\nonumber\sum_{\ell=0}^{i-1} \sqrt{\frac{\mu_s(\ell)}{N}} \\ &= \sum_{\ell:\mu_s(\ell) = \new{40} \cdot s^2 \cdot \Pi_{s-1} \cdot N^{1/2^s}} \sqrt{\frac{\new{40} \cdot s^2 \cdot \Pi_{s-1} \cdot N^{1/2^s}}{N}} \\
                                                     & \quad\quad\quad + \sum_{\ell:\mu_s(\ell)>\new{40} \cdot s^2 \cdot \Pi_{s-1} \cdot N^{1/2^s}} \sqrt{\frac{\mu_s(\ell)}{N}} \nonumber \\
                                                     \label{eq:musl}&\leq \sum_{\ell:\mu_s(\ell) = \new{40} \cdot s^2 \cdot \Pi_{s-1} \cdot N^{1/2^s}} \sqrt{\frac{\new{40} \cdot s^2 \cdot \Pi_{s-1} \cdot N^{1/2^s}}{N}} \\
                                                     & \quad\quad\quad+ \sum_{\ell=0}^{i-1} (\new{8}e)^{\frac{2^{s-2}-1}{2^{s-2}}}\frac{\ell^{(2^{s-1}-1)/2^{s-1}}}{N^{(2^{s-2}-1)/2^{s-1}}} \cdot N^{-1/2} \cdot \sqrt{\Pi_{s-1}},\nonumber
  \end{align}
  where we replaced $\mu_s(\ell)$ by its value, and the inequality comes from the fact that there cannot be more than $i$ values such that $\mu_s(\ell)>\new{40} \cdot s^2 \cdot \Pi_{s-1} \cdot N^{1/2^s}$.
  The second summation is at most $(\new{8}e)^{\frac{2^{s-2}-1}{2^{s-2}}}\frac{i^{(2^s-1)/2^{s-1}}}{N^{(2^{s-1}-1)/2^{s-1}}} \cdot \sqrt{\Pi_{s-1}}$.

  For the first summation of \Cref{eq:musl}, we need to count the values of $\ell$ such that $\mu_s(l) = \new{40}s^2 \cdot \Pi_{s-1}\cdot N^{1/2^s}$.
  By using the definition of $\mu_s(\ell)$, this quantity corresponds to the number of $\ell$ that satisfies:
  \begin{align*}
    &\Pi_{s-1} \cdot (\new{8}e)^{\frac{2^{s-2}-1}{2^{s-3}}}\frac{\ell^{(2^{s-1}-1)/2^{s-2}}}{N^{(2^{s-2}-1)/2^{s-2}}} \leq \new{40} \cdot s^2 \cdot \Pi_{s-1} \cdot N^{1/2^s} \\
    \Leftrightarrow& \ell \leq \left(\frac{\new{40}}{(\new{8}e)^{\frac{2^{s-2}-1}{2^{s-3}}}}\right)^{2^{s-2}/(2^{s-1}-1)}\cdot N^{\left(\frac{1}{2^s} + \frac{2^{s-2}-1}{2^{s-2}}\right) \frac{2^{s-2}}{2^{s-1}-1}}  \cdot s^{\frac{2^{s}}{2^{s-1}-1}}\\
    \Leftrightarrow& \ell \leq O\left(s^{\frac{2^{s}}{2^{s-1}-1}} \cdot N^{\left(\frac{1}{2^s} + \frac{2^{s-2}-1}{2^{s-2}}\right) \frac{2^{s-2}}{2^{s-1}-1}}\right).
  \end{align*}

  Thus the first summation of \Cref{eq:musl} is upper-bounded by:
  \begin{align*}
    &\sum_{\ell:\mu_s(\ell) = 10 \cdot \Pi_{s-1} \cdot N^{1/2^s}} \sqrt{\frac{10 \cdot s^2 \cdot \Pi_{s-1} \cdot N^{1/2^s}}{N}} =\\
    &\sqrt{\frac{10 \cdot s^2 \cdot \Pi_{s-1} \cdot N^{1/2^s}}{N}} \cdot O\left(s^{\frac{2^{s}}{2^{s-1}-1}} \cdot N^{\left(\frac{1}{2^s} + \frac{2^{s-2}-1}{2^{s-2}}\right) \frac{2^{s-2}}{2^{s-1}-1}}\right) \\
                                                                                                                     &\leq O\left(N^{-\frac{1}{2} + \frac{1}{2^{s+1}} + \frac{2^{s-3}}{4(2^{s-1}-1)}} \cdot s^4 \cdot \sqrt{\Pi_{s-1}} \right) \\
                                                                                                                     &\leq O\left(N^{\frac{-2^{2s-1}+2^s+2^{s-1}-1+2^{2s-4}}{2(2^{s}-2)}} \cdot s^4 \cdot \sqrt{\Pi_{s-1}} \right) \\
                                                                                                                     &\leq O\left(N^{-1/(2^s(2^s-2))} \cdot s^4 \cdot \sqrt{\Pi_{s-1}} \right) = O\left(N^{-1/(2^s(2^s-2))} \cdot s^4 \cdot \Pi_{s} \right),
  \end{align*}
  where for the first inequality we use that $\frac{2^{s}}{2^{s-1}-1} + 1 \leq 4$ for all $s\geq3$.

  Therefore, we have that:
  \begin{align}
    \sum_{\ell=0}^{i-1} \sqrt{\frac{\mu_s(\ell)}{N}} %
    \label{eq:Ais1}&\leq (2e)^{\frac{2^{s-2}-1}{2^{s-2}}}\frac{i^{(2^s-1)/2^{s-1}}}{N^{(2^{s-1}-1)/2^{s-1}}} \sqrt{\Pi_{s-1}} + O\left(N^{-1/(2^s(2^s-2))} \cdot s^4 \cdot \Pi_s \right).
  \end{align}

  For the second term of \Cref{eq:Ais}, we have:
  \begin{align}
    \nonumber\new{4}\sum_{\ell=0}^{i-1}\left(\frac{\ell}{N}\right)^{s/2} &\leq \new{4}\sum_{\ell=0}^{i-1}  \left(\frac{\ell}{N}\right) \\
    \label{eq:Ais2}&\leq \new{4}\sum_{\ell=0}^{i-1}  \left(\frac{\ell}{N}\right)^{(2^{s-1}-1)/2^{s-1}},
    \end{align}
  where we use that $s\geq3$ and $1\geq (2^{s-1}-1)/2^{s-1}$.
  And for the third term,
  \begin{align}
    \nonumber\sum_{\ell=0}^{i-1} \left(\sum_{r=2}^s \frac{\ell}{N^r}\right)^{1/2} &\leq \sum_{\ell=0}^{i-1} \left((s-1)\frac{\ell}{N^2}\right)^{1/2} \\
    \nonumber&\leq \sum_{\ell=0}^{i-1} \left(\sqrt{s-1} \frac{\ell}{N}\right)\\
    \label{eq:Ais3}&\leq \sum_{\ell=0}^{i-1} \left(\sqrt{s-1}  \left(\frac{\ell}{N}\right)^{(2^{s-1}-1)/2^{s-1}}\right),
  \end{align}
  where we used that $r\geq2$ for the second inequality, and that $1\geq (2^{s-1}-1)/2^{s-1}$ for the last inequality.
  Thus,
  using that \new{$2 \cdot (8e)^{\frac{2^{s-2}-1}{2^{s-2}}} \geq 6$} and combining \Cref{eq:Ais}, \Cref{eq:Ais1}, \Cref{eq:Ais2} and \Cref{eq:Ais3} yields that:
  \begin{align*}
  A_{i,s} &\leq \new{2} \cdot (\new{8}e)^{\frac{2^{s-2}-1}{2^{s-2}}}\frac{i^{(2^s-1)/2^{s-1}}}{N^{(2^{s-1}-1)/2^{s-1}}} \cdot \sqrt{s-1} \cdot \sqrt{\Pi_{s-1}} \\
  & \quad\quad\quad + O\left(N^{-1/(2^s(2^s-2))} \cdot s^4 \cdot  \Pi_s \right)\\
  &= (\new{8}e)^{\frac{2^{s-2} - 1}{2^{s-2}}} \frac{i^{(2^s-1)/2^{s-1}}}{N^{(2^{s-1}-1)/2^{s-1}}} \cdot \Pi_s + O\left( s^4 \cdot \Pi_s \cdot N^{-1/(2^s(2^s-2))}\right).
  \end{align*}
\end{proof}

\subsection{Proof of \Cref{eq:bound-giksrsc-app}}
\label{sec:eq:bound-giksrsc-app}

We have
\begin{align*}
  \srsc{i}{k} \leq& \left(\sum_{\ell=0}^{i-1}B_{\ell,s} \cdot \srsc{\ell}{k-1}\right) + s^{3/2} \cdot 2^{-9.5  \new{\cdot 4} \cdot (s+1)^2 \cdot \Pi_s \cdot N^{1/2^{s+1}}} \\
  \leq&
        \left(\sum_{\ell_1=0}^{i-1}B_{\ell_1,s}\left(\sum_{\ell_2=\ell_1}^{i-1}B_{\ell_2,s} \cdot \srsc{\ell_2}{k-1} + s^{3/2} \cdot 2^{-9.5  \new{\cdot 4} \cdot (s+1)^2 \cdot \Pi_s \cdot N^{1/2^{s+1}}} \right)\right)
        \\
        & \quad\quad\quad + s^{3/2} \cdot 2^{-9.5  \new{\cdot 4} \cdot (s+1)^2 \cdot \Pi_s \cdot N^{1/2^{s+1}}}. \\
\end{align*}
We get by induction
\begin{align*}
  \srsc{i}{k} \leq&
        \Bigg(\sum_{\ell_1=0}^{i-1}B_{\ell_1,s}\Bigg(\sum_{\ell_2=\ell_1}^{i-1}B_{\ell_2,s}\Bigg(\sum_{\ell_3=\ell_2}^{i-1}B_{\ell_3,s} \cdots \\
              & \quad+ s^{3/2} \cdot 2^{-9.5  \new{\cdot 4} \cdot (s+1)^2 \cdot \Pi_s \cdot N^{1/2^{s+1}}} \Bigg) 
                + s^{3/2} \cdot 2^{-9.5  \new{\cdot 4} \cdot (s+1)^2 \cdot \Pi_s \cdot N^{1/2^{s+1}}} \Bigg)\Bigg) \\
                & \quad\quad\quad\quad
                + s^{3/2} \cdot 2^{-9.5  \new{\cdot 4} \cdot (s+1)^2 \cdot \Pi_s \cdot N^{1/2^{s+1}}}.\\
\end{align*}
We thus obtain
\begin{align*}
  \srsc{i}{k} \leq&
     \left(\sum_{0 \leq \ell_k <\ell_{k-1} < \cdots < \ell_1 < i} \prod_{j=1}^k B_{\ell_j,s} \right) \\
     & \quad
     + s^{3/2} \cdot 2^{-9.5  \new{\cdot 4} \cdot (s+1)^2 \cdot \Pi_s \cdot N^{1/2^{s+1}}} \cdot \sum_{t=0}^{k-1} \sum_{0 \leq \ell_t < \ell_{t-1} < \cdots < \ell_1 < i} \prod_{j=1}^t B_{\ell_j,s},\\
\end{align*}
and finally
\begin{align*}
   \srsc{i}{k} \leq& \frac{A_{i,s+1}^k}{k!} + \sum_{\ell=0}^{k-1}\frac{A_{i,s+1}^\ell}{\ell!} \cdot s^{3/2} \cdot 2^{-9.5  \new{\cdot 4} \cdot (s+1)^2 \cdot \Pi_s \cdot N^{1/2^{s+1}}}  \\
  \leq& \frac{A_{i,s+1}^k}{k!} + s^{3/2} \cdot e^{A_{i,s+1}}\cdot 2^{-9.5  \new{\cdot 4} \cdot (s+1)^2 \cdot \Pi_s \cdot N^{1/2^{s+1}}}.
\end{align*}

\subsection{Proof of Lemma~\ref{lemma:algo1kscoptigeneral}}
\label{sec:algo1kscoptigeneral}
\begin{proof}
  Similarly to the proof of Lemma~\ref{lemma:algo1kscgeneral}, we can consider that there are $O\left(N^{k-j+1}\right)$ marked elements in the function $F_1$.
  Hence, using Theorem \ref{theorem:grover}, the second step of the algorithm is expected to make 
  $$O\left(t \cdot \sqrt{\frac{N^k}{N^{k-j+1} \cdot \binom{k}{j}}}\right) = O\left(\frac{t}{\sqrt{\binom{k}{j}}} \cdot N^{(j-1)/2}\right)$$
  quantum queries to the oracle.

  Similarly to the proof of Lemma \ref{lemma:algo1kscgeneral}, we can consider that there are $t \cdot N^{j-1} \cdot \frac{k!}{(j-1)!}$ marked elements in the function $F_2$.
  Hence, using Theorem \ref{theorem:grover}, the fourth step of the algorithm is expected to make 
  $$O\left(\sqrt{\frac{N^k}{t \cdot N^{j-1} \cdot \frac{k!}{(j-1)!}}}\right) = O\left(\frac{N^{(k-j+1)/2}}{\sqrt{t}} \cdot \sqrt{\frac{(j-1)!}{k!}}\right)$$
  quantum queries to the oracle.

  By picking $t=N^{(k-2j+2)/3}$ with $j \leq \frac{k+2}{2}$, the complexity of the algorithm is
  $$O\left(N^{(k-2j+2)/3} \cdot N^{(j-1)/2} \cdot \left(\sqrt{\frac{1}{\binom{k}{j}}} +
      \sqrt{\frac{j!}{k!}}\right)\right) = O\left(\frac{N^{(2k-j+1)/6}}{\sqrt{\binom{k}{j}}}\right)$$
      quantum queries to the oracle.
\end{proof}

\subsection{Proof of Lemma~\ref{lemma:rkscalgogeneral}}
\label{sec:rkscalgogeneral}

\begin{proof}
  We first prove the result when $k$ is divisible by $r+1$.
  The result holds for $r=1$ (using Algorithm \ref{algo:1ksc} and Lemma \ref{lemma:algo1kscgeneral}).

  Fix $r>2$, and assume the result holds for $r-1$.

  The first step of the algorithm is expected to make $O\left( t \cdot N^{k'/2r}\right)$ quantum queries to the oracle if $k'$ is divisible by $r$.

  Similarly to the proof of Lemma~\ref{lemma:algo1kscgeneral}, we can consider that there are $t \cdot N^{k'}$ marked elements in the function $F_1$.
  Hence, using Theorem \ref{theorem:grover}, the third step of the algorithm is expected to make $O\left(\sqrt{\frac{N^k}{t \cdot N^{k'}}}\right)$ quantum queries to the oracle.

  Picking $t=N^{(rk-rk'-k')/3r}$ gives a complexity $O\left(N^{(2rk + (1-2r)k')/6r}\right)$.

  By picking $k'=\frac{r}{r+1}k$, $k'$ is an integer since $k$ is divisible by $r+1$.
  Moreover, $k'$ is divisible by $r$ and the complexity becomes $O\left(N^{k/(2+2r)}\right)$.
  
  If $k$ is not divisible by $r+1$, then there is a $k'$ between $k$ and $k+r+1$ such that $k'$ is divisible by $r+1$.
  Then, we can use Algorithm \ref{algo:rksc} to find a $(r,k')$\---SC with the same functions $h_1,\dots,h_k$ and new random functions $h_{k+1}, \dots, h_{k'}$.
  This gives us a $(r,k)$\---SC for the functions $h_1,\dots,h_k$, and the quantum query complexity is
  $$
  O\left(N^{k'/(2+2r)}\right) \leq O\left(N^{(k+r+1)/(2+2r)}\right).
  $$
\end{proof}

\end{document}